\documentclass[11pt,a4paper]{article}

\usepackage{fontspec}
\usepackage[titletoc,title,header]{appendix}

\usepackage{amsmath}
\usepackage{amssymb}
\usepackage{amsthm}
\usepackage{mathtools}
\usepackage{xparse}    

\usepackage[margin=1in]{geometry}

\usepackage{graphicx}
\usepackage{xcolor}

\usepackage{hyperref}
\hypersetup{
  colorlinks=true,
  linkcolor=blue,
  citecolor=blue,
  urlcolor=blue,
  pdfauthor={Yoshitsugu Sekine},
  pdftitle={Gaussian \(\beta\)-Markov Path Spaces without a Spectral Gap and the Araki--Woods Representations}
}

\usepackage[numbers]{natbib}
\theoremstyle{plain}
\newtheorem{thm}{Theorem}[section]
\AtEndEnvironment{thm}{\qed}
\newtheorem{lem}[thm]{Lemma}
\AtEndEnvironment{lem}{\qed}
\newtheorem{prop}[thm]{Proposition}
\AtEndEnvironment{prop}{\qed}
\newtheorem{cor}[thm]{Corollary}
\AtEndEnvironment{cor}{\qed}

\AtEndEnvironment{conj}{\qed}

\AtEndEnvironment{fact}{\qed}

\theoremstyle{definition}
\newtheorem{defn}[thm]{Definition}
\AtEndEnvironment{defn}{\qed}
\newtheorem{ex}[thm]{Example}
\AtEndEnvironment{ex}{\qed}

\theoremstyle{remark}
\newtheorem{rem}[thm]{Remark}
\AtEndEnvironment{rem}{\qed}

\makeatletter
\providecommand*{\dashv}{\mathrel{\mathpalette\@Dashv\vDash}}
\newcommand*{\@dashv}[2]{\reflectbox{$\m@th#1#2$}}
\makeatother
\newcommand{\abs}[1]{\left| #1 \right|} 
\NewDocumentCommand{\weaknorm}{O{\dbk} m}{#1{#2}} 
\newcommand{\norm}[1]{\left\Vert #1 \right\Vert} 

\newcommand{\bkt}[2]{\left\langle #1,\,#2 \right\rangle} 
\newcommand{\rbkt}[2]{\left( #1,\,#2 \right)} 
\newcommand{\isomto}{\mathrel{\rightarrowtail\kern-1.9ex\twoheadrightarrow}} 
\newcommand{\cbk}[1]{\left\{ #1 \right\}} 
\newcommand{\dbk}[1]{\left\langle #1 \right\rangle} 
\newcommand{\pairbk}[1]{\rbk{#1}} 
\newcommand{\rbk}[1]{\left( #1 \right)} 
\newcommand{\sqbk}[1]{\left[ #1 \right]} 
\newcommand{\fun}[2]{#1 \rbk{#2}} 
\newcommand{\sqfun}[2]{#1 \sqbk{#2}} 
\newcommand{\closedinterval}[2]{\sqbk{#1,\,#2}} 
\newcommand{\leftopeninterval}[2]{\left(#1, \, #2 \right]} 
\newcommand{\openinterval}[2]{\rbk{#1,\,#2}} 

\NewDocumentCommand{\imunit}{O{\mathsf{i}}}{#1} 
\NewDocumentCommand{\placeholder}{O{\bullet}}{#1} 
\NewDocumentCommand{\trace}{O{\operatorname{Tr}}}{#1} 
\newcommand{\cmpconj}[1]{\overline{#1}} 
\newcommand{\dom}{\operatorname{dom}} 
\newcommand{\eqcsq}[1]{\sqbk{#1}} 
\newcommand{\fml}[2]{\cbk{#1}_{#2}} 
\newcommand{\inv}[1]{#1^{-1}} 
\newcommand{\kroneckerdelta}{\delta} 
\newcommand{\laplacian}{\Delta} 
\newcommand{\napiernum}{\mathsf{e}} 
\newcommand{\opchern}[1]{\operatorname{ch}} 
\newcommand{\opimag}{\operatorname{Im}} 
\newcommand{\opreal}{\operatorname{Re}} 
\newcommand{\pd}[2]{\frac{\partial #1}{\partial #2}} 
\newcommand{\setSymbolDownLeft}[2]{{\vphantom{#2}}_{#1}{#2}} 
\newcommand{\setSymbolUpLeft}[2]{{\vphantom{#2}}^{#1}{#2}} 
\NewDocumentCommand{\agvariety}{O{\mathcal}}{#1} 
\NewDocumentCommand{\cmdrel}{O{\omega}}{#1} 
\NewDocumentCommand{\dfsp}{O{A}}{#1} 
\NewDocumentCommand{\eqcpointed}{O{\eqcsq} m}{#1{#2}_{\ast}} 
\NewDocumentCommand{\fnheaviside}{O{H}}{#1} 
\NewDocumentCommand{\fthol}{O{\mathcal{O}}}{#1} 
\NewDocumentCommand{\ftmero}{O{\mathcal{M}}}{#1} 
\NewDocumentCommand{\grcentralizer}{O{Z}}{#1} 
\NewDocumentCommand{\grmetform}{O{2} m m}{\grmet[#1] \! \rbkt{#2}{#3}} 
\NewDocumentCommand{\grmet}{O{2}}{\setSymbolDownLeft{#1}{g}} 
\NewDocumentCommand{\grnormalizer}{O{N}}{#1} 
\NewDocumentCommand{\gropasym}{O{A}}{#1} 
\NewDocumentCommand{\gropsym}{O{S}}{#1} 
\NewDocumentCommand{\grpermorderedpair}{O{\mathcal{P}}}{#1} 
\NewDocumentCommand{\grsym}{O{\mathfrak{S}} m}{#1_{#2}} 
\NewDocumentCommand{\gtbase}{O{\mathcal}}{#1} 
\NewDocumentCommand{\gtfilter}{O{\mathcal}}{#1} 
\NewDocumentCommand{\gtfmlclosed}{O{\mathcal}}{#1} 
\NewDocumentCommand{\gtfmlopen}{O{\mathcal}}{#1} 
\NewDocumentCommand{\gtopenball}{O{U}}{#1} 
\NewDocumentCommand{\gtopencover}{O{\mathcal}}{#1} 
\NewDocumentCommand{\gtopennbh}{O{\mathcal}}{#1} 
\NewDocumentCommand{\gtpreopencover}{O{\mathcal}}{#1} 
\NewDocumentCommand{\gtsubbase}{O{\mathcal}}{#1} 
\NewDocumentCommand{\gtvicinity}{O{\mathcal}}{#1} 
\NewDocumentCommand{\lasp}{O{\mathcal}}{#1} 
\NewDocumentCommand{\latprightrbk}{O{\top} m}{\rbk{#2}^{#1}} 
\NewDocumentCommand{\latpright}{O{\top} m}{#2^{#1}} 
\NewDocumentCommand{\latp}{O{t} m}{\setSymbolUpLeft{#1}{#2}} 
\NewDocumentCommand{\lpdistribution}{O{\mu} m}{#2_{\ast,#1}} 
\NewDocumentCommand{\lpmollifier}{O{\rho}}{#1} 
\NewDocumentCommand{\lpofpositive}{O{\chi}}{#1} 
\NewDocumentCommand{\manliederiv}{O{L}}{#1} 
\NewDocumentCommand{\mansmoothnbh}{O{\mathcal}}{#1} 
\NewDocumentCommand{\mblfmldsysgenerated}{O{d} m}{\fun{#1}{#2}} 
\NewDocumentCommand{\mblfmlgenerated}{O{\sigma} m}{\fun{#1}{#2}} 
\NewDocumentCommand{\oacorrfn}{O{\Gamma}}{#1} 
\NewDocumentCommand{\oagnsvector}{O{\Omega}}{#1} 
\NewDocumentCommand{\oaideal}{O{\mathcal}}{#1} 
\NewDocumentCommand{\oanumberoperator}{O{A}}{#1} 
\NewDocumentCommand{\oaposcone}{O{\mathcal{P}}}{#1} 
\NewDocumentCommand{\oapressure}{O{P}}{#1} 
\NewDocumentCommand{\oarepn}{O{\pi}}{#1} 
\NewDocumentCommand{\oaspkmsstate}{O{\txtkms}}{#1} 
\NewDocumentCommand{\oaspnormalstate}{O{N}}{#1} 
\NewDocumentCommand{\oasppurestate}{O{P}}{#1} 
\NewDocumentCommand{\oaspstate}{O{E}}{#1} 
\NewDocumentCommand{\oastatevector}{O{\Omega}}{#1} 
\NewDocumentCommand{\oastate}{O{\omega}}{#1} 
\NewDocumentCommand{\opdilation}{O{\delta}}{#1} 
\NewDocumentCommand{\opdmat}{O{\rho}}{#1} 
\NewDocumentCommand{\opfockan}{O{a}}{#1} 
\NewDocumentCommand{\opfockcran}{O{a}}{#1^{\#}} 
\NewDocumentCommand{\opfockcrdagger}{O{a}}{#1^{\dagger}} 
\NewDocumentCommand{\opfockcr}{O{a}}{#1^{\ast}} 
\NewDocumentCommand{\opfocknumber}{O{N}}{#1} 
\NewDocumentCommand{\opfocksegalconj}{O{\pi}}{#1} 
\NewDocumentCommand{\opfocksegal}{O{\phi}}{#1} 
\NewDocumentCommand{\opfockvac}{O{\Omega}}{#1} 
\NewDocumentCommand{\opspecmeas}{O{E}}{#1} 
\NewDocumentCommand{\opspec}{O{} m}{\fun{\sigma_{#1}}{#2}} 
\NewDocumentCommand{\optransl}{O{\tau}}{#1} 
\NewDocumentCommand{\opvarspec}{O{} m}{\operatorname{spec}_{#1} #2} 
\NewDocumentCommand{\physaction}{O{\mathcal{A}}}{#1} 
\NewDocumentCommand{\physcharge}{O{e}}{\mathrm{#1}} 
\NewDocumentCommand{\physcplconst}{O{\mathsf{g}}}{#1} 
\NewDocumentCommand{\physelectrostaticcapasity}{O{\mathrm{Cap}}}{#1} 
\NewDocumentCommand{\physenergy}{O{E}}{#1} 
\NewDocumentCommand{\physgse}{O{E}}{#1_{0}} 
\NewDocumentCommand{\physham}{O{H}}{#1} 
\NewDocumentCommand{\physlagdensity}{O{\mathcal{L}}}{#1} 
\NewDocumentCommand{\physlag}{O{L}}{#1} 
\NewDocumentCommand{\physliouvilean}{O{L}}{#1} 
\NewDocumentCommand{\physmass}{O{m}}{#1} 
\NewDocumentCommand{\prbbrownmv}{O{B}}{#1} 
\NewDocumentCommand{\prbcharfun}{O{\chi}}{#1} 
\NewDocumentCommand{\prbdist}{O{\mathcal{P}}}{#1} 
\NewDocumentCommand{\prbgaussianmeasure}{O{\msrcal{N}}}{#1} 
\NewDocumentCommand{\prbnormaldist}{O{N}}{#1} 
\NewDocumentCommand{\prbpoissonprocess}{O{N}}{#1} 
\NewDocumentCommand{\prbprocess}{O{X}}{#1} 
\NewDocumentCommand{\prbqspace}{O{\mathcal{Q}}}{#1} 
\NewDocumentCommand{\prbspsample}{O{\Omega}}{#1} 
\NewDocumentCommand{\prbspstate}{O{\mathsf{S}}}{#1} 
\NewDocumentCommand{\psh}{O{\mathfrak}}{#1} 
\NewDocumentCommand{\qtquantumchannel}{O{\mathcal{L}}}{#1} 
\NewDocumentCommand{\repn}{O{\pi}}{#1} 
\NewDocumentCommand{\schattencls}{O{\mathbb{K}}}{#1} 
\NewDocumentCommand{\setfmlcylinder}{O{\mathcal{C}}}{#1} 
\NewDocumentCommand{\setfml}{O{\mathcal}}{#1} 
\NewDocumentCommand{\setindex}{O{\mathcal} m}{#1{#2}} 
\NewDocumentCommand{\setlattice}{O{\Gamma}}{#1} 
\NewDocumentCommand{\setspecial}{O{\mathcal} m}{#1{#2}} 
\NewDocumentCommand{\shdiffform}{O{\sheaf{A}}}{#1} 
\NewDocumentCommand{\sheaf}{O{\mathfrak}}{#1} 
\NewDocumentCommand{\smchemicalpotential}{O{\mu}}{#1} 
\NewDocumentCommand{\smenergydensity}{O{\varrho}}{#1} 
\NewDocumentCommand{\smfluctuationwithdmat}{O{\beta} m}{\smuncertaintywithdmat[#1]{#2}^2} 
\NewDocumentCommand{\sminvtemperature}{O{\beta}}{#1} 
\NewDocumentCommand{\smlocaldensityoperator}{O{\rho}}{#1} 
\NewDocumentCommand{\smmicrocanonicalstate}{O{\beta} m}{\physmean{#2}_{#1}} 
\NewDocumentCommand{\smnumberdensity}{O{\rho}}{#1} 
\NewDocumentCommand{\smooth}{O{\mathcal{E}}}{#1} 
\NewDocumentCommand{\smparticlenumber}{O{N}}{#1} 
\NewDocumentCommand{\smpressure}{O{p}}{#1} 
\NewDocumentCommand{\smspecificfreeenergy}{O{\bar{f}}}{#1} 
\NewDocumentCommand{\smthermalvac}{O{\beta}}{\Omega_{#1}} 
\NewDocumentCommand{\smuncertaintywithdmat}{O{\beta} m}{\rbk{\triangle #2}_{#1}} 
\NewDocumentCommand{\sphilbfrak}{O{\mathfrak}}{#1} 
\NewDocumentCommand{\sphilb}{O{\mathcal}}{#1} 
\NewDocumentCommand{\splowerhalf}{O{\mathbb{H}}}{#1_{\txtneg}} 
\NewDocumentCommand{\spupperhalf}{O{\mathbb{H}}}{#1_{\txtnonneg}} 
\NewDocumentCommand{\topmetric}{O{d}}{#1} 
\NewDocumentCommand{\vaoutnormal}{O{\widehat}}{#1} 
\newcommand{\category}[1]{\mathop{\mathsf{#1}}} 

\newcommand{\catpresheaf}[1]{\category{PSh}} 
\newcommand{\faadj}[1]{#1^{\ast}} 
\newcommand{\fafouriertr}{\mathcal{F}} 
\newcommand{\fldcmp}{\fld{C}} 
\newcommand{\fldreal}{\fld{R}} 
\newcommand{\fld}[1]{\mathbb{#1}} 
\newcommand{\fndef}[1]{\boldsymbol{1}_{#1}} 
\newcommand{\fnexp}[1]{\fun{\exp}{#1}} 
\newcommand{\lpseq}{\ell} 
\newcommand{\lp}{L} 
\newcommand{\mblfmlfrak}[1]{\mathfrak{#1}} 
\newcommand{\monnat}{\mathbb{N}} 
\newcommand{\msrcal}[1]{\mathcal{#1}} 
\newcommand{\msrdirac}{\delta} 
\newcommand{\msrprb}{\mathrm{Pr}} 

\newcommand{\oaresolventalgebra}{\oa{R}} 
\newcommand{\oaresolvent}{R} 
\newcommand{\oa}[1]{\mathcal{#1}} 
\newcommand{\opdmsr}[1]{\mathop{d #1}} 
\newcommand{\opfocksndqntdiff}{d \Gamma} 
\newcommand{\opfocksndqnt}{\Gamma} 
\newcommand{\opfockweyl}{W} 
\newcommand{\opform}[1]{\mathsf{#1}} 
\newcommand{\opspecint}[1]{\mathcal{E}} 
\newcommand{\physmean}[1]{\dbk{#1}} 
\newcommand{\prbexp}{\mathbb{E}} 
\newcommand{\pushoutrbk}[1]{\rbk{#1}_{\ast}} 
\newcommand{\ringratint}{\mathbb{Z}} 
\newcommand{\semigrposint}{\monnat_1} 

\newcommand{\seq}[2]{\if\relax\detokenize{#1}\relax \rbk{#1} \else \rbk{#1}_{#2} \fi} 
\newcommand{\setisomorphism}[1]{\operatorname{Iso}} 
\newcommand{\setone}[1]{\cbk{#1}}
\newcommand{\set}[2]{\left\{#1 \, \middle| \, #2\right\}}
\newcommand{\spfockgamma}{\Gamma} 
\newcommand{\txtarakiwoods}{\mathrm{AW},\mathrm{b}} 
\newcommand{\txtbsn}{\mathrm{b}} 
\newcommand{\txteuclid}{\mathrm{E}} 
\newcommand{\txthilbertschmidt}{\mathrm{HS}} 
\newcommand{\txtkms}{\mathrm{KMS}} 
\newcommand{\txtleft}{\mathrm{l}} 
\newcommand{\txtneg}{\mathrm{-}} 
\newcommand{\txtnonneg}{\mathrm{+}} 
\newcommand{\txtright}{\mathrm{r}} 

\title{Gaussian \(\beta\)-Markov Path Spaces without a Spectral Gap and the Araki--Woods Representations}

\author{%
Yoshitsugu Sekine\\{\small\texttt{4429sekine@gmail.com}}%
}

\date{\today}

\begin{document}

\maketitle

\begin{abstract}
Dereziński and Gérard constructed the Gaussian \(\beta\)-Markov path space of a free Bose field in a form equivalent to the Araki--Woods representation, which suits nonequilibrium statistical mechanics with a quantum-field environment, but only for one-particle Hamiltonians with a spectral gap. We extend this construction, together with the equivalence to the Araki--Woods representation, to one-particle Hamiltonians without a spectral gap. For one-particle Hamiltonians joined by isometric embeddings that intertwine their spectral measures and test spaces, as for nested momentum lattices, and for compatible regularizers, the path spaces form a projective system whose limit carries the time translations and the reflection, which gives a quasi-local construction. The extension is intended for constructive quantum field theory at positive temperature and equilibrium statistical mechanics, for instance in the problem of the absence of Bose--Einstein condensation of quasiparticles whose number is not conserved.

\noindent\textbf{Keywords:} \(\beta\)-Markov path space, Gaussian measure, KMS state, Araki--Woods representation, resolvent algebra, infrared singularity, Bose--Einstein condensation
\end{abstract}

\setcounter{tocdepth}{2}
\tableofcontents

\section{Introduction}\label{introduction}

Klein and Landau \cite{KleinLandau001} established the correspondence between stochastically positive KMS systems and periodic stochastic processes \cite[Theorem 6.1, p.~385]{KleinLandau001}, related the two-sided Markov property for semicircles to a cyclicity property of the KMS state \cite[Theorems 11.2 and 11.4]{KleinLandau001}, and identified the modular conjugation with a reflection of the process \cite[Section 12]{KleinLandau001}. These results hold for general stochastically positive KMS systems, and their generality makes them important. Dereziński and Jakšić proved return to equilibrium for a large class of Pauli--Fierz systems at small coupling \cite{DerezinskiJaksic002}, a problem of nonequilibrium quantum statistical mechanics in which the environment of a small quantum system is a free Bose field and the interaction is a perturbation linear in the field \cite[Section 3.1]{DerezinskiJaksic002}. The textbook \cite{DerezinskiGerard001} of Dereziński and Gérard develops the framework of Klein--Landau for quantum fields at positive temperature in a form suited to such problems of nonequilibrium statistical mechanics whose environment is a quantum field. In particular, it constructs the Gaussian \(\sminvtemperature\)-Markov path space of a free Bose field in a form equivalent to the Araki--Woods representation, which is the representation on which these systems are analyzed: a unitary map from its physical Hilbert space onto the Araki--Woods Fock space intertwines the time-zero fields with the Araki--Woods Weyl operators \cite[Theorem 21.58, p.~635]{DerezinskiGerard001}. This equivalence connects the stochastic process with the operator-algebraic description of the environment, and it is this construction that the present paper extends. The Araki--Woods representations are treated in \cite[Section 17.1.5, pp.~437--440]{DerezinskiGerard001}, and their use in \cite{DerezinskiJaksic002} is recorded in \cite[Section 17.4, p.~473]{DerezinskiGerard001}. In the formulation of \cite[Section 21.4]{DerezinskiGerard001}, for an inverse temperature \(\sminvtemperature>0\), a \(\sminvtemperature\)-Markov path space is a probability space with a time-zero \(\sigma\)-algebra, a \(\sminvtemperature\)-periodic group of measure-preserving automorphisms, and a reflection, which satisfy the \(\sminvtemperature\)-reflection property and the Markov property for the two half circles; the reconstruction theorem \cite[Theorems 21.43 and 21.47, pp.~625--626]{DerezinskiGerard001} gives a physical Hilbert space, a Liouvillean, a modular conjugation, and a \(\sminvtemperature\)-KMS state. For a one-particle Hamiltonian \(\physham[h]\), the Gaussian example has the covariance operator \(\rbk{-\partial_t^2+\physham[h]^2}^{-1}\) with \(\sminvtemperature\)-periodic boundary conditions in the imaginary time \(t\); we use the Segal normalization, in which the field has covariance \(\frac{1}{2}\bkt{\cdot}{\rbk{-\partial_t^2+\physham[h]^2}^{-1}\cdot}\), and its physical Hilbert space carries the Araki--Woods representation \cite{ArakiWoods001} of the free Bose field at the Planck density \cite[Sections 21.4.5 and 21.4.6, Theorem 21.58, pp.~633--636]{DerezinskiGerard001}. The treatment of positive-temperature path spaces there follows Klein--Landau and Gérard--Jäkel \cite[Section 21.6, p.~640]{DerezinskiGerard001}, \cite{GerardJakel001}. Although its stated hypothesis admits them, the construction there does not cover singular one-particle Hamiltonians, that is, nonnegative injective ones with \(\inf\opvarspec{\physham[h]}=0\), where \(\opvarspec{\cdot}\) denotes the spectrum, as explained in the next paragraph. For several problems, including return to equilibrium, this restriction is harmless. In \cite{DerezinskiJaksic002}, the one-particle Hamiltonian of the reservoir, the operator \(\rbk{-\laplacian}^{\frac{1}{2}}\) on \(\fun{\lp^2}{\fldreal^d}\), has no spectral gap, but the analysis does not use a path space: it is carried out on the Araki--Woods representation of the reservoir and the Liouvillean of the coupled system, and the field enters only through a form factor that satisfies an infrared condition \cite[opening paragraph of Section 3, Sections 3.2--3.5, Assumption 3.B, and Theorem 3.4]{DerezinskiJaksic002}. The path-space formulation is also useful in constructive quantum field theory at positive temperature and in equilibrium statistical mechanics, for instance in discussing the problem, well known in theoretical physics, that quasiparticles whose number is not conserved do not undergo Bose--Einstein condensation in equilibrium \cite[Section 4.2]{VIYukalov001}. The quasiparticles there have the one-particle Hamiltonian \(A\rbk{-\laplacian}^{\frac{n}{2}}\), \(A,n>0\), and their density is the Bose integral without a chemical potential \cite[(21) and (32)]{VIYukalov001}, so their one-particle Hamiltonian has no spectral gap. In contexts where infrared singularities or the occurrence or absence of Bose--Einstein condensation are discussed, the restriction is unsatisfactory, since these are governed by the behavior of \(\physham[h]\) near zero, for instance through the zero Matsubara mode \(\physham[h]^{-2}\) of the covariance. The aim of this paper is to remove this restriction, and the extension is formulated for use in these discussions.

This paper treats one-particle Hamiltonians \(\physham[h]\) that are nonnegative and injective, without the assumption \(\inf\opvarspec{\physham[h]}>0\). The basic example is the operator \(\rbk{-\laplacian}^{\frac{s}{2}}\), \(s>0\), on \(\fun{\lp^2}{\fldreal^d}\); for \(s=2\) it is the one-particle Hamiltonian \(-\laplacian\) of the free nonrelativistic Bose gas at zero chemical potential. In this case the zero Matsubara mode \(\physham[h]^{-2}\) of the covariance is unbounded, so the covariance is an unbounded operator on the spacetime \(\lp^2\) space (Lemma \ref{lem:covariance-fourier}). The time-zero fields are then defined only for one-particle vectors in a completion of \(\dom\physham[h]^{-1}\) (Definition \ref{def:sharp-time-vectors}), and the time-zero algebras have to be generated by a test space that is compatible with the boundary-value problem on the half circles. In \cite[Section 21.4.5, p.~633]{DerezinskiGerard001} the one-particle Hamiltonian is explicitly assumed to be a self-adjoint operator \(\physham[h]>0\) (denoted \(\epsilon\) there) on a real Hilbert space, and by \cite[Definition 2.58, p.~47]{DerezinskiGerard001} the notation \(B>0\) means that \(B\) is a self-adjoint operator with \(B\geq0\) and \(0\) is not an eigenvalue of \(B\). Thus the stated hypothesis is that \(\physham[h]\) is nonnegative and injective, and it does not require \(\inf\opvarspec{\physham[h]}>0\). The Gaussian \(\lp^2\) space \cite[(21.54), p.~634]{DerezinskiGerard001} of the construction, however, is the Gaussian \(\lp^2\) space over the spacetime \(\lp^2\) space with covariance \(\rbk{-\partial_t^2+\physham[h]^2}^{-1}\) in the sense of \cite[Definition 5.72, pp.~133--134]{DerezinskiGerard001}, which is defined by the weak characteristic functional \(f\mapsto\fnexp{-\frac{1}{2}\bkt{f}{\rbk{-\partial_t^2+\physham[h]^2}^{-1}f}}\) on the whole spacetime \(\lp^2\) space, used as the pivot space \cite[(5.24), p.~133]{DerezinskiGerard001}. This functional is defined on the whole space only if the square root of the covariance operator is everywhere defined. By \eqref{eq:covariance-square-root}, this means that \(\dom\physham[h]^{-1}\) is the whole one-particle space, so that \(\physham[h]^{-1}\) is bounded by the closed graph theorem, that is, \(\inf\opvarspec{\physham[h]}>0\). For the same reason, the smooth compactly supported \(\dom\physham[h]\)-valued functions of \cite[Proposition 21.54(2), p.~634]{DerezinskiGerard001} need not belong to the index space of the field in \cite[Section 21.4.5, p.~633]{DerezinskiGerard001}, the completion of \(\dom\rbk{-\partial_t^2+\physham[h]^2}^{-\frac{1}{2}}\) in the norm \(\norm{\rbk{-\partial_t^2+\physham[h]^2}^{-\frac{1}{2}}\cdot}\), when \(\inf\opvarspec{\physham[h]}=0\), because their time averages need not lie in \(\dom\physham[h]^{-1}\); for the basic model of Example \ref{ex:basic-model}, the vector with Fourier transform \(k\mapsto\fun{\fndef{\setone{\abs{k}\leq1}}}{k}\abs{k}^{s-\frac{d}{2}}\) lies in \(\dom\physham[h]\) but not in \(\dom\physham[h]^{-1}\). Moreover, the construction in \cite[Sections 21.4.5 and 21.4.6]{DerezinskiGerard001} uses an unspecified concrete realization of the Gaussian \(\lp^2\) space and refers its proofs to the zero-temperature arguments of \cite[Section 21.2]{DerezinskiGerard001}. Following \cite[Section 1.6.2, pp.~95--97]{LorincziHiroshimaBetz003}, we therefore index the field by the completion of \(\dom\rbk{-\partial_t^2+\physham[h]^2}^{-\frac{1}{2}}\) used in \cite[Section 21.4.5]{DerezinskiGerard001}, realized as the completion of the domain of the covariance operator in the covariance inner product (Definition \ref{def:covariance-hilbert-space}), and realize it on the negative space of a Hilbert scale with this completion as the pivot space, so that neither the spacetime \(\lp^2\) space nor the unbounded covariance operator enters the definition of the Gaussian measure. The present paper gives complete arguments when the spectrum of \(\physham[h]\) accumulates at zero.

Theorem \ref{thm:singular-gaussian-beta-markov} constructs the path space as a Borel Gaussian measure on a separable Hilbert space defined by a regularizer that commutes with the circle translations and the reflection, so that the path-space automorphisms are compositions with orthogonal maps. Its Markov property follows from the explicit half-circle projection formula of Proposition \ref{prop:half-circle-projections}. This formula involves only the bounded operators \(\fun{\sinh}{\rbk{\frac{\sminvtemperature}{2}+\sigma}\physham[h]}/\fun{\sinh}{\frac{\sminvtemperature\physham[h]}{2}}\) and \(\fun{\sinh}{-\sigma\physham[h]}/\fun{\sinh}{\frac{\sminvtemperature\physham[h]}{2}}\), \(-\frac{\sminvtemperature}{2}\leq\sigma\leq0\), and it requires the test space to be invariant under them (Definition \ref{def:test-space}).

Theorem \ref{thm:araki-woods-equivalence} gives the unitary equivalence of the physical Hilbert space with the Araki--Woods Fock space. The unitary intertwines the time-zero and time-\(\frac{\sminvtemperature}{2}\) Weyl operators with the left and right Araki--Woods Weyl operators, the Euclidean resolvents with the left and right representations of the resolvent algebra \cite{BuchholzGrundling002}, and the Liouvilleans and the modular conjugations. Proposition \ref{prop:imaginary-time-correlations} gives the imaginary-time correlation identities, and Corollary \ref{cor:planck-domain-extension} extends both results to the domain of the square root of the Planck density. Since Theorem \ref{thm:singular-gaussian-beta-markov} verifies the axioms of a \(\sminvtemperature\)-Markov path space, the perturbation theory of \cite[Section 21.5, pp.~636--639]{DerezinskiGerard001} applies without change. A real potential \(V\) measurable with respect to the time-zero \(\sigma\)-algebra such that \(V\) and \(\napiernum^{-\sminvtemperature V}\) are integrable gives a perturbed \(\sminvtemperature\)-Markov path space \cite[Proposition 21.60, p.~637]{DerezinskiGerard001}, and under the additional integrability conditions there, the perturbed Liouvilleans are expressed through the free Liouvillean \cite[Theorems 21.64--21.66, p.~639]{DerezinskiGerard001}, which Theorem \ref{thm:araki-woods-equivalence} carries to the Araki--Woods Liouvillean of Definition \ref{def:araki-woods}.

Theorem \ref{thm:quasi-local} gives the quasi-local construction for a sequence of one-particle Hamiltonians joined by isometric embeddings that intertwine their spectral measures and preserve the test spaces, such as the nested momentum lattices of the finite-volume free Bose gas (Example \ref{ex:momentum-lattices}). With compatible regularizers, the adjoints of the induced isometries of the covariance Hilbert spaces extend to continuous maps between the carriers, under which the Gaussian measures, the sharp-time fields, the \(\sigma\)-algebras, the time translations, and the reflection are consistent, and the path spaces have a projective-limit measure, on which the local observables generate a quasi-local algebra.

Section \ref{sec:setting} fixes the one-particle Hamiltonian and the covariance. Section \ref{sec:sharp-time} constructs the sharp-time vectors and proves the half-circle projection formula. Section \ref{sec:markov} constructs the Gaussian measure and proves the \(\sminvtemperature\)-Markov property. Section \ref{sec:araki-woods} proves the Euclidean realization of the Araki--Woods representations. Section \ref{sec:quasi-local} gives the quasi-local construction.

\section{One-particle Hamiltonian and covariance}\label{sec:setting}

In this section, we fix the one-particle Hamiltonian and the imaginary-time circle, and we describe the covariance operator in the Matsubara representation.

Throughout, \(\monnat\) is the set of nonnegative integers, \(\semigrposint=\monnat\setminus\setone{0}\), and \(\ringratint\) is the set of integers. For a set \(E\), \(\fndef{E}\) is its indicator function; for a Hilbert space \(\sphilbfrak{k}\), \(\fun{\lpseq^2}{\ringratint;\sphilbfrak{k}}\) is the Hilbert space of square-summable \(\sphilbfrak{k}\)-valued sequences indexed by \(\ringratint\), and \(\norm{\cdot}_{\txthilbertschmidt}\) is the Hilbert--Schmidt norm. For a probability measure \(\msrprb\), \(\prbexp_{\msrprb}\) is the expectation with respect to \(\msrprb\), and for a measurable map \(\tau\), \(\pushoutrbk{\tau}\msrprb=\msrprb\circ\tau^{-1}\) is the image measure. The Fourier transform on \(\fldreal^d\) carries \(2\pi\) in the exponent: \(\fun{\rbk{\fafouriertr f}}{k}=\int_{\fldreal^d}\napiernum^{-2\pi\imunit k\cdot x}\fun{f}{x}\opdmsr{x}\) for \(f\in\fun{\lp^1}{\fldreal^d}\cap\fun{\lp^2}{\fldreal^d}\). It extends to a unitary operator \(\fafouriertr\) on \(\fun{\lp^2}{\fldreal^d}\), and \(\fun{\rbk{\fafouriertr\partial_jf}}{k}=2\pi\imunit k_j\fun{\rbk{\fafouriertr f}}{k}\) for Schwartz functions \(f\) \cite[Theorems 8.22(e), p.~249, and 8.29, p.~252]{GeraldFolland001}, so that \(\fafouriertr\) carries \(-\laplacian\) to the multiplication by \(4\pi^2\abs{k}^2\). The Fourier coefficients in the imaginary time, defined in \eqref{eq:fourier-coefficient}, also carry \(2\pi\) in the exponent through the Matsubara frequencies \(\nu_n=2\pi n/\sminvtemperature\) introduced below.

\subsection{The one-particle Hamiltonian}\label{the-one-particle-hamiltonian}

In this subsection, we state the standing assumption on the one-particle Hamiltonian and the basic example.

\begin{defn}[One-particle Hamiltonian]\label{def:one-particle-hamiltonian}
Let $\sphilbfrak{h}\neq\setone{0}$ be a separable complex Hilbert space, and let $\physham[h]$ be an injective self-adjoint
operator on $\sphilbfrak{h}$ with $\bkt{u}{\physham[h]u}_{\sphilbfrak{h}}\geq0$ for $u\in\dom\physham[h]$.
Let $\sphilbfrak{h}_{\fldreal}$ be the real Hilbert space with the underlying real vector space of $\sphilbfrak{h}$ and the inner
product $\bkt{f}{g}_{\sphilbfrak{h}_{\fldreal}}=\opreal\bkt{f}{g}_{\sphilbfrak{h}}$.
For a real Borel function $\varphi$ on $\openinterval{0}{\infty}$, $\fun{\varphi}{\physham[h]}$ denotes the operator of the
functional calculus on $\sphilbfrak{h}$ and also the same map regarded as a real linear operator on $\sphilbfrak{h}_{\fldreal}$.
\end{defn}

For a self-adjoint operator \(A\) on a real or complex Hilbert space, \(\msrprb_A\) denotes its spectral measure, the projection-valued measure on the Borel sets of \(\fldreal\) with \(\fun{\varphi}{A}=\int\varphi\,\opdmsr{\msrprb_A}\) for real Borel functions \(\varphi\), and \(\fun{\msrprb_A}{E}\) is the spectral projection of \(A\) for a Borel set \(E\); we use this notation without further comment. The spectral measure \(\msrprb_{\physham[h]}\) is carried by \(\openinterval{0}{\infty}\), because \(\physham[h]\) is injective and nonnegative, so \(\fun{\varphi}{\physham[h]}\) is well defined for Borel functions on \(\openinterval{0}{\infty}\). The real linear operator \(\fun{\varphi}{\physham[h]}\) on \(\sphilbfrak{h}_{\fldreal}\) is self-adjoint, and it is the functional calculus of the real operator \(\physham[h]\) on \(\sphilbfrak{h}_{\fldreal}\). The Gaussian field of Section \ref{sec:markov} is indexed by the real space \(\sphilbfrak{h}_{\fldreal}\), so that both real components of a complex field are random variables.

\begin{ex}[Basic model]\label{ex:basic-model}
For $d\in\semigrposint$ and $s>0$, let $\sphilbfrak{h}=\fun{\lp^2}{\fldreal^d;\fldcmp}$ with the Fourier transform $\fafouriertr$ fixed at the
beginning of this section.
The operator $\physham[h]=\rbk{-\laplacian}^{\frac{s}{2}}$, that is, $\fafouriertr\physham[h]f=\abs{2\pi k}^s\fafouriertr f$ on its maximal domain,
satisfies Definition \ref{def:one-particle-hamiltonian}, and $\inf\opvarspec{\physham[h]}=0$.
\end{ex}

\subsection{The time circle and the covariance}\label{the-time-circle-and-the-covariance}

In this subsection, we introduce the circle, the Matsubara frequencies, and the covariance operator, and we compute the covariance and its square root in the Fourier coefficients.

Fix \(\sminvtemperature>0\). Let \(S_{\sminvtemperature}=\fldreal/\sminvtemperature\ringratint\) with the Lebesgue measure, represented by \(\leftopeninterval{-\frac{\sminvtemperature}{2}}{\frac{\sminvtemperature}{2}}\), and let \begin{equation}\label{eq:circle-distance-matsubara}
\begin{aligned}
\fun{d_{\sminvtemperature}}{s,t}
&=
\min\setone{\abs{t-s},\sminvtemperature-\abs{t-s}},
&
\nu_n
&=
\frac{2\pi n}{\sminvtemperature},
\quad n\in\ringratint.
\end{aligned}
\end{equation} The real spacetime Hilbert space is \(\fun{\lp^2}{S_{\sminvtemperature};\sphilbfrak{h}_{\fldreal}}\); as a set it is \(\fun{\lp^2}{S_{\sminvtemperature};\sphilbfrak{h}}\), and its inner product is the real part of that of \(\fun{\lp^2}{S_{\sminvtemperature};\sphilbfrak{h}}\). Throughout the paper, we use without comment the standard unitary identifications \(\fun{\lp^2}{S_{\sminvtemperature};\sphilbfrak{h}}\cong\fun{\lp^2}{S_{\sminvtemperature};\fldcmp}\otimes\sphilbfrak{h}\) and \(\fun{\lp^2}{S_{\sminvtemperature};\sphilbfrak{h}_{\fldreal}}\cong\fun{\lp^2}{S_{\sminvtemperature};\fldreal}\otimes\sphilbfrak{h}_{\fldreal}\), under which \(\varphi\otimes g\) is the function \(t\mapsto\fun{\varphi}{t}g\), and similar well-known identifications of vector-valued \(\lp^2\) spaces, as the context requires. For \(F\in\fun{\lp^2}{S_{\sminvtemperature};\sphilbfrak{h}_{\fldreal}}\) and \(n\in\ringratint\), the Fourier coefficient \begin{equation}\label{eq:fourier-coefficient}
\begin{aligned}
F_n
&=
\frac{1}{\sqrt{\sminvtemperature}}\int_{S_{\sminvtemperature}}\napiernum^{-\imunit\nu_nt}\fun{F}{t}\opdmsr{t}
\in\sphilbfrak{h}
\end{aligned}
\end{equation} is taken with the complex scalar multiplication of \(\sphilbfrak{h}\). The map \(F\mapsto\seq{F_n}{n\in\ringratint}\) is a unitary map from \(\fun{\lp^2}{S_{\sminvtemperature};\sphilbfrak{h}}\) onto \(\fun{\lpseq^2}{\ringratint;\sphilbfrak{h}}\), hence \begin{equation}\label{eq:real-parseval}
\begin{aligned}
\bkt{F}{G}_{\fun{\lp^2}{S_{\sminvtemperature};\sphilbfrak{h}_{\fldreal}}}
&=
\sum_{n\in\ringratint}\opreal\bkt{F_n}{G_n}_{\sphilbfrak{h}},
\quad F,G\in\fun{\lp^2}{S_{\sminvtemperature};\sphilbfrak{h}_{\fldreal}}.
\end{aligned}
\end{equation}

Let \(D_t^2\) be the closure of \(-\pd{^2}{t^2}\) on the \(\sminvtemperature\)-periodic smooth \(\sphilbfrak{h}_{\fldreal}\)-valued functions, let \(\physham[h]^2\) also denote the operator \(\rbk{\physham[h]^2F}(t)=\physham[h]^2\fun{F}{t}\) on its maximal domain in \(\fun{\lp^2}{S_{\sminvtemperature};\sphilbfrak{h}_{\fldreal}}\), and let \begin{equation}\label{eq:covariance-operator}
\begin{aligned}
C
&=
\rbk{D_t^2+\physham[h]^2}^{-1}.
\end{aligned}
\end{equation}

\begin{lem}[Covariance in the Matsubara representation]\label{lem:covariance-fourier}
The operator $D_t^2+\physham[h]^2$ on $\dom D_t^2\cap\dom\physham[h]^2$ is nonnegative, injective, and self-adjoint, and
$\rbk{\rbk{D_t^2+\physham[h]^2}F}_n=\rbk{\nu_n^2+\physham[h]^2}F_n$.
Its inverse $C$ is nonnegative, injective, self-adjoint, and densely defined, and
\begin{equation}\label{eq:covariance-square-root}
\begin{aligned}
\dom C^{\frac{1}{2}}
&=
\set{F\in\fun{\lp^2}{S_{\sminvtemperature};\sphilbfrak{h}_{\fldreal}}}{F_0\in\dom\physham[h]^{-1}},
&
\rbk{C^{\frac{1}{2}}F}_n
&=
\rbk{\nu_n^2+\physham[h]^2}^{-\frac{1}{2}}F_n.
\end{aligned}
\end{equation}
The operator $C$ is bounded if and only if $\inf\opvarspec{\physham[h]}>0$.
\end{lem}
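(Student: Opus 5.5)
The plan is to diagonalize everything by the Fourier-coefficient map \(U\colon F\mapsto\seq{F_n}{n\in\ringratint}\). This map is unitary from \(\fun{\lp^2}{S_{\sminvtemperature};\sphilbfrak{h}}\) onto \(\fun{\lpseq^2}{\ringratint;\sphilbfrak{h}}\), and by \eqref{eq:real-parseval} it is also a real orthogonal map on the underlying real spaces. The target is regarded as the direct sum \(\bigoplus_{n}\sphilbfrak{h}\). On it I define the operator \(T=\bigoplus_n\rbk{\nu_n^2+\physham[h]^2}\), with domain consisting of the sequences with \(F_n\in\dom\physham[h]^2\) and \(\sum_n\norm{\rbk{\nu_n^2+\physham[h]^2}F_n}^2<\infty\). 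Since \(T\) is a direct sum of nonnegative injective self-adjoint operators, it is nonnegative, injective and self-adjoint. It is also the functional calculus \(\fun{\varphi}{\cdot}\) of a single self-adjoint operator, so it stays self-adjoint when regarded as a real operator.

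The key step, and the only real obstacle, is to identify \(U^{-1}TU\) with \(D_t^2+\physham[h]^2\) on \(\dom D_t^2\cap\dom\physham[h]^2\). It suffices to prove three things:
\[
U D_t^2 U^{-1}=\bigoplus_n\nu_n^2,\qquad
U\physham[h]^2U^{-1}=\bigoplus_n\physham[h]^2,\qquad
\dom T=\dom\bigl(\textstyle\bigoplus\nu_n^2\bigr)\cap\dom\bigl(\textstyle\bigoplus\physham[h]^2\bigr).
\]

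For the first, integration by parts on periodic smooth functions gives \(\rbk{-F''}_n=\nu_n^2F_n\). The closure of this operator is then the maximal multiplication operator, because trigonometric polynomials with \(\sphilbfrak{h}\)-coefficients are a core: truncating the Fourier series converges in the graph norm.

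For the second, the Bochner integral commutes with the closed operator \(\physham[h]^2\). Conversely, if \(F_n\in\dom\physham[h]^2\) with \(\sum\norm{\physham[h]^2F_n}^2<\infty\), then \(\fun{F}{t}\in\dom\physham[h]^2\) for almost every \(t\). One sees this by applying the inverse transform and using the closedness of \(\physham[h]^2\) on \(\fun{\lp^2}{S_{\sminvtemperature};\sphilbfrak{h}}\).

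For the domain identity, both \(\nu_n^2\) and \(\physham[h]^2\) are nonnegative and commute, so
\[
\norm{\rbk{\nu_n^2+\physham[h]^2}u}^2=\norm{\nu_n^2u}^2+\norm{\physham[h]^2u}^2+2\nu_n^2\norm{\physham[h]u}^2 .
\]
The sum-norm therefore controls both pieces, and conversely both pieces control the sum-norm. This gives the stated intertwining formula.

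With that in place, \(C=U^{-1}\bigoplus_n\rbk{\nu_n^2+\physham[h]^2}^{-1}U\). This operator is nonnegative, injective and self-adjoint, and it is densely defined because its domain is the range of a self-adjoint injective operator. Its square root is \(U^{-1}\bigoplus_n\rbk{\nu_n^2+\physham[h]^2}^{-\frac{1}{2}}U\), by uniqueness of the nonnegative square root, which gives the formula for \(\rbk{C^{\frac{1}{2}}F}_n\).

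For the domain of \(C^{\frac{1}{2}}\), I will show that for \(n\neq0\) the operator \(\rbk{\nu_n^2+\physham[h]^2}^{-\frac{1}{2}}\) is bounded by \(\abs{\nu_n}^{-1}\leq\sminvtemperature/2\pi\). Hence \(\sum_{n\neq0}\) is finite for every \(F\), and the only condition comes from \(n=0\), namely \(F_0\in\dom\physham[h]^{-1}\), since \(\rbk{\physham[h]^2}^{-\frac{1}{2}}=\physham[h]^{-1}\). This yields \eqref{eq:covariance-square-root}.

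Finally, the norm of \(C\) is \(\sup_n\norm{\rbk{\nu_n^2+\physham[h]^2}^{-1}}\). This supremum is attained at \(n=0\) and equals \(\sup_{\lambda\in\opvarspec{\physham[h]}}\lambda^{-2}\). It is finite if and only if \(\inf\opvarspec{\physham[h]}>0\).
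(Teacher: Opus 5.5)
Your proposal is correct and takes the same route as the paper. You diagonalize by the Fourier coefficients so that $D_t^2+\physham[h]^2$ becomes $\bigoplus_n\rbk{\nu_n^2+\physham[h]^2}$, invert and take square roots summand by summand, and note that only the zero mode restricts $\dom C^{\frac{1}{2}}$ and decides whether $C$ is bounded. You also supply three details the paper leaves implicit: the core argument for $D_t^2$, the closedness argument for the pointwise $\physham[h]^2$, and the norm identity showing that the domain of the direct sum equals $\dom D_t^2\cap\dom\physham[h]^2$.
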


\begin{proof}
The Fourier transform in $t$ diagonalizes $D_t^2$ as multiplication by $\nu_n^2$ and maps $\physham[h]^2$ to
$\bigoplus_n\physham[h]^2$, so $D_t^2+\physham[h]^2$ is the real linear operator of $\bigoplus_n\rbk{\nu_n^2+\physham[h]^2}$.
Each summand is nonnegative and self-adjoint; it is injective for $n\neq0$ because $\nu_n^2>0$ and for $n=0$ because
$\physham[h]$ is injective.
Hence $C=\bigoplus_n\rbk{\nu_n^2+\physham[h]^2}^{-1}$ and $C^{\frac{1}{2}}=\bigoplus_n\rbk{\nu_n^2+\physham[h]^2}^{-\frac{1}{2}}$.
For $n\neq0$, $\norm{\rbk{\nu_n^2+\physham[h]^2}^{-\frac{1}{2}}}\leq\abs{\nu_n}^{-1}\leq\nu_1^{-1}$, so only the zero mode
restricts the domain, which gives \eqref{eq:covariance-square-root}.
The norm of the zero-mode summand $\physham[h]^{-2}$ is finite exactly when $\inf\opvarspec{\physham[h]}>0$, and the other summands
have norm at most $\nu_1^{-2}$.
\end{proof}

For the basic model of Example \ref{ex:basic-model}, \(C\) is the multiplication by \(\rbk{\nu_n^2+\abs{2\pi k}^{2s}}^{-1}\) in the variables \(\pairbk{n,k}\), which is unbounded near \(\pairbk{0,0}\). The injectivity of \(\physham[h]\) cannot be dropped: if \(A\geq0\) is a self-adjoint operator on \(\sphilbfrak{h}\) with \(\ker A\neq\setone{0}\), then every constant function with values in \(\ker A\) is annihilated by \(D_t^2+A^2\), so the covariance operator does not exist. Correspondingly, the Araki--Woods representations of \cite[Sections 17.1.4 and 17.1.5, pp.~433--437]{DerezinskiGerard001} assume \(\ker\rbk{1-\gamma}=\setone{0}\), which for \(\gamma=\napiernum^{-\sminvtemperature\physham[h]}\) is the injectivity of \(\physham[h]\).

For \(\tau\in\fldreal\), the circle translation \(u_\tau\) and the reflection \(r\) are the orthogonal operators \begin{equation}\label{eq:circle-translation-reflection}
\begin{aligned}
\fun{\rbk{u_\tau F}}{t}
&=
\fun{F}{t-\tau},
&
\fun{\rbk{rF}}{t}
&=
\fun{F}{-t}
\end{aligned}
\end{equation} on \(\fun{\lp^2}{S_{\sminvtemperature};\sphilbfrak{h}_{\fldreal}}\). They satisfy \(u_{\sigma+\tau}=u_\sigma u_\tau\), \(u_{\sminvtemperature}=1\), \(r^2=1\), and \(ru_\tau=u_{-\tau}r\). Since \(\rbk{u_\tau F}_n=\napiernum^{-\imunit\nu_n\tau}F_n\) and \(\rbk{rF}_n=F_{-n}\), Lemma \ref{lem:covariance-fourier} shows that they commute with \(C\) and \(C^{\frac{1}{2}}\).

\section{Sharp-time vectors and half-circle projections}\label{sec:sharp-time}

In this section, we construct the covariance Hilbert space, the sharp-time vectors, and the test space, and we prove the half-circle projection formula, which is the one-particle form of the Markov property.

\subsection{Covariance Hilbert space and sharp-time vectors}\label{covariance-hilbert-space-and-sharp-time-vectors}

In this subsection, we define the sharp-time vectors in the completion of \(\dom C\) and compute their inner products.

\begin{defn}[Covariance Hilbert space]\label{def:covariance-hilbert-space}
The covariance Hilbert space $\sphilbfrak{H}_{\sminvtemperature,C}$ is the completion of $\dom C$ in the inner product
$\bkt{f}{g}_{\sphilbfrak{H}_{\sminvtemperature,C}}=\bkt{C^{\frac{1}{2}}f}{C^{\frac{1}{2}}g}_{\fun{\lp^2}{S_{\sminvtemperature};\sphilbfrak{h}_{\fldreal}}}$.
Let $V\colon\sphilbfrak{H}_{\sminvtemperature,C}\to\fun{\lp^2}{S_{\sminvtemperature};\sphilbfrak{h}_{\fldreal}}$ be the continuous extension of
$C^{\frac{1}{2}}$ restricted to $\dom C$.
\end{defn}

The map \(V\) is unitary, because \(C\) is injective and the range of \(C^{\frac{1}{2}}\) restricted to \(\dom C\) contains the ranges of the spectral projections \(\fun{\msrprb_C}{\closedinterval{N^{-1}}{N}}\), \(N\in\semigrposint\); in particular, \(\sphilbfrak{H}_{\sminvtemperature,C}\) is separable. The operators \(u_\tau\) and \(r\) preserve \(\dom C\) and the inner product of \(\sphilbfrak{H}_{\sminvtemperature,C}\); their continuous extensions, denoted by the same symbols, are orthogonal operators on \(\sphilbfrak{H}_{\sminvtemperature,C}\) with the relations after \eqref{eq:circle-translation-reflection}, and \(Vu_\tau=u_\tau V\) and \(Vr=rV\). In particular, \(\fml{u_\tau}{\tau\in\fldreal}\) is strongly continuous on \(\sphilbfrak{H}_{\sminvtemperature,C}\).

The sharp-time vectors are controlled by the following Matsubara sum. For \(u\in\closedinterval{0}{\sminvtemperature}\) and \(\lambda>0\), set \begin{equation}\label{eq:green-multiplier}
\begin{aligned}
\fun{M_u}{\lambda}
&=
\frac{\napiernum^{-u\lambda}+\napiernum^{-\rbk{\sminvtemperature-u}\lambda}}{\lambda\rbk{1-\napiernum^{-\sminvtemperature\lambda}}},
&
\fun{\varpi}{\lambda}
&=
\rbk{\frac{1}{2\lambda}\fun{\coth}{\frac{\sminvtemperature\lambda}{2}}}^{\frac{1}{2}}
=
\rbk{\frac{1}{2}\fun{M_0}{\lambda}}^{\frac{1}{2}}.
\end{aligned}
\end{equation}

\begin{lem}[Matsubara sum]\label{lem:matsubara-sum}
Let $u\in\closedinterval{0}{\sminvtemperature}$ and $g_1,g_2\in\dom\physham[h]^{-1}$.
Then $\dom\fun{\varpi}{\physham[h]}=\dom\physham[h]^{-1}$, the series and the quadratic form of the functional calculus in
\begin{equation}\label{eq:matsubara-sum}
\begin{aligned}
\frac{1}{\sminvtemperature}\sum_{n\in\ringratint}\napiernum^{\imunit\nu_nu}
\bkt{\rbk{\nu_n^2+\physham[h]^2}^{-\frac{1}{2}}g_1}{\rbk{\nu_n^2+\physham[h]^2}^{-\frac{1}{2}}g_2}_{\sphilbfrak{h}_{\fldreal}}
&=
\frac{1}{2}\bkt{g_1}{\fun{M_u}{\physham[h]}g_2}_{\sphilbfrak{h}_{\fldreal}}
\end{aligned}
\end{equation}
converge absolutely, and the identity holds.
In particular,
$$\begin{aligned}
\sum_{n\in\ringratint}\norm{\rbk{\nu_n^2+\physham[h]^2}^{-\frac{1}{2}}g_1}^2
&=
\sminvtemperature\norm{\fun{\varpi}{\physham[h]}g_1}^2
<
\infty.
\end{aligned}$$
\end{lem}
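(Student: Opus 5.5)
The plan is to reduce everything to a scalar identity for each spectral value \(\lambda>0\) and then integrate against the spectral measure of \(\physham[h]\). The scalar identity is the classical Matsubara sum
\begin{equation*}
\frac{1}{\sminvtemperature}\sum_{n\in\ringratint}\frac{\napiernum^{\imunit\nu_nu}}{\nu_n^2+\lambda^2}
=\frac{1}{2}\fun{M_u}{\lambda},\qquad u\in\closedinterval{0}{\sminvtemperature},\ \lambda>0 .
\end{equation*}
I would prove it by computing the Fourier coefficients of the \(\sminvtemperature\)-periodic function \(u\mapsto\frac{1}{2}\fun{M_u}{\lambda}\) on \(\closedinterval{0}{\sminvtemperature}\). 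This function is continuous, since \(\fun{M_0}{\lambda}=\fun{M_{\sminvtemperature}}{\lambda}\), and piecewise smooth. A direct integration of exponentials gives
\begin{equation*}
\int_0^{\sminvtemperature}\napiernum^{-\imunit\nu_nu}\,\tfrac{1}{2}\fun{M_u}{\lambda}\opdmsr{u}=\rbk{\nu_n^2+\lambda^2}^{-1}.
\end{equation*}
The coefficients are summable, so the Fourier series converges absolutely and uniformly to the continuous function, which gives the identity pointwise. Setting \(u=0\) gives \(\sum_n\rbk{\nu_n^2+\lambda^2}^{-1}=\frac{\sminvtemperature}{2}\fun{M_0}{\lambda}=\sminvtemperature\fun{\varpi}{\lambda}^2\). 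It also gives the bound \(\abs{\fun{M_u}{\lambda}}\leq\fun{M_0}{\lambda}\), because \(\napiernum^{-u\lambda}+\napiernum^{-\rbk{\sminvtemperature-u}\lambda}\leq1+\napiernum^{-\sminvtemperature\lambda}\) for \(u\in\closedinterval{0}{\sminvtemperature}\).

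Next I would identify the domain. As \(\lambda\to0\), \(\fun{\varpi}{\lambda}^2=\frac{1}{2\lambda}\fun{\coth}{\frac{\sminvtemperature\lambda}{2}}\sim\frac{1}{\sminvtemperature\lambda^2}\); as \(\lambda\to\infty\), \(\fun{\varpi}{\lambda}^2\sim\frac{1}{2\lambda}\). Since \(\fun{\varpi}{\lambda}^2\) is continuous and positive, there are constants \(c,c'>0\) with
\begin{equation*}
c\,\lambda^{-2}\leq\fun{\varpi}{\lambda}^2\leq c'\rbk{\lambda^{-2}+1}\quad\text{on }\openinterval{0}{\infty},
\end{equation*}
where the lower bound is used only on \(\openinterval{0}{1}\). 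Because \(\varpi\) is bounded on \(\closedinterval{1}{\infty}\), the condition \(\int\fun{\varpi}{\lambda}^2\opdmsr{\bkt{g}{\fun{\msrprb_{\physham[h]}}{\lambda}g}}<\infty\) is controlled by the behaviour near \(0\). There it is equivalent to \(\int\lambda^{-2}\opdmsr{\bkt{g}{\fun{\msrprb_{\physham[h]}}{\lambda}g}}<\infty\), and this gives \(\dom\fun{\varpi}{\physham[h]}=\dom\physham[h]^{-1}\). It follows that \(\fun{M_u}{\physham[h]}\), with \(\abs{M_u}\leq M_0=2\varpi^2\), has a quadratic form that is absolutely convergent on \(\dom\physham[h]^{-1}\). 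Concretely, \(\int\abs{\fun{M_u}{\lambda}}\opdmsr{\abs{\mu_{g_1,g_2}}}\) is bounded by \(2\norm{\fun{\varpi}{\physham[h]}g_1}\norm{\fun{\varpi}{\physham[h]}g_2}\) via Cauchy--Schwarz for the complex spectral measures \(\mu_{g_1,g_2}\).

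Finally, for the series, I would write each term as \(\napiernum^{\imunit\nu_nu}\opreal\int\rbk{\nu_n^2+\lambda^2}^{-1}\opdmsr{\mu_{g_1,g_2}}\). Care is needed here because \(\napiernum^{\imunit\nu_nu}\) is complex while the inner product is real. So I would pair the terms \(n\) and \(-n\), whose weights \(\rbk{\nu_n^2+\physham[h]^2}^{-1}\) coincide, so that only \(\cos\rbk{\nu_nu}\) survives; this matches the fact that \(M_u\) is real. Absolute convergence follows from \(\sum_n\int\rbk{\nu_n^2+\lambda^2}^{-1}\opdmsr{\abs{\mu_{g_1,g_2}}}=\sminvtemperature\int\varpi^2\opdmsr{\abs{\mu_{g_1,g_2}}}<\infty\), by Tonelli. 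Fubini then lets me interchange the sum and the spectral integral and apply the scalar identity inside, which yields \eqref{eq:matsubara-sum}. The special case \(u=0\), \(g_1=g_2\) is the norm identity.

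The main obstacle is the scalar Fourier-coefficient computation together with the justification of the interchange using the total variation \(\abs{\mu_{g_1,g_2}}\). Both are routine once the uniform bound \(\abs{M_u}\leq M_0\) and the two-sided comparison of \(\varpi^2\) with \(\lambda^{-2}\) near zero are in hand.
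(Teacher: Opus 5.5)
Your proposal is correct and follows the paper's proof. Both obtain the scalar identity from the Fourier coefficients of the continuous $\beta$-periodic function $u\mapsto\frac{1}{2}M_u(\lambda)$, get the domain equality by comparing $\varpi(\lambda)^2=\frac{1}{2}M_0(\lambda)$ with $\lambda^{-2}$ near $0$, apply Cauchy--Schwarz for the spectral measures, and integrate the scalar identity term by term against the spectral measure of $h$. The only difference is how the interchange is justified: you dominate by $\sum_n(\nu_n^2+\lambda^2)^{-1}=\beta\varpi(\lambda)^2$ through Tonelli and Fubini, pairing $n$ with $-n$ to handle the real inner product. The paper instead bounds the $n\neq0$ terms uniformly by $\nu_n^{-2}$ and treats the $n=0$ term $\beta^{-1}\langle h^{-1}g_1,h^{-1}g_2\rangle$ separately.
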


\begin{proof}
For a scalar $a>0$, the function $u\mapsto\frac{1}{2}\fun{M_u}{a}$ on $\closedinterval{0}{\sminvtemperature}$ has continuous
$\sminvtemperature$-periodic extension and Fourier coefficients $\sminvtemperature^{-\frac{1}{2}}\rbk{\nu_n^2+a^2}^{-1}$, so a direct
computation gives, with uniform convergence,
$$\begin{aligned}
\frac{1}{\sminvtemperature}\sum_{n\in\ringratint}\frac{\napiernum^{\imunit\nu_nu}}{\nu_n^2+a^2}
&=
\frac{1}{2}\fun{M_u}{a};
\end{aligned}$$
see \cite[(21.53), p.~634]{DerezinskiGerard001}.
The estimates $\fun{M_u}{\lambda}\leq\fun{M_0}{\lambda}=\lambda^{-1}\fun{\coth}{\frac{\sminvtemperature\lambda}{2}}\leq1+\rbk{1+\frac{2}{\sminvtemperature}}\lambda^{-2}$
and $\fun{M_0}{\lambda}\geq\frac{2}{\sminvtemperature}\lambda^{-2}$ on $\lambda\leq1$ give $\dom\fun{\varpi}{\physham[h]}=\dom\physham[h]^{-1}$ and the
absolute convergence of the right-hand side of \eqref{eq:matsubara-sum}, by the Cauchy--Schwarz inequality for the
measures $\bkt{g_i}{\fun{\msrprb_{\physham[h]}}{\cdot}g_j}_{\sphilbfrak{h}_{\fldreal}}$, $i,j\in\setone{1,2}$.
The terms with $n\neq0$ are bounded by $\nu_n^{-2}$ uniformly in $a$, so the scalar identity can be integrated term by term
against the spectral measure $\msrprb_{\physham[h]}$, the term $n=0$ being $\sminvtemperature^{-1}\bkt{\physham[h]^{-1}g_1}{\physham[h]^{-1}g_2}$.
\end{proof}

\begin{defn}[Sharp-time vectors and the sharp-time space]\label{def:sharp-time-vectors}
For $t\in S_{\sminvtemperature}$ and $g\in\dom\physham[h]^{-1}$, the sharp-time vector $j_tg\in\sphilbfrak{H}_{\sminvtemperature,C}$ is
determined by
\begin{equation}\label{eq:sharp-time-vector}
\begin{aligned}
\rbk{Vj_tg}_n
&=
\frac{1}{\sqrt{\sminvtemperature}}\napiernum^{-\imunit\nu_nt}\rbk{\nu_n^2+\physham[h]^2}^{-\frac{1}{2}}g,
\quad n\in\ringratint,
\end{aligned}
\end{equation}
which is square summable by Lemma \ref{lem:matsubara-sum}.
Let $\sphilbfrak{R}$ be the completion of $\dom\physham[h]^{-1}$ in the norm $\norm{g}_{\sphilbfrak{R}}=\norm{\fun{\varpi}{\physham[h]}g}_{\sphilbfrak{h}_{\fldreal}}$.
For $s,t\in S_{\sminvtemperature}$ and $g_1,g_2\in\dom\physham[h]^{-1}$, the sharp-time Green kernel is
\begin{equation}\label{eq:green-kernel}
\begin{aligned}
\fun{\opform{G}_{\sminvtemperature}}{s,g_1;t,g_2}
&=
\frac{1}{2}\bkt{g_1}{\fun{M_{\fun{d_{\sminvtemperature}}{s,t}}}{\physham[h]}g_2}_{\sphilbfrak{h}_{\fldreal}}
\\ 
&=
\frac{1}{2}\bkt{g_1}{\physham[h]^{-1}\frac{\napiernum^{-\fun{d_{\sminvtemperature}}{s,t}\physham[h]}+\napiernum^{-\rbk{\sminvtemperature-\fun{d_{\sminvtemperature}}{s,t}}\physham[h]}}{1-\napiernum^{-\sminvtemperature\physham[h]}}g_2}_{\sphilbfrak{h}_{\fldreal}},
\end{aligned}
\end{equation}
where the right-hand sides are quadratic forms of the functional calculus.
\end{defn}

For a real \(\varphi\in\fun{\lp^2}{S_{\sminvtemperature};\fldreal}\), the vector \(V^{-1}C^{\frac{1}{2}}\rbk{\varphi\otimes g}\) has the coefficients \(\varphi_n\rbk{\nu_n^2+\physham[h]^2}^{-\frac{1}{2}}g\), so \(j_tg\) is the vector \(\msrdirac_{t}\otimes g\) for the Dirac measure \(\msrdirac_{t}\) at \(t\), whose coefficients \(\int_{S_{\sminvtemperature}}\sminvtemperature^{-\frac{1}{2}}\napiernum^{-\imunit\nu_nu}\opdmsr{\msrdirac_{t}(u)}=\sminvtemperature^{-\frac{1}{2}}\napiernum^{-\imunit\nu_nt}\) replace those of \(\varphi\) in \eqref{eq:fourier-coefficient}. Since \(\fun{\varpi}{\lambda}=\rbk{2\lambda\fun{\tanh}{\frac{\sminvtemperature\lambda}{2}}}^{-\frac{1}{2}}\), the completion \(\sphilbfrak{R}\) of \(\pairbk{\dom\physham[h]^{-1},\norm{\cdot}_{\sphilbfrak{R}}}\) is the space \(\rbk{2\physham[h]\fun{\tanh}{\frac{\sminvtemperature\physham[h]}{2}}}^{\frac{1}{2}}\sphilbfrak{h}_{\fldreal}\) of \cite[Definition 2.59, p.~47]{DerezinskiGerard001}, on which the sharp-time maps of \cite[Lemma 21.51, p.~633]{DerezinskiGerard001} are defined.

\begin{prop}[Sharp-time vectors]\label{prop:sharp-time-vectors}
Let $s,t\in S_{\sminvtemperature}$, $\tau\in\fldreal$, and $g_1,g_2\in\dom\physham[h]^{-1}$.
Then
\begin{equation}\label{eq:sharp-time-pairing}
\begin{aligned}
\bkt{j_sg_1}{j_tg_2}_{\sphilbfrak{H}_{\sminvtemperature,C}}
&=
\fun{\opform{G}_{\sminvtemperature}}{s,g_1;t,g_2},
&
u_\tau j_tg_1
&=
j_{t+\tau}g_1,
&
rj_tg_1
&=
j_{-t}g_1.
\end{aligned}
\end{equation}
In particular, $j_t$ extends to an isometry from $\sphilbfrak{R}$ into $\sphilbfrak{H}_{\sminvtemperature,C}$, and these relations hold for
$g_1,g_2\in\sphilbfrak{R}$ with the continuous extension of $\opform{G}_{\sminvtemperature}$.
\end{prop}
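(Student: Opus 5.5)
Since $V$ is unitary (Definition \ref{def:covariance-hilbert-space} and the remark after it), I compute every inner product in $\sphilbfrak{H}_{\sminvtemperature,C}$ through $V$ and the real Parseval identity \eqref{eq:real-parseval}:
$$\bkt{j_sg_1}{j_tg_2}_{\sphilbfrak{H}_{\sminvtemperature,C}}=\sum_{n\in\ringratint}\opreal\bkt{\rbk{Vj_sg_1}_n}{\rbk{Vj_tg_2}_n}_{\sphilbfrak{h}}.$$
By \eqref{eq:sharp-time-vector}, and because $\rbk{\nu_n^2+\physham[h]^2}^{-\frac{1}{2}}$ commutes with the complex scalars, the $n$-th term is
$$\frac{1}{\sminvtemperature}\opreal\rbk{\napiernum^{\imunit\nu_n(s-t)}\bkt{\rbk{\nu_n^2+\physham[h]^2}^{-\frac{1}{2}}g_1}{\rbk{\nu_n^2+\physham[h]^2}^{-\frac{1}{2}}g_2}_{\sphilbfrak{h}}}.$$
I take the inner product of $\sphilbfrak{h}$ antilinear in the first slot; the other convention only replaces $s-t$ by $t-s$.

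Next I symmetrize. Pairing the terms for $n$ and $-n$ turns the phase into a cosine, so the full sum equals the left side of \eqref{eq:matsubara-sum} with $u=s-t$. Here two facts are used. First, the scalar identity in the proof of Lemma \ref{lem:matsubara-sum} holds for all $u\in\fldreal$ by periodicity. Second, $M_u$ is even about $u=0$ and $\sminvtemperature$-periodic, so $u=s-t$ may be replaced by the representative $\fun{d_{\sminvtemperature}}{s,t}\in\closedinterval{0}{\sminvtemperature/2}$.

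One delicate point remains. Lemma \ref{lem:matsubara-sum} is stated with the real inner product $\opreal\bkt{\cdot}{\cdot}$ inside, and with the real part taken before multiplying by $\napiernum^{\imunit\nu_nu}$. So I check that the real part of the sum equals the sum of the real parts. This follows from the $\pm n$ pairing: $\cos(\nu_nu)$ is real, and the remaining term $\sin(\nu_nu)\opimag\bkt{\cdot}{\cdot}$ cancels between $n$ and $-n$. Lemma \ref{lem:matsubara-sum} then gives the result $\frac{1}{2}\bkt{g_1}{\fun{M_{d}}{\physham[h]}g_2}_{\sphilbfrak{h}_{\fldreal}}$, which is \eqref{eq:green-kernel}. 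Absolute convergence is supplied by that lemma. I expect this bookkeeping of real versus complex inner products to be the only real obstacle.

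The covariance relations are direct coefficient checks. By the remark after \eqref{eq:circle-translation-reflection}, $Vu_\tau=u_\tau V$ and $\rbk{u_\tau F}_n=\napiernum^{-\imunit\nu_n\tau}F_n$, so
$$\rbk{Vu_\tau j_tg}_n=\sminvtemperature^{-\frac{1}{2}}\napiernum^{-\imunit\nu_n(t+\tau)}\rbk{\nu_n^2+\physham[h]^2}^{-\frac{1}{2}}g=\rbk{Vj_{t+\tau}g}_n.$$
This is consistent modulo $\sminvtemperature$, because $\napiernum^{-\imunit\nu_n\sminvtemperature}=1$. For the reflection, $Vr=rV$ and $\rbk{rF}_n=F_{-n}$ give $\rbk{Vrj_tg}_n=\sminvtemperature^{-\frac{1}{2}}\napiernum^{\imunit\nu_nt}\rbk{\nu_n^2+\physham[h]^2}^{-\frac{1}{2}}g$, using $\nu_{-n}^2=\nu_n^2$. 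This is $\rbk{Vj_{-t}g}_n$.

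Finally, for the extension I set $s=t$ and $g_1=g_2=g$. Then $d=0$ and
$$\norm{j_tg}^2=\frac{1}{2}\bkt{g}{\fun{M_0}{\physham[h]}g}=\norm{\fun{\varpi}{\physham[h]}g}^2=\norm{g}_{\sphilbfrak{R}}^2,$$
using $\varpi^2=\frac{1}{2}M_0$ from \eqref{eq:green-multiplier}. So $j_t$ is a real-linear isometry on the dense subspace $\dom\physham[h]^{-1}$ of $\sphilbfrak{R}$, and it extends uniquely to an isometry $\sphilbfrak{R}\to\sphilbfrak{H}_{\sminvtemperature,C}$. The form $\opform{G}_{\sminvtemperature}$ is a pairing of two isometric images, hence jointly continuous in $\norm{\cdot}_{\sphilbfrak{R}}$, and it extends continuously. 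The three relations pass to the limit because $u_\tau$ and $r$ are bounded and $j_t,j_{t+\tau},j_{-t}$ are continuous.
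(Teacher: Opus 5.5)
Your proposal is correct and takes essentially the same route as the paper. You compute through the unitary $V$ with the real Parseval identity, pair the $n$ and $-n$ terms (implicitly using $w_{-n}=w_n$ for $w_n=\langle(\nu_n^2+h^2)^{-1/2}g_1,(\nu_n^2+h^2)^{-1/2}g_2\rangle_{\mathfrak{h}}$) to reach the Matsubara sum of Lemma \ref{lem:matsubara-sum} together with the symmetry $M_u=M_{\beta-u}$, and check $u_\tau$ and $r$ directly on the Fourier coefficients; your extension to $g_1,g_2\in\mathfrak{R}$ just spells out the ``in particular'' that the paper leaves implicit.
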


\begin{proof}
By \eqref{eq:real-parseval} and \eqref{eq:sharp-time-vector}, the left-hand side of the first identity is
$\sminvtemperature^{-1}\sum_n\opreal\rbk{\napiernum^{\imunit\nu_n\rbk{s-t}}w_n}$ with
$w_n=\bkt{\rbk{\nu_n^2+\physham[h]^2}^{-\frac{1}{2}}g_1}{\rbk{\nu_n^2+\physham[h]^2}^{-\frac{1}{2}}g_2}_{\sphilbfrak{h}}=w_{-n}$.
Since $\nu_{-n}=-\nu_n$, grouping the terms $n$ and $-n$ shows that this sum equals
$\sminvtemperature^{-1}\sum_n\napiernum^{\imunit\nu_n\rbk{s-t}}\opreal w_n$, and $\opreal w_n$ is the inner product of $\sphilbfrak{h}_{\fldreal}$.
Lemma \ref{lem:matsubara-sum} with $u\in\closedinterval{0}{\sminvtemperature}$ congruent to $s-t$, together with $M_u=M_{\sminvtemperature-u}$, gives
\eqref{eq:green-kernel}.
At $s=t$ this is $\norm{j_tg_1}^2=\norm{g_1}_{\sphilbfrak{R}}^2$.
The last two identities follow from $\rbk{u_\tau F}_n=\napiernum^{-\imunit\nu_n\tau}F_n$, $\rbk{rF}_n=F_{-n}$, and $\nu_{-n}=-\nu_n$.
\end{proof}

\subsection{Test spaces}\label{test-spaces}

In this subsection, we introduce the test spaces that generate the time-zero fields and prove that their sharp-time vectors span the covariance Hilbert space.

For \(\sigma\in\closedinterval{-\frac{\sminvtemperature}{2}}{0}\) and \(\lambda>0\), set \begin{equation}\label{eq:sinh-ratios}
\begin{aligned}
\fun{\eta_{0,\sigma}}{\lambda}
&=
\frac{\fun{\sinh}{\rbk{\frac{\sminvtemperature}{2}+\sigma}\lambda}}{\fun{\sinh}{\frac{\sminvtemperature\lambda}{2}}},
&
\fun{\eta_{1,\sigma}}{\lambda}
&=
\frac{\fun{\sinh}{-\sigma\lambda}}{\fun{\sinh}{\frac{\sminvtemperature\lambda}{2}}}.
\end{aligned}
\end{equation} These functions take values in \(\closedinterval{0}{1}\), so \(\fun{\eta_{0,\sigma}}{\physham[h]}\) and \(\fun{\eta_{1,\sigma}}{\physham[h]}\) are non-expansive, that is, have operator norm at most \(1\).

\begin{defn}[Test space]\label{def:test-space}
A test space is a dense real linear subspace $\mathcal{T}_{\fldreal}$ of $\sphilbfrak{h}_{\fldreal}$ with
$\mathcal{T}_{\fldreal}\subset\dom\physham[h]^{-1}$ and
\begin{equation}\label{eq:test-space-invariance}
\begin{aligned}
\fun{\eta_{0,\sigma}}{\physham[h]}\mathcal{T}_{\fldreal}
&\subset
\mathcal{T}_{\fldreal},
&
\fun{\eta_{1,\sigma}}{\physham[h]}\mathcal{T}_{\fldreal}
&\subset
\mathcal{T}_{\fldreal},
\quad
\sigma\in\closedinterval{-\frac{\sminvtemperature}{2}}{0}.
\end{aligned}
\end{equation}
In what follows, a test space $\mathcal{T}_{\fldreal}$ is fixed.
\end{defn}

\begin{ex}[Test spaces]\label{ex:test-spaces}
For the basic model of Example \ref{ex:basic-model}, the space of $f$ with $\fafouriertr f\in\fun{C_{\mathrm{c}}^{\infty}}{\fldreal^d\setminus\setone{0}}$
is a test space, because the functions $k\mapsto\fun{\eta_{i,\sigma}}{\abs{2\pi k}^s}$ are smooth on $\fldreal^d\setminus\setone{0}$.
For a general $\physham[h]$, the union over $N\in\semigrposint$ of the ranges of the spectral projections
$\fun{\msrprb_{\physham[h]}}{\closedinterval{N^{-1}}{N}}$, regarded as real subspaces of $\sphilbfrak{h}_{\fldreal}$, is a test space.
\end{ex}

\begin{lem}[Weighted density]\label{lem:weighted-density}
Let $\varphi\colon\openinterval{0}{\infty}\to\openinterval{0}{\infty}$ be a Borel function that is bounded on compact subsets, and
assume $\mathcal{T}_{\fldreal}\subset\dom\fun{\varphi}{\physham[h]}$.
Then $\fun{\varphi}{\physham[h]}\mathcal{T}_{\fldreal}$ is dense in $\sphilbfrak{h}_{\fldreal}$.
\end{lem}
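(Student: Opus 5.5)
The argument is by orthogonal complements. We show that a vector orthogonal to \(\fun{\varphi}{\physham[h]}\mathcal{T}_{\fldreal}\) must be orthogonal to \(\mathcal{T}_{\fldreal}\) itself, at least after cutting the spectrum down to a compact subinterval of \(\openinterval{0}{\infty}\), where \(\varphi\) and \(1/\varphi\) are both harmless.

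Let \(w\in\sphilbfrak{h}_{\fldreal}\) with \(\bkt{w}{\fun{\varphi}{\physham[h]}f}=0\) for all \(f\in\mathcal{T}_{\fldreal}\). Fix \(N\in\semigrposint\) and let \(P_N=\fun{\msrprb_{\physham[h]}}{\closedinterval{N^{-1}}{N}}\). Since \(\varphi\) is bounded on \(\closedinterval{N^{-1}}{N}\), the operator \(\fun{\varphi}{\physham[h]}P_N\) is bounded and self-adjoint.

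The key step is to show \(P_Nw=0\), and the invariance \eqref{eq:test-space-invariance} is what makes this work. The idea is that \(\mathcal{T}_{\fldreal}\) is invariant under a family of bounded Borel functions of \(\physham[h]\) that is rich enough to approximate the cut-off \(\fndef{\closedinterval{N^{-1}}{N}}/\varphi\) in a suitable sense.

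\begin{itemize}
\item Let \(\mathcal{A}\) be the real algebra generated by \(\fun{\eta_{i,\sigma}}{\physham[h]}\) for \(i\in\setone{0,1}\) and \(\sigma\in\closedinterval{-\frac{\sminvtemperature}{2}}{0}\). By \eqref{eq:test-space-invariance}, \(\mathcal{T}_{\fldreal}\) is invariant under \(\mathcal{A}\).
\item For every \(a\in\mathcal{A}\) and \(f\in\mathcal{T}_{\fldreal}\) we have \(af\in\mathcal{T}_{\fldreal}\subset\dom\fun{\varphi}{\physham[h]}\). Hence \(\bkt{w}{\fun{\varphi}{\physham[h]}af}=0\), and \(\fun{\varphi}{\physham[h]}a=a\fun{\varphi}{\physham[h]}\) on \(\mathcal{T}_{\fldreal}\).
\item The functions \(\eta_{1,\sigma}\) separate points of \(\openinterval{0}{\infty}\): the map \(\lambda\mapsto\eta_{1,-\sminvtemperature/4}(\lambda)=1/(2\cosh(\sminvtemperature\lambda/4))\) is strictly decreasing. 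So the unital algebra they generate (including \(\eta_{0,0}=1\)) is dense, by Stone--Weierstrass, in the continuous functions on the compactification \(\closedinterval{0}{\infty}\) restricted to \(\openinterval{0}{\infty}\).
\item Every such continuous function \(\psi\) is a uniform limit of functions in \(\mathcal{A}\). Taking limits, \(\bkt{w}{\fun{\psi}{\physham[h]}\fun{\varphi}{\physham[h]}f}=0\) for all bounded continuous \(\psi\), provided \(\fun{\varphi}{\physham[h]}f\) is a fixed vector. It is, since \(f\in\dom\fun{\varphi}{\physham[h]}\).
\item Bounded pointwise limits (dominated convergence in the spectral measure of the pair \(w,\fun{\varphi}{\physham[h]}f\)) extend this to all bounded Borel \(\psi\).
\item Choose \(\psi=\fndef{\closedinterval{N^{-1}}{N}}/\varphi\), which is bounded only if \(\varphi\) is bounded below on compacts. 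This is the main obstacle.
\end{itemize}

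To handle the obstacle, I would instead take \(\psi=\fndef{E}/\varphi\) with \(E=\closedinterval{N^{-1}}{N}\cap\setone{\varphi\geq1/M}\). This gives \(\bkt{w}{\fun{\msrprb_{\physham[h]}}{E}f}=0\) for all \(f\in\mathcal{T}_{\fldreal}\). By density of \(\mathcal{T}_{\fldreal}\), \(\fun{\msrprb_{\physham[h]}}{E}w=0\). Let \(M\to\infty\); since \(\varphi>0\) everywhere, the sets \(E\) increase to \(\closedinterval{N^{-1}}{N}\), so \(P_Nw=0\).

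Finally let \(N\to\infty\). Because \(\physham[h]\) is injective, the spectral measure is carried by \(\openinterval{0}{\infty}\), so \(P_N\to1\) strongly. Hence \(w=0\), and \(\fun{\varphi}{\physham[h]}\mathcal{T}_{\fldreal}\) is dense in \(\sphilbfrak{h}_{\fldreal}\).
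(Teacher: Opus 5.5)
Your proof is correct, and it shares the paper's first ingredient. The test space is invariant under polynomials in $\kappa(h)$, where $\kappa(\lambda)=(2\cosh(\sminvtemperature\lambda/4))^{-1}=\eta_{0,-\sminvtemperature/4}(\lambda)=\eta_{1,-\sminvtemperature/4}(\lambda)$ separates the points of $[0,\infty]$. Where you differ is in how $\varphi$ is eliminated.

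The paper approximates only the spectral projection $P=\msrprb_{h}([\varepsilon,M])$, strongly, by polynomials in $\kappa(h)$. It deduces that $Pz$ is again orthogonal to $\varphi(h)\mathcal{T}_{\fldreal}$. It then moves the bounded self-adjoint operator $\varphi(h)P$ across the inner product: this gives $\varphi(h)Pz\perp\mathcal{T}_{\fldreal}$, hence $\varphi(h)Pz=0$, hence $Pz=0$ because $\varphi>0$. This is the step where local boundedness of $\varphi$ is used, to have $Pz\in\dom\varphi(h)$.

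You instead show that the finite signed measure $\langle w,\msrprb_h(\cdot)\varphi(h)f\rangle$ vanishes. You then test against the bounded Borel function $\fndef{E}/\varphi$ with $E=\{\varphi\geq 1/M\}$, which cancels $\varphi$ outright. This gives a slightly stronger lemma. Your argument uses only $\varphi>0$ and $\mathcal{T}_{\fldreal}\subset\dom\varphi(h)$, never the boundedness of $\varphi$ on compacts. The operator $\varphi(h)P_N$ you introduce plays no role, and the cut-off $[N^{-1},N]$ can be dropped, since $\{\varphi\geq 1/M\}\uparrow(0,\infty)$ already. The paper's route, in exchange, avoids the passage to arbitrary bounded Borel functions and only approximates one indicator.

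Two small imprecisions, neither affecting the conclusion:
\begin{itemize}
\item Uniform approximation gives the identity only for $\psi$ continuous on $[0,\infty]$, not for every bounded continuous $\psi$ on $(0,\infty)$. This class still contains $C_c((0,\infty))$, which is enough for the measure to vanish.
\item The extension to all bounded Borel $\psi$ is a monotone-class or Riesz-uniqueness argument rather than a single bounded pointwise limit.
\end{itemize}
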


\begin{proof}
At $\sigma=-\frac{\sminvtemperature}{4}$, both functions \eqref{eq:sinh-ratios} equal
$\fun{\kappa}{\lambda}=\rbk{2\fun{\cosh}{\frac{\sminvtemperature\lambda}{4}}}^{-1}$, a strictly decreasing bijection of
$\openinterval{0}{\infty}$ onto $\openinterval{0}{\frac{1}{2}}$.
Hence $\fun{\kappa}{\physham[h]}$ and every real polynomial $\fun{\rbk{p\circ\kappa}}{\physham[h]}$ preserve $\mathcal{T}_{\fldreal}$, and these
operators commute with $\fun{\varphi}{\physham[h]}$ on $\mathcal{T}_{\fldreal}$.
For $0<\varepsilon<M$, the spectral projection $\fun{\msrprb_{\physham[h]}}{\closedinterval{\varepsilon}{M}}$ is $\fun{\rbk{\chi\circ\kappa}}{\physham[h]}$ for
the indicator function $\chi$ of a compact subinterval of $\openinterval{0}{\frac{1}{2}}$, and the Weierstrass theorem gives real
polynomials $p_m$ with $\sup_m\norm{\fun{\rbk{p_m\circ\kappa}}{\physham[h]}}<\infty$ and
$\fun{\rbk{p_m\circ\kappa}}{\physham[h]}\to\fun{\msrprb_{\physham[h]}}{\closedinterval{\varepsilon}{M}}$ strongly.
If $z\in\sphilbfrak{h}_{\fldreal}$ is orthogonal to $\fun{\varphi}{\physham[h]}\mathcal{T}_{\fldreal}$, then so is
$z'=\fun{\msrprb_{\physham[h]}}{\closedinterval{\varepsilon}{M}}z$; since $\fun{\varphi}{\physham[h]}\fun{\msrprb_{\physham[h]}}{\closedinterval{\varepsilon}{M}}$ is bounded and
self-adjoint, $\fun{\varphi}{\physham[h]}z'$ is orthogonal to the dense space $\mathcal{T}_{\fldreal}$, so $z'=0$ because $\varphi>0$.
Letting $\varepsilon\downarrow0$ and $M\uparrow\infty$ gives $z=0$.
\end{proof}

\begin{lem}[Totality of sharp-time vectors]\label{lem:sharp-time-total}
The vectors $j_tf$ with $t\in S_{\sminvtemperature}$ and $f\in\mathcal{T}_{\fldreal}$ have dense real linear span in
$\sphilbfrak{H}_{\sminvtemperature,C}$, and $\mathcal{T}_{\fldreal}$ is dense in $\sphilbfrak{R}$.
\end{lem}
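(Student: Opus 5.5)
We prove the second claim first and then use it for the first. Since $\norm{g}_{\sphilbfrak{R}}=\norm{\fun{\varpi}{\physham[h]}g}$ and $\fun{\varpi}{\physham[h]}$ maps $\dom\physham[h]^{-1}$ into $\sphilbfrak{h}_{\fldreal}$, the completion $\sphilbfrak{R}$ is identified with $\sphilbfrak{h}_{\fldreal}$ through the extension of $\fun{\varpi}{\physham[h]}$. Its range on $\dom\physham[h]^{-1}$ is dense, as it contains the spectral subspaces of $\closedinterval{N^{-1}}{N}$. Density of $\mathcal{T}_{\fldreal}$ in $\sphilbfrak{R}$ is therefore equivalent to density of $\fun{\varpi}{\physham[h]}\mathcal{T}_{\fldreal}$ in $\sphilbfrak{h}_{\fldreal}$. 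This is Lemma \ref{lem:weighted-density} with $\varphi=\varpi$. Indeed, $\varpi$ is positive and continuous on $\openinterval{0}{\infty}$, hence bounded on compact subsets, and $\mathcal{T}_{\fldreal}\subset\dom\physham[h]^{-1}=\dom\fun{\varpi}{\physham[h]}$ by Lemma \ref{lem:matsubara-sum}.

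For the totality, we work through the unitary $V$ and Fourier coefficients. Let $F\in\fun{\lp^2}{S_{\sminvtemperature};\sphilbfrak{h}_{\fldreal}}$ be orthogonal to $Vj_tf$ for all $t$ and all $f\in\mathcal{T}_{\fldreal}$. By \eqref{eq:real-parseval} and \eqref{eq:sharp-time-vector}, the function
$$\begin{aligned}
t\mapsto\sum_n\opreal\bkt{F_n}{\sminvtemperature^{-\frac{1}{2}}\napiernum^{-\imunit\nu_nt}\rbk{\nu_n^2+\physham[h]^2}^{-\frac{1}{2}}f}
\end{aligned}$$
vanishes identically. The series converges absolutely, uniformly in $t$, by Cauchy--Schwarz and Lemma \ref{lem:matsubara-sum}. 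Replacing $f$ by $\imunit f$ is not allowed, because $\mathcal{T}_{\fldreal}$ is only real. Instead, write the sum as $\opreal\sum_n\napiernum^{\imunit\nu_nt}c_n$ with $c_n=\sminvtemperature^{-1/2}\bkt{\rbk{\nu_n^2+\physham[h]^2}^{-\frac{1}{2}}F_n}{f}_{\sphilbfrak{h}}$. This is a real trigonometric series whose $n$-th coefficient is $\frac{1}{2}\rbk{c_n+\cmpconj{c_{-n}}}$. Hence $c_n+\cmpconj{c_{-n}}=0$ for all $n$, and taking real parts gives $\opreal\bkt{G_n}{f}=0$ with $G_n=\rbk{\nu_n^2+\physham[h]^2}^{-\frac{1}{2}}\rbk{F_n+F_{-n}^{\ast}}$ in a suitable sense.

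The reality issue is the main obstacle, and we handle it as follows. Since $F$ is $\sphilbfrak{h}_{\fldreal}$-valued, $F_{-n}$ is determined by $F_n$ only in the complex structure of $\sphilbfrak{h}$, and there is no conjugation on $\sphilbfrak{h}$. To avoid this, we test directly against time integrals: for a real trigonometric polynomial $\psi$, integrate $\psi(t)j_tf$ in $t$. Its $V$-image is $V^{-1}C^{\frac{1}{2}}$-type with coefficients $\psi_n\rbk{\nu_n^2+\physham[h]^2}^{-\frac{1}{2}}f$. Taking $\psi=\cos(\nu_m\cdot)$ and $\psi=\sin(\nu_m\cdot)$ shows that, for every fixed $m$, $F$ is orthogonal to $\varphi\otimes\rbk{\nu_m^2+\physham[h]^2}^{-\frac{1}{2}}f$ for $\varphi$ ranging over the real span of $\cos(\nu_mt),\sin(\nu_mt)$. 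This uses that the real $t$-space $\fun{\lp^2}{S_{\sminvtemperature};\fldreal}$ tensored with $\sphilbfrak{h}_{\fldreal}$ is the spacetime space, so no complex structure on $\sphilbfrak{h}$ is needed. Then Lemma \ref{lem:weighted-density} with $\varphi(\lambda)=\rbk{\nu_m^2+\lambda^2}^{-\frac{1}{2}}$ applies; this $\varphi$ is positive, locally bounded, and defined on $\mathcal{T}_{\fldreal}$ (for $m=0$ it is $\lambda^{-1}$). It shows that the vectors $\rbk{\nu_m^2+\physham[h]^2}^{-\frac{1}{2}}f$ are dense in $\sphilbfrak{h}_{\fldreal}$. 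Consequently $F$ is orthogonal to every $\varphi\otimes g$ with $\varphi$ a real trigonometric polynomial and $g\in\sphilbfrak{h}_{\fldreal}$, and these are total, so $F=0$. Since $V$ is unitary, the span of the $j_tf$ is dense in $\sphilbfrak{H}_{\sminvtemperature,C}$.

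It remains to check that the Bochner integral $\int\psi(t)j_tf\,dt$ lies in the closed span of the $j_tf$. This holds because $t\mapsto j_tf$ is continuous in $\sphilbfrak{H}_{\sminvtemperature,C}$, by Proposition \ref{prop:sharp-time-vectors} and the strong continuity of $u_\tau$, so the integral is a norm limit of Riemann sums. Its coefficients are computed by exchanging the integral with the continuous map $V$.
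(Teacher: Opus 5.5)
Your proof is correct. For the second claim you argue exactly as the paper does, via Lemma~\ref{lem:weighted-density} with $\varphi=\varpi$. For totality you take a genuinely different route.

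The paper works with the pointwise identity in $t$. Set $W_n=(Vw)_n$, $B_n=(\nu_n^2+h^2)^{-1/2}$ and $\gamma_n=\langle W_n,B_nf\rangle_{\mathfrak h}$. Then the vanishing of $\langle w,j_tf\rangle=\frac{1}{2\sqrt{\beta}}\sum_n e^{-i\nu_nt}(\gamma_n+\overline{\gamma_{-n}})$ for all $t$ forces $\gamma_n+\overline{\gamma_{-n}}=0$. Since $B_{-n}=B_n$, the real and imaginary parts say that $W_n+W_{-n}$ and $i(W_n-W_{-n})$ are orthogonal to $B_n\mathcal{T}_{\mathbb R}$ in $\mathfrak h_{\mathbb R}$.

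The key point is that the factor $i$ is placed on the orthogonal vector, not on the test vector. That vector lives in the complex space $\mathfrak h$, where $B_n$ is complex linear, so the reality issue you flag as the main obstacle is not one. Your abandoned first computation would have closed directly had you also taken the imaginary part. The real part gives $B_n(F_n+F_{-n})$ and the imaginary part gives $iB_n(F_n-F_{-n})$; no conjugation of $F_{-n}$ is involved.

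Your route instead smears $j_tf$ against $\cos(\nu_m\cdot)$ and $\sin(\nu_m\cdot)$ and identifies $V\int\psi(t)j_tf\,dt=\psi\otimes B_mf$, using $B_{-m}=B_m$. This reduces the problem to the real product basis already used in Lemma~\ref{lem:regularizer-existence} and never touches the complex structure of $\mathfrak h$, which is arguably more transparent. It costs a small integration argument. You do not need the Riemann-sum step placing the integral in the closed span, since $\langle w,\int\psi(t)j_tf\,dt\rangle=\int\psi(t)\langle w,j_tf\rangle\,dt=0$ already suffices. The paper needs only absolute summability and uniqueness of Fourier coefficients.

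Both arguments rest on the same mode-by-mode application of Lemma~\ref{lem:weighted-density} with $\varphi(\lambda)=(\nu_m^2+\lambda^2)^{-1/2}$.
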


\begin{proof}
Let $w$ be orthogonal to every $j_tf$, let $W_n=\rbk{Vw}_n$, and let $B_n=\rbk{\nu_n^2+\physham[h]^2}^{-\frac{1}{2}}$.
For $f\in\mathcal{T}_{\fldreal}$, put $\gamma_n=\bkt{W_n}{B_nf}_{\sphilbfrak{h}}$, which is summable by Lemma \ref{lem:matsubara-sum}.
By \eqref{eq:real-parseval} and \eqref{eq:sharp-time-vector},
$$\begin{aligned}
0
=
\bkt{w}{j_tf}_{\sphilbfrak{H}_{\sminvtemperature,C}}
&=
\frac{1}{2\sqrt{\sminvtemperature}}\sum_{n\in\ringratint}\napiernum^{-\imunit\nu_nt}\rbk{\gamma_n+\cmpconj{\gamma_{-n}}},
\quad t\in S_{\sminvtemperature},
\end{aligned}$$
so $\gamma_n+\cmpconj{\gamma_{-n}}=0$.
Since $B_{-n}=B_n$, the real and imaginary parts give that $W_n+W_{-n}$ and $\imunit\rbk{W_n-W_{-n}}$ are orthogonal to
$B_n\mathcal{T}_{\fldreal}$ in $\sphilbfrak{h}_{\fldreal}$.
Lemma \ref{lem:weighted-density} with $\fun{\varphi}{\lambda}=\rbk{\nu_n^2+\lambda^2}^{-\frac{1}{2}}$ shows that $B_n\mathcal{T}_{\fldreal}$ is
dense, so $W_n=0$ for every $n$ and $w=0$.
The density of $\mathcal{T}_{\fldreal}$ in $\sphilbfrak{R}$ is Lemma \ref{lem:weighted-density} with $\varphi=\varpi$.
\end{proof}

\subsection{Half-circle projections}\label{half-circle-projections}

In this subsection, we prove the projection formula that gives the Markov property.

For \(I\subset S_{\sminvtemperature}\), let \(\sphilbfrak{k}_{\sminvtemperature,I}\) be the closed real linear span of the vectors \(j_tf\) with \(t\in I\) and \(f\in\mathcal{T}_{\fldreal}\), and let \(e_I\) be the orthogonal projection of \(\sphilbfrak{H}_{\sminvtemperature,C}\) onto \(\sphilbfrak{k}_{\sminvtemperature,I}\). Let \begin{equation}\label{eq:half-circles}
\begin{aligned}
I_+
&=
\closedinterval{0}{\frac{\sminvtemperature}{2}},
&
I_-
&=
\closedinterval{-\frac{\sminvtemperature}{2}}{0},
&
\partial
&=
\setone{0,\frac{\sminvtemperature}{2}}.
\end{aligned}
\end{equation}

\begin{prop}[Half-circle projections]\label{prop:half-circle-projections}
For $s\in I_-$ and $f\in\mathcal{T}_{\fldreal}$,
\begin{equation}\label{eq:half-circle-projection}
\begin{aligned}
e_{I_+}j_sf
&=
j_0\fun{\eta_{0,s}}{\physham[h]}f+j_{\frac{\sminvtemperature}{2}}\fun{\eta_{1,s}}{\physham[h]}f,
\end{aligned}
\end{equation}
and
\begin{equation}\label{eq:half-circle-projection-product}
\begin{aligned}
e_{I_+}e_{I_-}
&=
e_{\partial}.
\end{aligned}
\end{equation}
\end{prop}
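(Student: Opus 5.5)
The proof of \eqref{eq:half-circle-projection} has two parts. First, the right-hand side lies in $\sphilbfrak{k}_{\sminvtemperature,I_+}$: by \eqref{eq:test-space-invariance} the vectors $\fun{\eta_{0,s}}{\physham[h]}f$ and $\fun{\eta_{1,s}}{\physham[h]}f$ belong to $\mathcal{T}_{\fldreal}$, and $0,\frac{\sminvtemperature}{2}\in I_+$. Second, the difference $j_sf-j_0\fun{\eta_{0,s}}{\physham[h]}f-j_{\frac{\sminvtemperature}{2}}\fun{\eta_{1,s}}{\physham[h]}f$ is orthogonal to every $j_tg$ with $t\in I_+$ and $g\in\mathcal{T}_{\fldreal}$. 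By Proposition \ref{prop:sharp-time-vectors} the inner products are Green kernels. Since $\physham[h]$ is the only operator involved, everything reduces to a scalar identity for each $\lambda>0$, to be integrated against the spectral measures $\bkt{g}{\fun{\msrprb_{\physham[h]}}{\cdot}f}$. Integrability follows from $0\leq\eta_{i,s}\leq1$, the bound $M_u\leq M_0$, and $f,g\in\dom\physham[h]^{-1}$.

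The scalar identity to verify is
\begin{equation*}
\fun{M_{\fun{d_{\sminvtemperature}}{s,t}}}{\lambda}=\fun{\eta_{0,s}}{\lambda}\fun{M_{t}}{\lambda}+\fun{\eta_{1,s}}{\lambda}\fun{M_{\frac{\sminvtemperature}{2}-t}}{\lambda}
\end{equation*}
for $s\in\closedinterval{-\frac{\sminvtemperature}{2}}{0}$ and $t\in\closedinterval{0}{\frac{\sminvtemperature}{2}}$. Here $\fun{d_{\sminvtemperature}}{0,t}=t$ and $\fun{d_{\sminvtemperature}}{\frac{\sminvtemperature}{2},t}=\frac{\sminvtemperature}{2}-t$. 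Also $t-s\in\closedinterval{0}{\sminvtemperature}$, and $M_u=M_{\sminvtemperature-u}$, so the left side equals $\fun{M_{t-s}}{\lambda}$.

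Write $\fun{M_u}{\lambda}=\fun{\cosh}{\rbk{\frac{\sminvtemperature}{2}-u}\lambda}/\rbk{\lambda\fun{\sinh}{\frac{\sminvtemperature\lambda}{2}}}$. As a function of $s$, each side solves $\partial_s^2y=\lambda^2y$ on $\closedinterval{-\frac{\sminvtemperature}{2}}{0}$. The two sides agree at $s=0$, where $\eta_{0,0}=1$ and $\eta_{1,0}=0$. They also agree at $s=-\frac{\sminvtemperature}{2}$, where $\eta_{0}=0$, $\eta_{1}=1$, and $M_{t+\frac{\sminvtemperature}{2}}=M_{\frac{\sminvtemperature}{2}-t}$. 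Uniqueness for this Dirichlet problem then gives the identity; this is the usual hyperbolic interpolation. Thus \eqref{eq:half-circle-projection} follows from the orthogonality argument.

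For \eqref{eq:half-circle-projection-product}, I use the density statement of Lemma \ref{lem:sharp-time-total} localized to $I_-$. The space $\sphilbfrak{k}_{\sminvtemperature,I_-}$ is spanned by $j_sf$ with $s\in I_-$ and $f\in\mathcal{T}_{\fldreal}$. By \eqref{eq:half-circle-projection}, $e_{I_+}$ maps these vectors into $\sphilbfrak{k}_{\sminvtemperature,\partial}$. Hence $e_{I_+}e_{I_-}$ has range in $\sphilbfrak{k}_{\sminvtemperature,\partial}$, by continuity and density of the span in $\sphilbfrak{k}_{\sminvtemperature,I_-}$.

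Since $\partial\subset I_+\cap I_-$, we have $e_\partial\leq e_{I_\pm}$. Therefore $e_\partial e_{I_+}e_{I_-}=e_\partial$, and by the range statement $e_\partial e_{I_+}e_{I_-}=e_{I_+}e_{I_-}$. Together these give \eqref{eq:half-circle-projection-product}.

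I expect the main obstacle to be the rigorous passage from the scalar identity to the vector identity. One must check that the difference vector's inner products with $j_tg$ are given exactly by the integrated quadratic forms. This requires $\fun{\eta_{i,s}}{\physham[h]}f\in\dom\physham[h]^{-1}$, which the test-space assumption provides, and dominated convergence near $\lambda\to0$, where $M$ behaves like $\lambda^{-2}$; the $\dom\physham[h]^{-1}$ hypothesis covers this.
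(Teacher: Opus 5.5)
Your proposal is correct and follows essentially the same route as the paper. You reduce the inner products with $j_tg$, $t\in I_+$, $g\in\mathcal{T}_{\mathbb{R}}$, through the Green kernel to a scalar identity integrated against the spectral measure, use the test-space invariance to place the right-hand side in $\mathfrak{k}_{\beta,\partial}$, and obtain $e_{I_+}e_{I_-}=e_{\partial}$ from the range inclusion together with $\mathfrak{k}_{\beta,\partial}\subset\mathfrak{k}_{\beta,I_+}\cap\mathfrak{k}_{\beta,I_-}$.

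The only difference is the variable in which you solve $y''=\lambda^2y$. You use $s$, with boundary points $s=0$ and $s=-\frac{\beta}{2}$, where the boundary values follow at once from $\eta_{0,0}=1$, $\eta_{1,0}=0$ and $M_{\frac{\beta}{2}+t}=M_{\frac{\beta}{2}-t}$. The paper uses $t$, with boundary points $t=0$ and $t=\frac{\beta}{2}$, and checks them with the addition formulas.
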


\begin{proof}
For $\lambda>0$ and $s\in I_-$, both sides of
$$\begin{aligned}
\fun{\cosh}{\rbk{\frac{\sminvtemperature}{2}+s-t}\lambda}
&=
\fun{\eta_{0,s}}{\lambda}\fun{\cosh}{\rbk{\frac{\sminvtemperature}{2}-t}\lambda}
+\fun{\eta_{1,s}}{\lambda}\fun{\cosh}{t\lambda}
\end{aligned}$$
solve $y''=\lambda^2y$ in $t$, and the addition formulas show that they agree at $t=0$ and $t=\frac{\sminvtemperature}{2}$; hence the
identity holds for $t\in I_+$.
For $t\in I_+$ and $s\in I_-$, one has $0\leq t-s\leq\sminvtemperature$, and \eqref{eq:green-kernel} reads
$$\begin{aligned}
\fun{\opform{G}_{\sminvtemperature}}{s,f;t,g}
&=
\bkt{f}{\frac{\fun{\cosh}{\rbk{\frac{\sminvtemperature}{2}-\rbk{t-s}}\physham[h]}}{2\physham[h]\fun{\sinh}{\frac{\sminvtemperature\physham[h]}{2}}}g}_{\sphilbfrak{h}_{\fldreal}}.
\end{aligned}$$
Dividing the scalar identity by $2\lambda\fun{\sinh}{\frac{\sminvtemperature\lambda}{2}}$ and integrating against the spectral measure $\msrprb_{\physham[h]}$ gives,
by Proposition \ref{prop:sharp-time-vectors},
$$\begin{aligned}
\bkt{j_sf}{j_tg}_{\sphilbfrak{H}_{\sminvtemperature,C}}
&=
\bkt{j_0\fun{\eta_{0,s}}{\physham[h]}f+j_{\frac{\sminvtemperature}{2}}\fun{\eta_{1,s}}{\physham[h]}f}{j_tg}_{\sphilbfrak{H}_{\sminvtemperature,C}},
\quad t\in I_+,\ g\in\mathcal{T}_{\fldreal}.
\end{aligned}$$
By \eqref{eq:test-space-invariance}, the vector on the right-hand side belongs to $\sphilbfrak{k}_{\sminvtemperature,\partial}\subset\sphilbfrak{k}_{\sminvtemperature,I_+}$,
which proves \eqref{eq:half-circle-projection}.

By \eqref{eq:half-circle-projection}, $e_{I_+}$ maps $\sphilbfrak{k}_{\sminvtemperature,I_-}$ into $\sphilbfrak{k}_{\sminvtemperature,\partial}$.
Hence, for $v\in\sphilbfrak{H}_{\sminvtemperature,C}$, both $e_{I_+}e_{I_-}v$ and $e_{\partial}v$ lie in $\sphilbfrak{k}_{\sminvtemperature,\partial}$, and for
$w\in\sphilbfrak{k}_{\sminvtemperature,\partial}\subset\sphilbfrak{k}_{\sminvtemperature,I_+}\cap\sphilbfrak{k}_{\sminvtemperature,I_-}$,
$\bkt{e_{I_+}e_{I_-}v}{w}=\bkt{v}{e_{I_-}e_{I_+}w}=\bkt{v}{w}=\bkt{e_{\partial}v}{w}$.
\end{proof}

Proposition \ref{prop:half-circle-projections} corresponds to the formula for the projections onto the closed intervals of \(S_{\sminvtemperature}\) and the \(\sminvtemperature\)-Markov property in \cite[Proposition 21.54(1) and (7), p.~634]{DerezinskiGerard001}. Its proof uses only the non-expansive operators \(\fun{\eta_{i,s}}{\physham[h]}\) and the test space, so no spectral gap enters.

\section{\texorpdfstring{The Gaussian measure and the \(\sminvtemperature\)-Markov property}{The Gaussian measure and the \textbackslash sminvtemperature-Markov property}}\label{sec:markov}

In this section, we realize the Gaussian field with covariance \(\frac{1}{2}\bkt{\cdot}{\cdot}_{\sphilbfrak{H}_{\sminvtemperature,C}}\) by a Borel measure on a Hilbert space on which the circle translations and the reflection act, and we prove that the resulting path space is a \(\sminvtemperature\)-Markov path space.

\subsection{The regularized Gaussian measure}\label{the-regularized-gaussian-measure}

In this subsection, we construct the Hilbert scale and the Gaussian measure.

\begin{defn}[Invariant regularizer]\label{def:invariant-regularizer}
An invariant regularizer is a self-adjoint operator $\opform{R}_{\sminvtemperature,C}\geq1$ on $\sphilbfrak{H}_{\sminvtemperature,C}$ such that
$\opform{R}_{\sminvtemperature,C}^{-1}$ is Hilbert--Schmidt and $\opform{R}_{\sminvtemperature,C}$ commutes with $u_\tau$, $\tau\in\fldreal$, and $r$.
For such an operator, let $\prbqspace_+=\dom\opform{R}_{\sminvtemperature,C}$ with the norm $\norm{\opform{R}_{\sminvtemperature,C}\,\cdot}$, and let
$\prbqspace_-$ be the completion of $\sphilbfrak{H}_{\sminvtemperature,C}$ in the norm $\norm{\opform{R}_{\sminvtemperature,C}^{-1}\,\cdot}$.
The pairing $\bkt{\cdot}{\cdot}_{\prbqspace_-,\prbqspace_+}$ is the continuous extension of the inner product of
$\sphilbfrak{H}_{\sminvtemperature,C}$ to $\prbqspace_-\times\prbqspace_+$.
In what follows, an invariant regularizer is fixed.
\end{defn}

\begin{lem}[Existence of an invariant regularizer]\label{lem:regularizer-existence}
Let $\fml{\psi_m}{m\in\mathcal{M}}$ be a complete orthonormal system of $\sphilbfrak{h}_{\fldreal}$ with a countable index set $\mathcal{M}$, let
$c_0=\sminvtemperature^{-\frac{1}{2}}$, $\fun{c_n}{t}=\rbk{\frac{2}{\sminvtemperature}}^{\frac{1}{2}}\fun{\cos}{\nu_nt}$, and
$\fun{s_n}{t}=\rbk{\frac{2}{\sminvtemperature}}^{\frac{1}{2}}\fun{\sin}{\nu_nt}$ for $n\in\semigrposint$, and let
$\semigrposint\ni j\mapsto\pairbk{n_j,m_j}\in\monnat\times\mathcal{M}$ be a bijection.
Let $\sphilbfrak{V}_j$ be $V^{-1}$ of the span of $c_0\otimes\psi_{m_j}$ if $n_j=0$ and of the span of
$c_{n_j}\otimes\psi_{m_j}$ and $s_{n_j}\otimes\psi_{m_j}$ if $n_j\geq1$, let $P_j$ be the orthogonal projection onto $\sphilbfrak{V}_j$, and let
$d_j=\dim\sphilbfrak{V}_j$.
Then $\opform{R}_{\sminvtemperature,C}=\sum_{j\in\semigrposint}2^j\sqrt{d_j}P_j$ is an invariant regularizer with
$\norm{\opform{R}_{\sminvtemperature,C}^{-1}}_{\txthilbertschmidt}^2=\frac{1}{3}$.
\end{lem}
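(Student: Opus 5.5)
The plan is to first show that the subspaces $\sphilbfrak{V}_j$ give an orthogonal decomposition of $\sphilbfrak{H}_{\sminvtemperature,C}$. Once that is in place, the operator is a diagonal operator with explicit eigenvalues, and each required property can be read off.

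\emph{Step 1: the decomposition.} The family $\{c_0\}\cup\{c_n,s_n\}_{n\in\semigrposint}$ is a complete orthonormal system of $\fun{\lp^2}{S_{\sminvtemperature};\fldreal}$. Hence the products $c_0\otimes\psi_m$, $c_n\otimes\psi_m$, $s_n\otimes\psi_m$ form a complete orthonormal system of $\fun{\lp^2}{S_{\sminvtemperature};\fldreal}\otimes\sphilbfrak{h}_{\fldreal}\cong\fun{\lp^2}{S_{\sminvtemperature};\sphilbfrak{h}_{\fldreal}}$. Since $j\mapsto\pairbk{n_j,m_j}$ is a bijection, the spans indexed by $j$ are mutually orthogonal and their closed sum is the whole space. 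Because $V$ is unitary (remark after Definition \ref{def:covariance-hilbert-space}), the $\sphilbfrak{V}_j$ are mutually orthogonal and $\sum_jP_j=1$ strongly. Here $d_j=1$ when $n_j=0$ and $d_j=2$ when $n_j\geq1$.

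\emph{Step 2: self-adjointness and the Hilbert--Schmidt norm.} The operator $\opform{R}_{\sminvtemperature,C}$ is diagonal, with eigenvalue $2^j\sqrt{d_j}\geq2$ on $\sphilbfrak{V}_j$. Defined on its maximal domain $\set{v}{\sum_j4^jd_j\norm{P_jv}^2<\infty}$, it is therefore self-adjoint and satisfies $\opform{R}_{\sminvtemperature,C}\geq1$. Its inverse is $\sum_j2^{-j}d_j^{-1/2}P_j$. To compute the Hilbert--Schmidt norm, sum over an orthonormal basis adapted to the decomposition: each $\sphilbfrak{V}_j$ contributes $d_j\cdot 4^{-j}d_j^{-1}=4^{-j}$. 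This gives $\norm{\opform{R}_{\sminvtemperature,C}^{-1}}_{\txthilbertschmidt}^2=\sum_{j\geq1}4^{-j}=\frac13$.

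\emph{Step 3: commutation.} This step carries the only real content. It suffices to show that each $\sphilbfrak{V}_j$ is invariant under $u_\tau$ and $r$. Indeed, orthogonal operators preserving every $\sphilbfrak{V}_j$ also preserve the orthogonal complements and so commute with every $P_j$. They then commute with the diagonal operator $\opform{R}_{\sminvtemperature,C}$ on its domain, in the sense $u_\tau\opform{R}_{\sminvtemperature,C}\subset\opform{R}_{\sminvtemperature,C}u_\tau$, with equality since $u_\tau^{-1}=u_{-\tau}$ has the same property.

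Because $Vu_\tau=u_\tau V$ and $Vr=rV$, invariance can be checked on the $\lp^2$ side.
\begin{itemize}
\item For $n_j=0$: $c_0$ is constant, so $u_\tau$ and $r$ fix $c_0\otimes\psi_m$.
\item For $n_j\geq1$: the angle-addition formulas give $u_\tau c_n=\cos(\nu_n\tau)c_n+\sin(\nu_n\tau)s_n$ and $u_\tau s_n=\cos(\nu_n\tau)s_n-\sin(\nu_n\tau)c_n$. The reflection acts by $rc_n=c_n$ and $rs_n=-s_n$.
\end{itemize}
In every case the span is preserved, with $\psi_m$ untouched. The reason is that $u_\tau$ and $r$ act only on the time variable.

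The main point to be careful about is exactly this pairing: $c_n$ and $s_n$ must sit in the same block $\sphilbfrak{V}_j$ with the same eigenvalue. This is why the blocks are two-dimensional for $n_j\geq1$, and why the weight $\sqrt{d_j}$ normalizes the Hilbert--Schmidt contributions to $4^{-j}$.
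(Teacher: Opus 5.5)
Your proposal is correct and follows essentially the same route as the paper's proof. Both arguments use the orthogonal decomposition obtained from the product orthonormal basis transported by the unitary \(V\), the invariance of each block under \(u_\tau\) (a rotation of the \(c_n\), \(s_n\) pair) and under \(r\) (which fixes \(c_n\) and flips the sign of \(s_n\)), and the sum \(\sum_j d_j(2^j\sqrt{d_j})^{-2}=\sum_j 4^{-j}=\frac{1}{3}\); your only addition is spelling out the maximal domain, self-adjointness, and the domain form of the commutation, which the paper leaves implicit.
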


\begin{proof}
The products of the real trigonometric system and $\fml{\psi_m}{m}$ form a complete orthonormal system of
$\fun{\lp^2}{S_{\sminvtemperature};\sphilbfrak{h}_{\fldreal}}$, so $\sphilbfrak{H}_{\sminvtemperature,C}=\bigoplus_j\sphilbfrak{V}_j$.
Each $\sphilbfrak{V}_j$ is invariant under $u_\tau$, which acts as a rotation of $\pairbk{c_n\otimes\psi_m,s_n\otimes\psi_m}$, and under $r$,
which fixes $c_n\otimes\psi_m$ and changes the sign of $s_n\otimes\psi_m$; hence $P_j$ commutes with $u_\tau$ and $r$.
Finally, $\sum_jd_j\rbk{2^j\sqrt{d_j}}^{-2}=\sum_j4^{-j}=\frac{1}{3}$.
\end{proof}

The operators \(u_\tau\) and \(r\) preserve \(\prbqspace_+\) and the norms of \(\prbqspace_\pm\). Their continuous extensions \(\tilde{u}_\tau\) and \(\tilde{r}\) to \(\prbqspace_-\) are orthogonal operators with the relations after \eqref{eq:circle-translation-reflection}, and \begin{equation}\label{eq:dual-symmetries}
\begin{aligned}
\bkt{\tilde{u}_{-\tau}q}{v}_{\prbqspace_-,\prbqspace_+}
&=
\bkt{q}{u_\tau v}_{\prbqspace_-,\prbqspace_+},
&
\bkt{\tilde{r}q}{v}_{\prbqspace_-,\prbqspace_+}
&=
\bkt{q}{rv}_{\prbqspace_-,\prbqspace_+},
\quad q\in\prbqspace_-,\ v\in\prbqspace_+.
\end{aligned}
\end{equation}

For \(f\in\prbqspace_+\), let \(\fun{\opfocksegal}{f}\) be the continuous linear function \(q\mapsto\bkt{q}{f}_{\prbqspace_-,\prbqspace_+}\) on \(\prbqspace_-\).

\begin{prop}[Gaussian measure]\label{prop:gaussian-measure}
There is a unique Borel probability measure $\msrprb_{\sminvtemperature,0}$ on $\prbqspace_-$ such that
\begin{equation}\label{eq:characteristic-functional}
\begin{aligned}
\sqfun{\prbexp_{\msrprb_{\sminvtemperature,0}}}{\napiernum^{\imunit\fun{\opfocksegal}{f}}}
&=
\fnexp{-\frac{1}{4}\norm{f}_{\sphilbfrak{H}_{\sminvtemperature,C}}^2},
\quad f\in\prbqspace_+.
\end{aligned}
\end{equation}
The map $\opfocksegal$ extends uniquely to a continuous linear map
$\opfocksegal\colon\sphilbfrak{H}_{\sminvtemperature,C}\to\fun{\lp^2}{\prbqspace_-,\msrprb_{\sminvtemperature,0}}$ with
$\sqfun{\prbexp_{\msrprb_{\sminvtemperature,0}}}{\fun{\opfocksegal}{v}\fun{\opfocksegal}{w}}=\frac{1}{2}\bkt{v}{w}_{\sphilbfrak{H}_{\sminvtemperature,C}}$,
every $\fun{\opfocksegal}{v}$ is centered Gaussian, and $\msrprb_{\sminvtemperature,0}$ is invariant under $\tilde{u}_\tau$, $\tau\in\fldreal$, and $\tilde{r}$.
\end{prop}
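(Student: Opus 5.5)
The plan is to build $\msrprb_{\sminvtemperature,0}$ as an explicit image of a product measure. Because $\opform{R}_{\sminvtemperature,C}\geq1$ is self-adjoint with Hilbert--Schmidt inverse, its inverse is compact. I therefore take an orthonormal eigenbasis $\fml{e_k}{k}$ of $\sphilbfrak{H}_{\sminvtemperature,C}$ with $\opform{R}_{\sminvtemperature,C}e_k=\rho_ke_k$ and $\sum_k\rho_k^{-2}=\norm{\opform{R}_{\sminvtemperature,C}^{-1}}_{\txthilbertschmidt}^2<\infty$. On a product space I take i.i.d.\ centered real Gaussians $\xi_k$ of variance $\frac{1}{2}$ and set $q=\sum_k\xi_ke_k$. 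In $\prbqspace_-$ one has $\norm{e_k}_{\prbqspace_-}=\rho_k^{-1}$, and the $\rho_ke_k$ are orthonormal in $\prbqspace_-$. Hence $\prbexp\norm{q}_{\prbqspace_-}^2=\frac{1}{2}\sum_k\rho_k^{-2}<\infty$, so the series converges almost surely and in $\lp^2$ to a $\prbqspace_-$-valued random variable. Its law is the candidate $\msrprb_{\sminvtemperature,0}$.

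For $f\in\prbqspace_+$ I would compute $\bkt{q}{f}_{\prbqspace_-,\prbqspace_+}=\sum_k\xi_k\bkt{e_k}{f}$, using the continuity of the pairing and the $\lp^2$ convergence. This is a centered Gaussian of variance $\frac{1}{2}\norm{f}^2$, which gives \eqref{eq:characteristic-functional}. Uniqueness follows because $\prbqspace_-$ is a separable Hilbert space (separability of $\sphilbfrak{H}_{\sminvtemperature,C}$ was noted after Definition \ref{def:covariance-hilbert-space}). Its Borel $\sigma$-algebra is generated by the functionals $\fun{\opfocksegal}{f}$, $f\in\prbqspace_+$, since $\prbqspace_+$ is the dual of $\prbqspace_-$ via the pairing. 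A Borel probability measure is then determined by its characteristic functional on this dual, by finite-dimensional marginals and the $\pi$--$\lambda$ theorem.

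For the extension, I would first note that \eqref{eq:characteristic-functional} implies $\fun{\opfocksegal}{f}$ is centered Gaussian with $\prbexp[\fun{\opfocksegal}{f}\fun{\opfocksegal}{g}]=\frac{1}{2}\bkt{f}{g}$ for $f,g\in\prbqspace_+$, by polarization applied to $f+g$. Hence $\opfocksegal$ is, up to the factor $2^{-1/2}$, an isometry from $\prbqspace_+$ with the $\sphilbfrak{H}_{\sminvtemperature,C}$-norm into $\lp^2$. Since $\dom\opform{R}_{\sminvtemperature,C}$ is dense in $\sphilbfrak{H}_{\sminvtemperature,C}$, it extends uniquely and continuously. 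The limits are centered Gaussian because $\lp^2$ limits of centered Gaussians are centered Gaussian: the characteristic functions converge to $\fnexp{-\frac{1}{4}\norm{v}^2}$.

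For the invariance, I compute the characteristic functional of $\pushoutrbk{\tilde{u}_\tau}\msrprb_{\sminvtemperature,0}$ at $f\in\prbqspace_+$. By \eqref{eq:dual-symmetries}, $\fun{\opfocksegal}{f}\circ\tilde{u}_\tau=\fun{\opfocksegal}{u_{-\tau}f}$. Since $u_{-\tau}$ preserves $\prbqspace_+$ and is orthogonal on $\sphilbfrak{H}_{\sminvtemperature,C}$, the characteristic functional is $\fnexp{-\frac{1}{4}\norm{u_{-\tau}f}^2}=\fnexp{-\frac{1}{4}\norm{f}^2}$. The same argument applies to $\tilde{r}$, and uniqueness then gives invariance. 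The measurability of $\tilde{u}_\tau$ and $\tilde{r}$ is automatic, since they are continuous. The only step needing care is the uniqueness argument, namely that the cylinder $\sigma$-algebra equals the Borel $\sigma$-algebra of the separable space $\prbqspace_-$; everything else is routine.
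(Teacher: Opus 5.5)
Your proposal is correct and follows the paper's argument. The paper also obtains uniqueness from the fact that the functionals $\phi(f)$, $f\in\mathcal{Q}_+$, are all the continuous linear functionals on the separable Hilbert space $\mathcal{Q}_-$, extends $\phi$ by isometry, and deduces invariance from \eqref{eq:dual-symmetries} together with uniqueness. The only difference is that the paper cites Lörinczi--Hiroshima for existence and the extension of $\phi$. You instead carry out the eigenbasis series construction directly, which the paper records as a remark right after the proposition.
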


\begin{proof}
The existence and the extension of $\opfocksegal$ are \cite[Lemma 1.119 and Theorem 1.120, pp.~95--97]{LorincziHiroshimaBetz003}
with $L=\frac{1}{2}$ and the Hilbert--Schmidt operator $\opform{R}_{\sminvtemperature,C}^{-1}$.
The uniqueness holds because the functions $\fun{\opfocksegal}{f}$, $f\in\prbqspace_+$, are all the continuous linear functionals of the
separable Hilbert space $\prbqspace_-$.
By \eqref{eq:dual-symmetries}, the images of $\msrprb_{\sminvtemperature,0}$ under $\tilde{u}_{-\tau}$ and $\tilde{r}$ have the characteristic
functional \eqref{eq:characteristic-functional}, so they equal $\msrprb_{\sminvtemperature,0}$.
\end{proof}

If \(\fml{\epsilon_k}{k\in\semigrposint}\) is a complete orthonormal system of \(\sphilbfrak{H}_{\sminvtemperature,C}\) consisting of eigenvectors of \(\opform{R}_{\sminvtemperature,C}\) and \(\fml{\xi_k}{k\in\semigrposint}\) are independent standard Gaussian random variables, then \(2^{-\frac{1}{2}}\sum_k\xi_k\epsilon_k\) converges almost surely in \(\prbqspace_-\), since \(\sum_k\norm{\epsilon_k}_{\prbqspace_-}^2=\norm{\opform{R}_{\sminvtemperature,C}^{-1}}_{\txthilbertschmidt}^2\), and its law is \(\msrprb_{\sminvtemperature,0}\). Since the variance of \(\fun{\opfocksegal}{f}\) is \(\frac{1}{2}\norm{f}_{\sphilbfrak{H}_{\sminvtemperature,C}}^2\), the Cameron--Martin space of \(\msrprb_{\sminvtemperature,0}\) in the sense of \cite[Section 2.4]{VladimirBogachev002} is \(\sphilbfrak{H}_{\sminvtemperature,C}\subset\prbqspace_-\) with the norm \(\sqrt{2}\norm{\cdot}_{\sphilbfrak{H}_{\sminvtemperature,C}}\).

\subsection{\texorpdfstring{The singular Gaussian \(\sminvtemperature\)-Markov path space}{The singular Gaussian \textbackslash sminvtemperature-Markov path space}}\label{the-singular-gaussian-sminvtemperature-markov-path-space}

In this subsection, we define the singular Gaussian \(\sminvtemperature\)-Markov path space and prove the \(\sminvtemperature\)-Markov property.

\begin{defn}[Singular Gaussian $\sminvtemperature$-Markov path space]\label{def:singular-gaussian-beta-markov-path-space}
For $t\in S_{\sminvtemperature}$ and $g\in\sphilbfrak{R}$, let
$\fun{\opfocksegal_t}{g}=\fun{\opfocksegal}{j_tg}\in\fun{\lp^2}{\prbqspace_-,\msrprb_{\sminvtemperature,0}}$ be the sharp-time field.
Let $\mblfmlfrak{S}$ be the $\msrprb_{\sminvtemperature,0}$-completion of the Borel $\sigma$-algebra of $\prbqspace_-$, and for
$I\subset S_{\sminvtemperature}$ let $\mblfmlfrak{S}_I$ be generated by $\fun{\opfocksegal_t}{f}$, $t\in I$, $f\in\mathcal{T}_{\fldreal}$, and the
null sets; set $\mblfmlfrak{S}_0=\mblfmlfrak{S}_{\setone{0}}$, and let $E_I$ be the conditional expectation onto $\mblfmlfrak{S}_I$.
For a function $F$ on $\prbqspace_-$ and $\tau\in\fldreal$, set
\begin{equation}\label{eq:path-space-automorphisms}
\begin{aligned}
U_\tau F
&=
F\circ\tilde{u}_{-\tau},
&
RF
&=
F\circ\tilde{r}.
\end{aligned}
\end{equation}
The singular Gaussian $\sminvtemperature$-Markov path space is the sextuple
$$\begin{aligned}
\rbk{\prbqspace_-,\mblfmlfrak{S},\mblfmlfrak{S}_0,\fml{U_t}{t\in\fldreal},R,\msrprb_{\sminvtemperature,0}}.
\end{aligned}$$
\end{defn}

By Propositions \ref{prop:sharp-time-vectors} and \ref{prop:gaussian-measure}, the sharp-time fields are centered Gaussian with \begin{equation}\label{eq:sharp-time-field-covariance}
\begin{aligned}
\sqfun{\prbexp_{\msrprb_{\sminvtemperature,0}}}{\fun{\opfocksegal_s}{f}\fun{\opfocksegal_t}{g}}
&=
\frac{1}{2}\fun{\opform{G}_{\sminvtemperature}}{s,f;t,g},
\quad s,t\in S_{\sminvtemperature},\ f,g\in\dom\physham[h]^{-1}.
\end{aligned}
\end{equation}

For a non-expansive operator \(A\) on \(\sphilbfrak{H}_{\sminvtemperature,C}\), the second quantization \(\fun{\Gamma}{A}\) of \(A\) in the real-wave representation \cite[Section 9.3]{DerezinskiGerard001} is the non-expansive operator on \(\fun{\lp^2}{\prbqspace_-,\msrprb_{\sminvtemperature,0}}\) with \begin{equation}\label{eq:gaussian-second-quantization}
\begin{aligned}
\fun{\Gamma}{A}\napiernum^{\imunit\fun{\opfocksegal}{v}}
&=
\fnexp{\frac{1}{4}\rbk{\norm{Av}^2-\norm{v}^2}}\napiernum^{\imunit\fun{\opfocksegal}{Av}},
\quad v\in\sphilbfrak{H}_{\sminvtemperature,C}.
\end{aligned}
\end{equation} It satisfies \(\fun{\Gamma}{A}\fun{\Gamma}{B}=\fun{\Gamma}{AB}\).

\begin{prop}[Symmetries and conditional expectations]\label{prop:symmetries-conditional-expectations}
For $\tau\in\fldreal$, the maps $U_\tau$ and $R$ are measure-preserving $\ast$-automorphisms of
$\fun{\lp^\infty}{\prbqspace_-,\mblfmlfrak{S},\msrprb_{\sminvtemperature,0}}$, and
\begin{equation}\label{eq:field-transformations}
\begin{aligned}
U_\tau\fun{\opfocksegal}{v}
&=
\fun{\opfocksegal}{u_\tau v},
&
R\fun{\opfocksegal}{v}
&=
\fun{\opfocksegal}{rv},
&
U_\tau\fun{\opfocksegal_t}{g}
&=
\fun{\opfocksegal_{t+\tau}}{g},
&
R\fun{\opfocksegal_t}{g}
&=
\fun{\opfocksegal_{-t}}{g}
\end{aligned}
\end{equation}
for $v\in\sphilbfrak{H}_{\sminvtemperature,C}$, $t\in S_{\sminvtemperature}$, and $g\in\sphilbfrak{R}$.
For $I\subset S_{\sminvtemperature}$, the $\sigma$-algebra $\mblfmlfrak{S}_I$ is generated by $\fun{\opfocksegal}{\sphilbfrak{k}_{\sminvtemperature,I}}$ and the
null sets, $\mblfmlfrak{S}_{S_{\sminvtemperature}}=\mblfmlfrak{S}$, and
\begin{equation}\label{eq:conditional-expectation-second-quantization}
\begin{aligned}
E_I
&=
\fun{\Gamma}{e_I},
&
K_{U_\tau}
&=
\fun{\Gamma}{u_\tau},
&
K_R
&=
\fun{\Gamma}{r},
\end{aligned}
\end{equation}
where $K_{U_\tau}$ and $K_R$ are the unitary operators on $\fun{\lp^2}{\prbqspace_-,\msrprb_{\sminvtemperature,0}}$ induced by $U_\tau$ and $R$.
\end{prop}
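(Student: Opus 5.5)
The plan is to reduce everything to the Gaussian structure of $\msrprb_{\sminvtemperature,0}$ and to the orthogonal operators $u_\tau$, $r$, $e_I$ on $\sphilbfrak{H}_{\sminvtemperature,C}$.

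\emph{Automorphisms and field transformations.} The maps $\tilde{u}_{-\tau}$ and $\tilde{r}$ are continuous bijections of $\prbqspace_-$ with continuous inverses $\tilde{u}_\tau$ and $\tilde{r}$. They preserve $\msrprb_{\sminvtemperature,0}$ by Proposition \ref{prop:gaussian-measure}, hence also its null sets and the completed $\sigma$-algebra $\mblfmlfrak{S}$. Thus $U_\tau$ and $R$ are measure-preserving $\ast$-automorphisms of $\lp^\infty$ and induce unitaries $K_{U_\tau}$, $K_R$ on $\lp^2$.

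For $f\in\prbqspace_+$, \eqref{eq:dual-symmetries} gives $\fun{\opfocksegal}{f}\circ\tilde{u}_{-\tau}=\fun{\opfocksegal}{u_\tau f}$ pointwise. I will extend this to $v\in\sphilbfrak{H}_{\sminvtemperature,C}$ by density of $\prbqspace_+$, using the $\lp^2$-continuity of $\opfocksegal$ and the unitarity of $K_{U_\tau}$. The case of $R$ is identical. Composing with Proposition \ref{prop:sharp-time-vectors} ($u_\tau j_t=j_{t+\tau}$, $rj_t=j_{-t}$, extended to $g\in\sphilbfrak{R}$) gives the sharp-time identities in \eqref{eq:field-transformations}.

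The identities $K_{U_\tau}=\fun{\Gamma}{u_\tau}$ and $K_R=\fun{\Gamma}{r}$ then follow from \eqref{eq:gaussian-second-quantization} with $\norm{u_\tau v}=\norm{v}$. The reason is that the exponentials $\napiernum^{\imunit\fun{\opfocksegal}{v}}$ are total in $\lp^2(\prbqspace_-,\mblfmlfrak{S},\msrprb_{\sminvtemperature,0})$. This totality is standard: functions of finitely many coordinates $\fun{\opfocksegal}{f}$, $f\in\prbqspace_+$, are dense because these functionals generate the Borel $\sigma$-algebra of the separable space $\prbqspace_-$.

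\emph{Generation of $\mblfmlfrak{S}_I$.} The span of $\fun{\opfocksegal}{j_tf}$, $t\in I$, $f\in\mathcal{T}_{\fldreal}$, is dense in $\fun{\opfocksegal}{\sphilbfrak{k}_{\sminvtemperature,I}}$ in $\lp^2$. Every element of the latter is therefore an a.e.\ limit of a subsequence, so it is $\mblfmlfrak{S}_I$-measurable, and the two $\sigma$-algebras coincide modulo null sets.

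For $I=S_{\sminvtemperature}$, Lemma \ref{lem:sharp-time-total} gives $\sphilbfrak{k}_{\sminvtemperature,S_{\sminvtemperature}}=\sphilbfrak{H}_{\sminvtemperature,C}$. Hence $\mblfmlfrak{S}_{S_{\sminvtemperature}}$ contains the $\sigma$-algebra generated by $\fun{\opfocksegal}{f}$, $f\in\prbqspace_+$. That $\sigma$-algebra is the Borel $\sigma$-algebra, since $\prbqspace_-$ is a separable Hilbert space whose Borel sets are generated by a countable family of continuous linear functionals. With the null sets this gives $\mblfmlfrak{S}$.

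\emph{Conditional expectation.} This is the main step. Fix $v$ and decompose $v=e_Iv+(1-e_I)v$. The Gaussian family $\fun{\opfocksegal}{(1-e_I)v}$ is uncorrelated with $\fun{\opfocksegal}{\sphilbfrak{k}_{\sminvtemperature,I}}$, and all these variables are jointly Gaussian as images of the linear isometry $\opfocksegal$. Hence $\fun{\opfocksegal}{(1-e_I)v}$ is independent of $\mblfmlfrak{S}_I$. Therefore
\[
E_I\napiernum^{\imunit\fun{\opfocksegal}{v}}=\napiernum^{\imunit\fun{\opfocksegal}{e_Iv}}\,\prbexp\bigl[\napiernum^{\imunit\fun{\opfocksegal}{(1-e_I)v}}\bigr]=\fnexp{-\tfrac14\norm{(1-e_I)v}^2}\napiernum^{\imunit\fun{\opfocksegal}{e_Iv}},
\]
which is \eqref{eq:gaussian-second-quantization} for $A=e_I$, because $\norm{e_Iv}^2-\norm{v}^2=-\norm{(1-e_I)v}^2$.

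Both $E_I$ and $\fun{\Gamma}{e_I}$ are bounded on $\lp^2$ and agree on the total set of exponentials, so they are equal. The delicate point is the justification of joint Gaussianity and independence for infinitely many variables. I would handle it by reducing to finitely many generators $\fun{\opfocksegal}{j_{t_i}f_i}$, computing joint characteristic functions, and then applying a monotone class argument.
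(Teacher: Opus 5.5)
Your proposal is correct and follows the paper's proof step by step: invariance of the measure, the dual symmetries extended by continuity, $L^2$-limits for the generation of $\mathfrak{S}_I$, Lemma~\ref{lem:sharp-time-total} for $I=S_\beta$, and totality of exponentials for the Koopman operators. The only difference is that you prove $E_I=\Gamma(e_I)$ directly, via the independence of $\phi((1-e_I)v)$ from $\mathfrak{S}_I$ within the jointly Gaussian family, whereas the paper cites \cite[Proposition 9.28]{DerezinskiGerard001} for it (and \cite[Theorem 5.54]{DerezinskiGerard001} for the totality you sketch); note also that $\phi$ is an isometry only up to the factor $2^{-1/2}$, which does not affect your argument.
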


\begin{proof}
The first assertion is Proposition \ref{prop:gaussian-measure}.
The identities \eqref{eq:field-transformations} hold for $v\in\prbqspace_+$ by \eqref{eq:dual-symmetries}, extend to
$\sphilbfrak{H}_{\sminvtemperature,C}$ by continuity, and give the sharp-time relations by Proposition \ref{prop:sharp-time-vectors}.
Since $\fun{\opfocksegal}{v}$ for $v\in\sphilbfrak{k}_{\sminvtemperature,I}$ is an $\lp^2$ limit of combinations of $\fun{\opfocksegal_t}{f}$ with
$t\in I$, the $\sigma$-algebra $\mblfmlfrak{S}_I$ is generated by $\fun{\opfocksegal}{\sphilbfrak{k}_{\sminvtemperature,I}}$ and the null sets.
For $I=S_{\sminvtemperature}$, Lemma \ref{lem:sharp-time-total} gives $\sphilbfrak{k}_{\sminvtemperature,S_{\sminvtemperature}}=\sphilbfrak{H}_{\sminvtemperature,C}$, and the
continuous linear functionals $\fun{\opfocksegal}{f}$, $f\in\prbqspace_+$, generate the Borel $\sigma$-algebra of the separable space
$\prbqspace_-$.
The identity $E_I=\fun{\Gamma}{e_I}$ is \cite[Proposition 9.28, p.~225]{DerezinskiGerard001}, and the last two identities follow
from \eqref{eq:field-transformations} and the totality of the exponentials $\napiernum^{\imunit\fun{\opfocksegal}{f}}$, $f\in\prbqspace_+$
\cite[Theorem 5.54, p.~125]{DerezinskiGerard001}.
\end{proof}

We use the notions of a generalized path space and of a \(\sminvtemperature\)-Markov path space in \cite[Definitions 21.7 and 21.41, pp.~610 and 624]{DerezinskiGerard001}, where \cite[Definition 21.8, p.~610]{DerezinskiGerard001} defines the \(\sigma\)-algebra of \(I\) as \(\bigvee_{t\in I}U_t\mblfmlfrak{S}_0\); for the path space of Definition \ref{def:singular-gaussian-beta-markov-path-space} it agrees with \(\mblfmlfrak{S}_I\), as shown in the proof of the following theorem.

\begin{thm}[Singular Gaussian $\sminvtemperature$-Markov property]\label{thm:singular-gaussian-beta-markov}
The singular Gaussian $\sminvtemperature$-Markov path space of Definition \ref{def:singular-gaussian-beta-markov-path-space} is a $\sminvtemperature$-Markov path
space, that is,
\begin{equation}\label{eq:beta-markov}
\begin{aligned}
U_{\sminvtemperature}
&=
1,
&
RE_{\partial}
&=
E_{\partial},
&
E_{I_+}E_{I_-}
&=
E_{\partial},
\end{aligned}
\end{equation}
in addition to the axioms of a generalized path space.
\end{thm}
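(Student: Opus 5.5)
The plan is to transfer everything to the one-particle level through the second quantization identities \eqref{eq:conditional-expectation-second-quantization} of Proposition \ref{prop:symmetries-conditional-expectations}, and then invoke the one-particle results of Section \ref{sec:sharp-time}.

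First I verify the generalized path space axioms of \cite[Definition 21.7]{DerezinskiGerard001}. The group property of $\fml{U_t}{t\in\fldreal}$, its measure preservation, $R^2=1$, $RU_\tau=U_{-\tau}R$, and the invariance $R\mblfmlfrak{S}_0=\mblfmlfrak{S}_0$ follow from the relations after \eqref{eq:circle-translation-reflection} for $\tilde{u}_\tau,\tilde{r}$ and from \eqref{eq:field-transformations} with $t=0$. For the strong continuity of $K_{U_\tau}=\fun{\Gamma}{u_\tau}$ I use \eqref{eq:gaussian-second-quantization}, the strong continuity of $u_\tau$ on $\sphilbfrak{H}_{\sminvtemperature,C}$, and the $\lp^2$-continuity of $v\mapsto\napiernum^{\imunit\fun{\opfocksegal}{v}}$. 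The last step uses the totality of exponentials together with $\norm{\fun{\Gamma}{A}}\le1$, which gives continuity on all of $\lp^2$ and hence on $\lp^\infty$ in the relevant weak-$\ast$ sense.

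Next comes the identification $\bigvee_{t\in I}U_t\mblfmlfrak{S}_0=\mblfmlfrak{S}_I$. By \eqref{eq:field-transformations}, $U_t\mblfmlfrak{S}_0$ is generated by $\fun{\opfocksegal_t}{f}$ with $f\in\mathcal{T}_{\fldreal}$ and the null sets. Taking the join over $t\in I$ gives exactly the generators of $\mblfmlfrak{S}_I$. The completeness requirement $\mblfmlfrak{S}_{S_{\sminvtemperature}}=\mblfmlfrak{S}$ is already contained in Proposition \ref{prop:symmetries-conditional-expectations}.

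Then I prove the three identities of \eqref{eq:beta-markov}.
\begin{itemize}
\item $U_{\sminvtemperature}=1$: the reflection $\tilde{u}_{-\sminvtemperature}$ is the identity on $\prbqspace_-$, because $u_{\sminvtemperature}=1$.
\item $RE_{\partial}=E_{\partial}$: by \eqref{eq:conditional-expectation-second-quantization}, $K_RE_\partial=\fun{\Gamma}{re_\partial}$. Since $rj_0f=j_0f$ and $rj_{\sminvtemperature/2}f=j_{-\sminvtemperature/2}f=j_{\sminvtemperature/2}f$ in $S_{\sminvtemperature}$, $r$ is the identity on $\sphilbfrak{k}_{\sminvtemperature,\partial}$. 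Hence $re_\partial=e_\partial$ and $RE_\partial=E_\partial$.
\item $E_{I_+}E_{I_-}=E_\partial$: this is $\fun{\Gamma}{e_{I_+}}\fun{\Gamma}{e_{I_-}}=\fun{\Gamma}{e_{I_+}e_{I_-}}=\fun{\Gamma}{e_\partial}$ by \eqref{eq:half-circle-projection-product}.
\end{itemize}

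The only genuinely delicate point is the bookkeeping: $E_I$ acts on $\lp^\infty$ but is identified with $\fun{\Gamma}{e_I}$ on $\lp^2$, and $R$ as an automorphism must be matched with $K_R$. I will state the identities on $\lp^2$ and restrict them to $\lp^\infty\subset\lp^2$, which is legitimate because the measure is a probability measure. The main obstacle is therefore not analytic, since Proposition \ref{prop:half-circle-projections} already carries the Markov content. It is checking that the precise axiom list of \cite[Definitions 21.7 and 21.41]{DerezinskiGerard001} is covered, in particular the continuity requirement and the agreement of the $\sigma$-algebras of intervals.
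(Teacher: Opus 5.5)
Your proposal is correct and follows essentially the same route as the paper. It derives the generalized path-space axioms and $\bigvee_{t\in I}U_t\mathfrak{S}_0=\mathfrak{S}_I$ from Proposition~\ref{prop:symmetries-conditional-expectations}, gets strong continuity of the Koopman group from the exponentials, and obtains $E_{I_+}E_{I_-}=E_\partial$ and $RE_\partial=E_\partial$ from $E_I=\Gamma(e_I)$, $K_R=\Gamma(r)$, Proposition~\ref{prop:half-circle-projections}, and $r$ being the identity on $\mathfrak{k}_{\beta,\partial}$ (one harmless slip: in the $U_\beta=1$ item, $\tilde{u}_{-\beta}$ is a translation, not the reflection).
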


\begin{proof}
By Proposition \ref{prop:symmetries-conditional-expectations}, $U_t$ and $R$ are measure-preserving $\ast$-automorphisms with
$U_{s+t}=U_sU_t$, $U_{\sminvtemperature}=1$, $R^2=1$, and $RU_t=U_{-t}R$, and
$U_t\mblfmlfrak{S}_0=\mblfmlfrak{S}_{\setone{t}}$ by \eqref{eq:field-transformations}; hence $\bigvee_{t\in I}U_t\mblfmlfrak{S}_0=\mblfmlfrak{S}_I$ for
nonempty $I$, and $\mblfmlfrak{S}_{S_{\sminvtemperature}}=\mblfmlfrak{S}$.
The Koopman group is strongly continuous, since the exponentials are total and
$$\begin{aligned}
\norm{\rbk{K_{U_t}-1}\napiernum^{\imunit\fun{\opfocksegal}{v}}}^2
&=
2-2\napiernum^{-\frac{1}{4}\norm{u_tv-v}^2};
\end{aligned}$$
since $U_t$ is isometric on $\lp^\infty$, this also gives the $\sigma$-weak continuity of $t\mapsto U_tF$ for
$F\in\lp^\infty$.
By \eqref{eq:conditional-expectation-second-quantization} and Proposition \ref{prop:half-circle-projections},
$E_{I_+}E_{I_-}=\fun{\Gamma}{e_{I_+}e_{I_-}}=\fun{\Gamma}{e_{\partial}}=E_{\partial}$.
Since $r$ fixes $j_0f$ and $j_{\frac{\sminvtemperature}{2}}f=j_{-\frac{\sminvtemperature}{2}}f$, it is the identity on $\sphilbfrak{k}_{\sminvtemperature,\partial}$, so
$RE_{\partial}=\fun{\Gamma}{re_{\partial}}=\fun{\Gamma}{e_{\partial}}=E_{\partial}$.
\end{proof}

\begin{rem}[Translated laws]\label{rem:translated-laws}
Let $q_0\in\prbqspace_-$ satisfy $\tilde{u}_tq_0=q_0$ for $t\in\fldreal$ and $\tilde{r}q_0=q_0$, and let $\tau_{q_0}\colon q\mapsto q+q_0$.
Then $\pushoutrbk{\tau_{q_0}}\msrprb_{\sminvtemperature,0}$, with the $\sigma$-algebras $\tau_{q_0}\rbk{\mblfmlfrak{S}_I}$ completed for this measure
and the same $U_t$ and $R$, is again a $\sminvtemperature$-Markov path space, because $\tau_{q_0}$ commutes with $\tilde{u}_t$ and $\tilde{r}$
and $F\mapsto F\circ\tau_{q_0}$ intertwines all the structure.
It is mutually singular to $\msrprb_{\sminvtemperature,0}$ if $q_0\notin\sphilbfrak{H}_{\sminvtemperature,C}$, by the Cameron--Martin
theorem \cite[Theorem 2.4.5, p.~61]{VladimirBogachev002}.
\end{rem}

\section{Euclidean realization of the Araki--Woods representations}\label{sec:araki-woods}

In this section, we identify the physical Hilbert space of the singular Gaussian \(\sminvtemperature\)-Markov path space with the Araki--Woods Fock space, intertwining the Weyl operators, the resolvents, the Liouvilleans, and the modular conjugations.

\subsection{Araki--Woods representations and Segal sharp-time fields}\label{arakiwoods-representations-and-segal-sharp-time-fields}

In this subsection, we fix the Araki--Woods objects, the Segal sharp-time fields, and the boundary one-particle map.

Let \(\rbk{\sphilbfrak{h}_{\fldreal}}_{\fldcmp}\) be the complexification of \(\sphilbfrak{h}_{\fldreal}\), that is, the real vector space \(\sphilbfrak{h}_{\fldreal}\times\sphilbfrak{h}_{\fldreal}\), whose element \(\pairbk{f,g}\) is written \(f+\imunit g\), with the multiplication by complex scalars and the inner product given, for \(a,b\in\fldreal\) and \(f,g,f',g'\in\sphilbfrak{h}_{\fldreal}\), by \begin{equation}\label{eq:complexification}
\begin{aligned}
\rbk{a+\imunit b}\rbk{f+\imunit g}
&=
\rbk{af-bg}+\imunit\rbk{bf+ag},
\\ 
\bkt{f+\imunit g}{f'+\imunit g'}
&=
\bkt{f}{f'}_{\sphilbfrak{h}_{\fldreal}}+\bkt{g}{g'}_{\sphilbfrak{h}_{\fldreal}}
+\imunit\rbk{\bkt{f}{g'}_{\sphilbfrak{h}_{\fldreal}}-\bkt{g}{f'}_{\sphilbfrak{h}_{\fldreal}}}.
\end{aligned}
\end{equation} The vectors \(f+\imunit0\), \(f\in\sphilbfrak{h}_{\fldreal}\), are called real vectors and are identified with \(f\). For a real linear operator \(T\) on \(\sphilbfrak{h}_{\fldreal}\), its complex linear extension is \(T\rbk{f+\imunit g}=Tf+\imunit Tg\) on \(\dom T+\imunit\dom T\), and it is denoted by the same symbol; in particular, \(\physham[h]\) and the operators \(\fun{\varphi}{\physham[h]}\) of Definition \ref{def:one-particle-hamiltonian} also denote their complex linear extensions, and \(\fun{\varphi}{\physham[h]}\) is then the functional calculus of the complex linear extension of \(\physham[h]\). The complexification of \(\sphilbfrak{h}_{\fldreal}\), rather than \(\sphilbfrak{h}\) itself, is used because the symplectic form \(\opimag\bkt{f}{g}\) of \(\rbk{\sphilbfrak{h}_{\fldreal}}_{\fldcmp}\) vanishes on the real vectors, as the commutativity of the fields \(\fun{\opfocksegal_0}{\cdot}\) requires. The Planck density is \(\smnumberdensity_{\sminvtemperature}=\rbk{\napiernum^{\sminvtemperature\physham[h]}-1}^{-1}\).

\begin{defn}[Araki--Woods representations]\label{def:araki-woods}
Let $\sphilbfrak{K}_{\txtarakiwoods}=\rbk{\sphilbfrak{h}_{\fldreal}}_{\fldcmp}\oplus\cmpconj{\rbk{\sphilbfrak{h}_{\fldreal}}_{\fldcmp}}$, where, in the second summand,
$\cmpconj{x}$ denotes a vector $x$ regarded as a vector of the conjugate space and $\cmpconj{T}$ denotes an operator $T$ acting on
the conjugate space.
For $f\in\dom\smnumberdensity_{\sminvtemperature}^{\frac{1}{2}}$, let
$$\begin{aligned}
\fun{\ell_{\sminvtemperature}}{f}
&=
\rbk{1+\smnumberdensity_{\sminvtemperature}}^{\frac{1}{2}}f\oplus\cmpconj{\smnumberdensity_{\sminvtemperature}^{\frac{1}{2}}f},
&
\fun{r_{\sminvtemperature}}{f}
&=
\smnumberdensity_{\sminvtemperature}^{\frac{1}{2}}f\oplus\cmpconj{\rbk{1+\smnumberdensity_{\sminvtemperature}}^{\frac{1}{2}}f}.
\end{aligned}$$
On the Araki--Woods Fock space, that is, the bosonic Fock space $\fun{\spfockgamma_{\txtbsn}}{\sphilbfrak{K}_{\txtarakiwoods}}$ of
\cite[Definition 3.27, p.~67]{DerezinskiGerard001}, with vacuum
$\oagnsvector[\Psi]_{\txtarakiwoods}$, let
$\fun{\opfockweyl}{F}=\napiernum^{\imunit\fun{\phi}{F}}$ with the self-adjoint Segal field $\fun{\phi}{F}=2^{-\frac{1}{2}}\rbk{\fun{a^{\ast}}{F}+\fun{a}{F}}$, where
$\fun{a^{\ast}}{F}$ and $\fun{a}{F}$ are the creation and annihilation operators of \cite[Definition 3.50 and Theorem 3.51, p.~75]{DerezinskiGerard001}, so that
$\bkt{\oagnsvector[\Psi]_{\txtarakiwoods}}{\fun{\opfockweyl}{F}\oagnsvector[\Psi]_{\txtarakiwoods}}=\napiernum^{-\frac{1}{4}\norm{F}^2}$.
The left and right Araki--Woods Weyl representations of the Weyl algebras over the real vector space
$\dom\smnumberdensity_{\sminvtemperature}^{\frac{1}{2}}$ with the symplectic forms $\opimag\bkt{\cdot}{\cdot}$ and $-\opimag\bkt{\cdot}{\cdot}$ are given by
$\fun{\oarepn_{\txtarakiwoods,\sminvtemperature,\txtleft}}{\fun{\opfockweyl}{f}}=\fun{\opfockweyl}{\fun{\ell_{\sminvtemperature}}{f}}$ and
$\fun{\oarepn_{\txtarakiwoods,\sminvtemperature,\txtright}}{\fun{\opfockweyl}{f}}=\fun{\opfockweyl}{\fun{r_{\sminvtemperature}}{f}}$, and
$\fun{\phi_{\txtleft,\sminvtemperature}}{f}=\fun{\phi}{\fun{\ell_{\sminvtemperature}}{f}}$ and $\fun{\phi_{\txtright,\sminvtemperature}}{f}=\fun{\phi}{\fun{r_{\sminvtemperature}}{f}}$ are their
Stone generators.
The left and right representations of the resolvent algebras \cite[Definitions 3.1 and 3.4]{BuchholzGrundling002} are
$$\begin{aligned}
\fun{\oarepn_{\txtarakiwoods,\txtleft,\oaresolventalgebra,\sminvtemperature}}{\fun{\oaresolvent}{\lambda,f}}
&=
\rbk{\imunit\lambda-\fun{\phi_{\txtleft,\sminvtemperature}}{f}}^{-1},
\\ 
\fun{\oarepn_{\txtarakiwoods,\txtright,\oaresolventalgebra,\sminvtemperature}}{\fun{\oaresolvent}{\lambda,f}}
&=
\rbk{\imunit\lambda-\fun{\phi_{\txtright,\sminvtemperature}}{f}}^{-1},
\quad\lambda\in\fldreal\setminus\setone{0}.
\end{aligned}$$
With the second quantizations $\fun{\opfocksndqntdiff}{\cdot}$ and $\fun{\opfocksndqnt}{\cdot}$ of \cite[Definitions 3.34 and 3.35, p.~68]{DerezinskiGerard001},
the Liouvillean is $\physliouvilean_{\txtarakiwoods}=\fun{\opfocksndqntdiff}{b}$ with $b=\physham[h]\oplus-\cmpconj{\physham[h]}$, the modular
conjugation is $J_{\txtarakiwoods}=\fun{\opfocksndqnt}{j}$ with $\fun{j}{x\oplus\cmpconj{y}}=y\oplus\cmpconj{x}$, and
$\fun{\oastate[\varphi_{\txtarakiwoods,\sminvtemperature}]}{B}=\bkt{\oagnsvector[\Psi]_{\txtarakiwoods}}{B\oagnsvector[\Psi]_{\txtarakiwoods}}$.
\end{defn}

These are the Araki--Woods representations \cite{ArakiWoods001} in the form of \cite[Section 17.1.5]{DerezinskiGerard001}, and the resolvent-algebra representations are the regular representations corresponding to the regular Weyl representations \cite[Corollary 4.4]{BuchholzGrundling002}.

Let \begin{equation}\label{eq:segal-test-space}
\begin{aligned}
\mathcal{T}_{\fldreal,\physham[h]}
&=
\set{f\in\dom\physham[h]^{\frac{1}{2}}}{\rbk{2\physham[h]}^{\frac{1}{2}}f\in\mathcal{T}_{\fldreal}},
&
\sphilbfrak{h}_{\fldreal,\smnumberdensity}
&=
\sphilbfrak{h}_{\fldreal}\cap\dom\smnumberdensity_{\sminvtemperature}^{\frac{1}{2}},
\end{aligned}
\end{equation} and let \(\norm{f}_{\smnumberdensity}^2=\norm{f}^2+2\norm{\smnumberdensity_{\sminvtemperature}^{\frac{1}{2}}f}^2\) for \(f\in\sphilbfrak{h}_{\fldreal,\smnumberdensity}\). For \(t\in S_{\sminvtemperature}\) and \(f\in\mathcal{T}_{\fldreal,\physham[h]}\), \(\fun{\opfocksegal_t}{\rbk{2\physham[h]}^{\frac{1}{2}}f}\) is the Segal sharp-time field.

\begin{lem}[Segal test space and the domain of the Planck density]\label{lem:segal-test-space}
The following hold.
\begin{enumerate}
\item $\rbk{2\physham[h]}^{\frac{1}{2}}\mathcal{T}_{\fldreal,\physham[h]}=\mathcal{T}_{\fldreal}$ and $\mathcal{T}_{\fldreal,\physham[h]}\subset\sphilbfrak{h}_{\fldreal,\smnumberdensity}$.
\item $\norm{f}_{\smnumberdensity}=\norm{\rbk{\fun{\coth}{\frac{\sminvtemperature\physham[h]}{2}}}^{\frac{1}{2}}f}$, $\sphilbfrak{h}_{\fldreal,\smnumberdensity}$ is complete in this
norm, and $\mathcal{T}_{\fldreal,\physham[h]}$ is dense in it.
\item The map $f\mapsto\rbk{2\physham[h]}^{\frac{1}{2}}f$ from $\mathcal{T}_{\fldreal,\physham[h]}$ into $\sphilbfrak{R}$ extends uniquely to an
isometry $\iota_{\physham[h]}\colon\sphilbfrak{h}_{\fldreal,\smnumberdensity}\to\sphilbfrak{R}$.
\end{enumerate}
In particular,
\begin{equation}\label{eq:segal-equal-time-covariance}
\begin{aligned}
\sqfun{\prbexp_{\msrprb_{\sminvtemperature,0}}}{\fun{\opfocksegal_t}{\rbk{2\physham[h]}^{\frac{1}{2}}f}\fun{\opfocksegal_t}{\rbk{2\physham[h]}^{\frac{1}{2}}g}}
&=
\frac{1}{2}\bkt{f}{\fun{\coth}{\frac{\sminvtemperature\physham[h]}{2}}g}_{\sphilbfrak{h}_{\fldreal}},
\quad f,g\in\mathcal{T}_{\fldreal,\physham[h]}.
\end{aligned}
\end{equation}
\end{lem}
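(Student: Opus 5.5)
The whole lemma reduces to scalar identities in the functional calculus of \(\physham[h]\), so I plan to verify each part on the spectral side. The key identity is
\[
1+2\smnumberdensity_{\sminvtemperature}(\lambda)=\frac{\napiernum^{\sminvtemperature\lambda}+1}{\napiernum^{\sminvtemperature\lambda}-1}=\coth\frac{\sminvtemperature\lambda}{2},
\qquad
2\lambda\,\varpi(\lambda)^2=\coth\frac{\sminvtemperature\lambda}{2},
\]
where the second equality comes from \eqref{eq:green-multiplier}.

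For part 1, I take \(g\in\mathcal{T}_{\fldreal}\subset\dom\physham[h]^{-1}\) and put \(f=(2\physham[h])^{-\frac{1}{2}}g\). This is defined because \(\lambda^{-1/2}\leq 1+\lambda^{-1}\), and \(f\in\dom\physham[h]^{\frac12}\) with \((2\physham[h])^{\frac12}f=g\). Injectivity of \((2\physham[h])^{\frac12}\) then gives the equality of sets.

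Next, \(\coth(\sminvtemperature\lambda/2)\leq 1+2/(\sminvtemperature\lambda)\) is integrable against \(\lambda^{-1}|g|^2\)-type measures. So for \(f=(2\physham[h])^{-\frac12}g\),
\[
\norm{\smnumberdensity_{\sminvtemperature}^{\frac12}f}^2=\int\frac{\smnumberdensity_{\sminvtemperature}(\lambda)}{2\lambda}\,d\bkt{g}{\msrprb_{\physham[h]}g},
\]
and since \(\smnumberdensity_{\sminvtemperature}(\lambda)\leq(\sminvtemperature\lambda)^{-1}\), this integral is finite because \(g\in\dom\physham[h]^{-1}\). This gives \(\mathcal{T}_{\fldreal,\physham[h]}\subset\sphilbfrak{h}_{\fldreal,\smnumberdensity}\).

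For part 2, the norm identity is the scalar identity \(1+2\smnumberdensity_{\sminvtemperature}=\coth(\sminvtemperature\lambda/2)\) integrated against the spectral measure. Moreover, \(\dom\smnumberdensity_{\sminvtemperature}^{\frac12}=\dom(\coth(\sminvtemperature\physham[h]/2))^{\frac12}\), because \(\coth\geq1\). So \(\sphilbfrak{h}_{\fldreal,\smnumberdensity}\) is the domain of a closed self-adjoint operator \(\geq1\) with its graph norm, hence complete.

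For density, I apply the argument of Lemma \ref{lem:weighted-density} after transporting by \(A=(\coth(\sminvtemperature\physham[h]/2))^{\frac12}\). This map is unitary from \(\sphilbfrak{h}_{\fldreal,\smnumberdensity}\) onto \(\sphilbfrak{h}_{\fldreal}\), so it suffices that
\[
A\,\mathcal{T}_{\fldreal,\physham[h]}=\varphi(\physham[h])\mathcal{T}_{\fldreal},\qquad \varphi(\lambda)=\bigl(\coth(\sminvtemperature\lambda/2)/(2\lambda)\bigr)^{\frac12}=\varpi(\lambda),
\]
is dense in \(\sphilbfrak{h}_{\fldreal}\). This is exactly Lemma \ref{lem:weighted-density} with \(\varphi=\varpi\), whose hypotheses hold: \(\varpi\) is positive and bounded on compacts, and \(\mathcal{T}_{\fldreal}\subset\dom\varpi(\physham[h])\) by Lemma \ref{lem:matsubara-sum}.

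For part 3, with \(g=(2\physham[h])^{\frac12}f\) I compute
\[
\norm{g}_{\sphilbfrak{R}}^2=\norm{\varpi(\physham[h])(2\physham[h])^{\frac12}f}^2=\norm{A f}^2=\norm{f}_{\smnumberdensity}^2,
\]
again by \(2\lambda\varpi^2=\coth\). So the map is isometric on the dense subspace \(\mathcal{T}_{\fldreal,\physham[h]}\) and extends uniquely by part 2 and completeness of \(\sphilbfrak{R}\).

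The covariance identity \eqref{eq:segal-equal-time-covariance} then follows from \eqref{eq:sharp-time-field-covariance} with \(s=t\), using \(M_0=2\varpi^2\):
\[
\tfrac12\opform{G}_{\sminvtemperature}(t,g_1;t,g_2)=\tfrac12\bkt{g_1}{\varpi(\physham[h])^2g_2}=\tfrac12\bkt{f_1}{\coth(\sminvtemperature\physham[h]/2)f_2},
\]
where the last step is the polarized form of part 3.

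The only delicate point is the domain bookkeeping in part 1 and in the density step. One must check that \((2\physham[h])^{-\frac12}\) of a test vector lands in \(\dom\physham[h]^{\frac12}\cap\dom\smnumberdensity_{\sminvtemperature}^{\frac12}\), and that the unitary transport identifies \(A\,\mathcal{T}_{\fldreal,\physham[h]}\) exactly with \(\varpi(\physham[h])\mathcal{T}_{\fldreal}\). Both reduce to the pointwise inequalities on \(\openinterval{0}{\infty}\) stated above, so no real obstruction arises.
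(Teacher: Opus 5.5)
Your proposal is correct and follows the paper's proof essentially step for step. It uses the same scalar identities \(1+2\smnumberdensity_{\sminvtemperature}(\lambda)=\coth\frac{\sminvtemperature\lambda}{2}=2\lambda\varpi(\lambda)^2\), the bound \(\smnumberdensity_{\sminvtemperature}(\lambda)\leq(\sminvtemperature\lambda)^{-1}\) for part 1, and the closedness of \((\coth\frac{\sminvtemperature\physham[h]}{2})^{\frac12}\geq1\) for completeness; for both the density and the isometric extension it applies Lemma \ref{lem:weighted-density} with \(\varphi=\varpi\) to \((\coth\frac{\sminvtemperature\physham[h]}{2})^{\frac12}\mathcal{T}_{\fldreal,\physham[h]}=\varpi(\physham[h])\mathcal{T}_{\fldreal}\), as the paper does.
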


\begin{proof}
All statements are identities of the functional calculus of $\physham[h]$.
Since $\mathcal{T}_{\fldreal}\subset\dom\physham[h]^{-1}\subset\dom\physham[h]^{-\frac{1}{2}}$, the map $\rbk{2\physham[h]}^{-\frac{1}{2}}$ inverts
$\rbk{2\physham[h]}^{\frac{1}{2}}$ on $\mathcal{T}_{\fldreal}$, and $\rbk{\napiernum^{\sminvtemperature\lambda}-1}^{-1}\leq\rbk{\sminvtemperature\lambda}^{-1}$ gives (1).
The identity $1+2\rbk{\napiernum^{\sminvtemperature\lambda}-1}^{-1}=\fun{\coth}{\frac{\sminvtemperature\lambda}{2}}$ gives the norm in (2), and
completeness holds because $\rbk{\fun{\coth}{\frac{\sminvtemperature\physham[h]}{2}}}^{\frac{1}{2}}\geq1$ is closed.
Since $\fun{\varpi}{\lambda}\rbk{2\lambda}^{\frac{1}{2}}=\rbk{\fun{\coth}{\frac{\sminvtemperature\lambda}{2}}}^{\frac{1}{2}}$, one has
$\norm{\rbk{2\physham[h]}^{\frac{1}{2}}f}_{\sphilbfrak{R}}=\norm{f}_{\smnumberdensity}$ for $f\in\mathcal{T}_{\fldreal,\physham[h]}$, and
$\rbk{\fun{\coth}{\frac{\sminvtemperature\physham[h]}{2}}}^{\frac{1}{2}}\mathcal{T}_{\fldreal,\physham[h]}=\fun{\varpi}{\physham[h]}\mathcal{T}_{\fldreal}$ is dense in
$\sphilbfrak{h}_{\fldreal}$ by Lemma \ref{lem:weighted-density}; this gives the density in (2) and the isometric extension in (3).
The covariance \eqref{eq:segal-equal-time-covariance} is \eqref{eq:sharp-time-field-covariance} at $s=t$ with $\fun{M_0}{\lambda}\cdot2\lambda=2\fun{\coth}{\frac{\sminvtemperature\lambda}{2}}$.
\end{proof}

\begin{lem}[Boundary one-particle map]\label{lem:boundary-one-particle-map}
The following hold.
\begin{enumerate}
\item There is a unique continuous real linear map $\iota_{\partial}\colon\sphilbfrak{k}_{\sminvtemperature,\partial}\to\sphilbfrak{K}_{\txtarakiwoods}$ with
\begin{equation}\label{eq:boundary-one-particle-map}
\begin{aligned}
\fun{\iota_{\partial}}{j_0\rbk{2\physham[h]}^{\frac{1}{2}}f+j_{\frac{\sminvtemperature}{2}}\rbk{2\physham[h]}^{\frac{1}{2}}g}
&=
\fun{\ell_{\sminvtemperature}}{f}+\fun{r_{\sminvtemperature}}{g},
\quad f,g\in\mathcal{T}_{\fldreal,\physham[h]}.
\end{aligned}
\end{equation}
It satisfies $\bkt{\fun{\iota_{\partial}}{v}}{\fun{\iota_{\partial}}{w}}_{\sphilbfrak{K}_{\txtarakiwoods}}=\bkt{v}{w}_{\sphilbfrak{H}_{\sminvtemperature,C}}$, and the complex
linear span of its range is dense in $\sphilbfrak{K}_{\txtarakiwoods}$.
\item For $s\in I_+$ and $f\in\mathcal{T}_{\fldreal,\physham[h]}$, $\fun{\ell_{\sminvtemperature}}{f}\in\dom\napiernum^{-sb}$ and
\begin{equation}\label{eq:boundary-semigroup}
\begin{aligned}
\fun{\iota_{\partial}}{e_{\partial}j_s\rbk{2\physham[h]}^{\frac{1}{2}}f}
&=
\napiernum^{-sb}\fun{\ell_{\sminvtemperature}}{f}.
\end{aligned}
\end{equation}
\item The operator $u_{\frac{\sminvtemperature}{2}}r$ maps $\sphilbfrak{k}_{\sminvtemperature,\partial}$ onto itself, and
$\iota_{\partial}\circ u_{\frac{\sminvtemperature}{2}}r=j\circ\iota_{\partial}$ there.
\end{enumerate}
\end{lem}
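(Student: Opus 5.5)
The plan is to reduce everything to functional-calculus identities for scalar functions of \(\lambda>0\), integrated against \(\msrprb_{\physham[h]}\), as in the proofs of Proposition \ref{prop:half-circle-projections} and Lemma \ref{lem:segal-test-space}.

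For (1), I first compute the Gram matrix of the vectors \(j_0\rbk{2\physham[h]}^{\frac{1}{2}}f\) and \(j_{\frac{\sminvtemperature}{2}}\rbk{2\physham[h]}^{\frac{1}{2}}g\) using Proposition \ref{prop:sharp-time-vectors}. For distance \(0\), \(2\lambda\cdot\frac{1}{2}\fun{M_0}{\lambda}=\fun{\coth}{\frac{\sminvtemperature\lambda}{2}}=1+2\smnumberdensity_{\sminvtemperature}\). For distance \(\frac{\sminvtemperature}{2}\), \(\lambda\fun{M_{\sminvtemperature/2}}{\lambda}=2\napiernum^{-\sminvtemperature\lambda/2}/\rbk{1-\napiernum^{-\sminvtemperature\lambda}}=2\rbk{\smnumberdensity_{\sminvtemperature}\rbk{1+\smnumberdensity_{\sminvtemperature}}}^{\frac{1}{2}}\). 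On the Araki--Woods side, \eqref{eq:complexification} gives \(\opreal\bkt{\fun{\ell_{\sminvtemperature}}{f}}{\fun{\ell_{\sminvtemperature}}{f'}}=\bkt{f}{\rbk{1+2\smnumberdensity_{\sminvtemperature}}f'}\) for real \(f,f'\), and the same for \(r_{\sminvtemperature}\). The mixed term is \(\opreal\bkt{\fun{\ell_{\sminvtemperature}}{f}}{\fun{r_{\sminvtemperature}}{g}}=2\bkt{f}{\rbk{\smnumberdensity_{\sminvtemperature}\rbk{1+\smnumberdensity_{\sminvtemperature}}}^{\frac{1}{2}}g}\); the conjugate summand contributes the conjugate inner product, whose real part is the same. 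Since the inner product of \(\sphilbfrak{H}_{\sminvtemperature,C}\) is real, I compare it with the real part \(\opreal\bkt{\cdot}{\cdot}_{\sphilbfrak{K}_{\txtarakiwoods}}\); I read the stated isometry in that sense. The Gram matrices then agree, so the map is well defined and isometric on the span, with kernel issues handled by isometry. It extends to the closure \(\sphilbfrak{k}_{\sminvtemperature,\partial}\), since \(\rbk{2\physham[h]}^{\frac{1}{2}}\mathcal{T}_{\fldreal,\physham[h]}=\mathcal{T}_{\fldreal}\) by Lemma \ref{lem:segal-test-space}(1). Uniqueness is clear.

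For density of the complex span, I take \(x\oplus\cmpconj{y}\) orthogonal to all \(\fun{\ell_{\sminvtemperature}}{f}\) and \(\fun{r_{\sminvtemperature}}{g}\) and their \(\imunit\)-multiples. Writing \(a=\rbk{1+\smnumberdensity_{\sminvtemperature}}^{\frac{1}{2}}\) and \(p=\smnumberdensity_{\sminvtemperature}^{\frac{1}{2}}\), and using complex linearity in \(f\), this gives \(a x+p\,\cmpconj{y}'=0\) and \(p x+a\,\cmpconj{y}'=0\) weakly on \(\mathcal{T}_{\fldreal,\physham[h]}+\imunit\mathcal{T}_{\fldreal,\physham[h]}\). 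Here \(\cmpconj{y}'\) is the complex conjugate of \(y\) in the complexification. After spectral cutoff \(\fun{\msrprb_{\physham[h]}}{\closedinterval{\varepsilon}{M}}\), and invoking Lemma \ref{lem:weighted-density} as in its proof, the \(2\times2\) matrix \(\rbk{\begin{smallmatrix}a&p\\p&a\end{smallmatrix}}\) has determinant \(a^2-p^2=1\). This forces \(x=y=0\).

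For (2), I apply Proposition \ref{prop:half-circle-projections} after reflecting: \(s\in I_+\) corresponds via \(r\) to \(-s\in I_-\), and \(r\) acts trivially on \(\sphilbfrak{k}_{\sminvtemperature,\partial}\). This gives \(e_{\partial}j_s h=j_0\fun{\eta_{0,-s}}{\physham[h]}h+j_{\frac{\sminvtemperature}{2}}\fun{\eta_{1,-s}}{\physham[h]}h\) with \(h=\rbk{2\physham[h]}^{\frac{1}{2}}f\); here \(e_{\partial}=e_{I_+}e_{\partial}\) is obtained from \(e_{\partial}=e_{I_-}e_{I_+}\). Applying \(\iota_{\partial}\), I need the scalar identities \(\rbk{1+\rho}^{\frac{1}{2}}\eta_{0,-s}+\rho^{\frac{1}{2}}\eta_{1,-s}=\napiernum^{-s\lambda}\rbk{1+\rho}^{\frac{1}{2}}\) and \(\rho^{\frac{1}{2}}\eta_{0,-s}+\rbk{1+\rho}^{\frac{1}{2}}\eta_{1,-s}=\napiernum^{s\lambda}\rho^{\frac{1}{2}}\). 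With \(\rho^{\frac{1}{2}}=\napiernum^{-\sminvtemperature\lambda/2}/\rbk{1-\napiernum^{-\sminvtemperature\lambda}}^{\frac{1}{2}}\), these reduce to \(\sinh\) addition formulas. The second identity shows \(\cmpconj{\smnumberdensity_{\sminvtemperature}^{\frac{1}{2}}f}\in\dom\napiernum^{s\cmpconj{\physham[h]}}\), which is the domain claim for \(\napiernum^{-sb}\). The main obstacle is these two identities: the signs of \(\sigma\) in \(\eta_{i,\sigma}\) must be tracked carefully, and the domain condition at \(s=\frac{\sminvtemperature}{2}\) checked.

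For (3), \(u_{\frac{\sminvtemperature}{2}}r\) sends \(j_t\) to \(j_{\frac{\sminvtemperature}{2}-t}\) by Proposition \ref{prop:sharp-time-vectors}, so it swaps \(j_0\) and \(j_{\frac{\sminvtemperature}{2}}\) and is an involution on \(\sphilbfrak{k}_{\sminvtemperature,\partial}\). Since \(j\) swaps \(\fun{\ell_{\sminvtemperature}}{f}\) and \(\fun{r_{\sminvtemperature}}{f}\) (the vectors are real), the intertwining holds on generators and extends by continuity.
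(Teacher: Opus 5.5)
\textbf{Gap: the isometry is proved only in real part.} You compare the Euclidean inner product only with $\operatorname{Re}\langle\cdot,\cdot\rangle_{\mathfrak{K}_{\mathrm{AW},\mathrm{b}}}$, and you say you ``read the stated isometry in that sense''. The lemma claims more. It says the complex inner product $\langle\iota_\partial(v),\iota_\partial(w)\rangle$ equals the real number $\langle v,w\rangle$, so it also claims $\operatorname{Im}\langle\iota_\partial(v),\iota_\partial(w)\rangle=0$: the symplectic form vanishes on the range of $\iota_\partial$.

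This is not cosmetic. The proof of Theorem~\ref{thm:araki-woods-equivalence} uses exactly this in two places:
\begin{enumerate}
\item the coherent vectors $W(\iota_\partial(v))\Psi_{\mathrm{AW}}$ have inner products $e^{-\frac14\|\iota_\partial(v)-\iota_\partial(w)\|^2}$ with no phase factor;
\item $W(\ell_\beta(f))W(\iota_\partial(v))=W(\ell_\beta(f)+\iota_\partial(v))$.
\end{enumerate}
If only the real parts are known to match, neither step is available.

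The repair is short, and your computation nearly contains it. The operators $(1+\rho_\beta)^{1/2}$ and $\rho_\beta^{1/2}$ are complex-linear extensions of real operators, so they map the real vectors $f,f',g$ to real vectors. By \eqref{eq:complexification}, inner products of real vectors are real, and so are their conjugates in the second summand. Hence $\langle\ell_\beta(f),\ell_\beta(f')\rangle$, $\langle\ell_\beta(f),r_\beta(g)\rangle$ and $\langle r_\beta(g),r_\beta(g')\rangle$ are themselves real, not merely equal in real part. Real-linearity and continuity then make $\langle\iota_\partial(v),\iota_\partial(w)\rangle$ real on all of $\mathfrak{k}_{\beta,\partial}$. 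This is how the paper argues.

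\textbf{Comparison of the remaining steps.} Everything else is correct and close to the paper.

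\emph{Density.} You argue by annihilators. Orthogonality of $x\oplus\overline{y}$ to the range gives the weak equations $Ax+B\overline{y}'=0$ and $Bx+A\overline{y}'=0$ on $\mathcal{T}_{\mathbb{R},h}+i\mathcal{T}_{\mathbb{R},h}$. The polynomials in $\kappa(h)$ from the proof of Lemma~\ref{lem:weighted-density} move a spectral cutoff $P$ onto $x$ and $\overline{y}'$, and $A^2-B^2=1$ then forces $Px=P\overline{y}'=0$. The paper is constructive instead. Using Lemma~\ref{lem:segal-test-space} and $\|\ell_\beta(f)\|=\|r_\beta(f)\|=\|f\|_\rho$, it puts every $\ell_\beta(f)+r_\beta(g)$ with $f,g\in\mathfrak{h}_{\mathbb{R},\rho}$ into the closed span. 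It then writes each cutoff vector $x\oplus\overline{y}$ explicitly as a combination of such vectors, again via $A^2-B^2=1$. Both routes work; the paper's avoids the weak-equation bookkeeping.

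\emph{Domain in (2).} You deduce the domain claim from your second scalar identity: $e^{s\lambda}\rho_\beta^{1/2}$ is a combination of functions square-integrable against the spectral measure of $f$. This covers $s=\beta/2$ automatically. The paper instead uses the bound $e^{2s\lambda}(e^{\beta\lambda}-1)^{-1}\le 1+(\beta\lambda)^{-1}$.

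\emph{Reflection step in (2).} Your justification here is muddled. What you need is $re_\partial=e_\partial r$, which holds because $r$ is an orthogonal involution fixing $\mathfrak{k}_{\beta,\partial}$ pointwise. Then $e_\partial j_s u=e_\partial j_{-s}u=e_{I_+}e_{I_-}j_{-s}u=e_{I_+}j_{-s}u$. The paper writes instead $e_\partial j_s u=e_{I_-}j_s u=r e_{I_+}j_{-s}u$, using the adjoint identity $e_\partial=e_{I_-}e_{I_+}$.

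\emph{Part (3)} is as in the paper.
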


\begin{proof}
(1): Let $f,f',g,g'\in\mathcal{T}_{\fldreal,\physham[h]}$, and write $A=\rbk{1+\smnumberdensity_{\sminvtemperature}}^{\frac{1}{2}}$ and
$B=\smnumberdensity_{\sminvtemperature}^{\frac{1}{2}}$.
By Proposition \ref{prop:sharp-time-vectors}, $\lambda\fun{M_0}{\lambda}=\fun{\coth}{\frac{\sminvtemperature\lambda}{2}}$, and
$\lambda\fun{M_{\frac{\sminvtemperature}{2}}}{\lambda}=\rbk{\fun{\sinh}{\frac{\sminvtemperature\lambda}{2}}}^{-1}$,
$$\begin{aligned}
\bkt{j_0\rbk{2\physham[h]}^{\frac{1}{2}}f}{j_0\rbk{2\physham[h]}^{\frac{1}{2}}f'}
&=
\bkt{f}{\fun{\coth}{\frac{\sminvtemperature\physham[h]}{2}}f'},
\\ 
\bkt{j_0\rbk{2\physham[h]}^{\frac{1}{2}}f}{j_{\frac{\sminvtemperature}{2}}\rbk{2\physham[h]}^{\frac{1}{2}}g'}
&=
\bkt{f}{\rbk{\fun{\sinh}{\frac{\sminvtemperature\physham[h]}{2}}}^{-1}g'},
\end{aligned}$$
and the same coth-form for $j_{\frac{\sminvtemperature}{2}}$.
For real vectors, $\bkt{\fun{\ell_{\sminvtemperature}}{f}}{\fun{\ell_{\sminvtemperature}}{f'}}=\bkt{f}{\rbk{A^2+B^2}f'}$ and
$\bkt{\fun{\ell_{\sminvtemperature}}{f}}{\fun{r_{\sminvtemperature}}{g'}}=2\bkt{f}{ABg'}$ are real, and
$A^2+B^2=\fun{\coth}{\frac{\sminvtemperature\physham[h]}{2}}$ and $2AB=\rbk{\fun{\sinh}{\frac{\sminvtemperature\physham[h]}{2}}}^{-1}$.
Hence \eqref{eq:boundary-one-particle-map} is isometric on a dense subspace of $\sphilbfrak{k}_{\sminvtemperature,\partial}$ by
Lemma \ref{lem:segal-test-space}(1), and extends uniquely.

For the density, by Lemma \ref{lem:segal-test-space} the closure of the complex span of the range contains
$\fun{\ell_{\sminvtemperature}}{f}+\fun{r_{\sminvtemperature}}{g}$ for $f,g\in\sphilbfrak{h}_{\fldreal,\smnumberdensity}$, because
$\norm{\fun{\ell_{\sminvtemperature}}{f}}=\norm{\fun{r_{\sminvtemperature}}{f}}=\norm{f}_{\smnumberdensity}$.
Let $0<\varepsilon<M$, and let $x=x_1+\imunit x_2$ and $y=y_1+\imunit y_2$ with $x_i,y_i$ in the range of the
spectral projection $\fun{\msrprb_{\physham[h]}}{\closedinterval{\varepsilon}{M}}$ on $\sphilbfrak{h}_{\fldreal}$, which is contained in
$\sphilbfrak{h}_{\fldreal,\smnumberdensity}$.
Since $A$ and $B$ are bounded on this range with $A^2-B^2=1$,
$$\begin{aligned}
x\oplus\cmpconj{y}
&=
\fun{\ell_{\sminvtemperature}}{Ax_1-By_1}+\imunit\fun{\ell_{\sminvtemperature}}{Ax_2+By_2}
+\fun{r_{\sminvtemperature}}{Ay_1-Bx_1}-\imunit\fun{r_{\sminvtemperature}}{Ay_2+Bx_2},
\end{aligned}$$
where the scalar multiplication of the conjugate space has been used in the second summand.
These vectors $x\oplus\cmpconj{y}$ are dense in $\sphilbfrak{K}_{\txtarakiwoods}$.

(2): Since $\napiernum^{2s\lambda}\rbk{\napiernum^{\sminvtemperature\lambda}-1}^{-1}\leq1+\rbk{\sminvtemperature\lambda}^{-1}$ for $s\leq\frac{\sminvtemperature}{2}$ and
$\mathcal{T}_{\fldreal,\physham[h]}\subset\dom\physham[h]^{-\frac{1}{2}}$, the vector $Bf$ lies in $\dom\napiernum^{s\physham[h]}$, so
$\fun{\ell_{\sminvtemperature}}{f}\in\dom\napiernum^{-sb}$, and $\napiernum^{-sb}\fun{\ell_{\sminvtemperature}}{f}=A\napiernum^{-s\physham[h]}f\oplus\cmpconj{B\napiernum^{s\physham[h]}f}$.
Let $u=\rbk{2\physham[h]}^{\frac{1}{2}}f$.
Taking adjoints in \eqref{eq:half-circle-projection-product} gives $e_{\partial}=e_{I_-}e_{I_+}$, so $e_{\partial}j_su=e_{I_-}j_su=re_{I_+}j_{-s}u$, and
\eqref{eq:half-circle-projection} gives
$e_{\partial}j_su=j_0\fun{\eta_{0,-s}}{\physham[h]}u+j_{\frac{\sminvtemperature}{2}}\fun{\eta_{1,-s}}{\physham[h]}u$.
By \eqref{eq:boundary-one-particle-map}, its image is $\rbk{A\eta_{0,-s}+B\eta_{1,-s}}f\oplus\cmpconj{\rbk{B\eta_{0,-s}+A\eta_{1,-s}}f}$ with the
functions evaluated at $\physham[h]$.
The exponential form of $\sinh$ gives
$A\eta_{0,-s}+B\eta_{1,-s}=A\napiernum^{-s\lambda}$ and $B\eta_{0,-s}+A\eta_{1,-s}=B\napiernum^{s\lambda}$ for $\lambda>0$, which is
\eqref{eq:boundary-semigroup}.

(3): By Proposition \ref{prop:sharp-time-vectors}, $u_{\frac{\sminvtemperature}{2}}r$ exchanges $j_0u$ and $j_{\frac{\sminvtemperature}{2}}u$, and it is an
involution.
Since $\fun{j}{\fun{\ell_{\sminvtemperature}}{f}}=\fun{r_{\sminvtemperature}}{f}$ and $\fun{j}{\fun{r_{\sminvtemperature}}{f}}=\fun{\ell_{\sminvtemperature}}{f}$ for real $f$, the two
continuous real linear maps agree on the generators.
\end{proof}

\subsection{The Euclidean realization theorem}\label{the-euclidean-realization-theorem}

In this subsection, we prove the unitary equivalence and the imaginary-time correlation identities.

Let \(\sphilbfrak{H}_{\sminvtemperature,0}=\fun{\lp^2}{\prbqspace_-,\mblfmlfrak{S}_{\partial},\msrprb_{\sminvtemperature,0}}\) be the physical Hilbert space \cite[Definition 21.42, p.~624]{DerezinskiGerard001}, and let \(\Psi_{\sminvtemperature,0}=1\). By Theorem \ref{thm:singular-gaussian-beta-markov} and \cite[Theorem 21.43 and Definition 21.44, pp.~625--626]{DerezinskiGerard001}, there is a unique self-adjoint operator \(\physliouvilean_{\txteuclid}\) on \(\sphilbfrak{H}_{\sminvtemperature,0}\) such that, for \(0\leq t<\frac{\sminvtemperature}{2}\) and \(F\in\fun{\lp^2}{\prbqspace_-,\mblfmlfrak{S}_{\closedinterval{0}{\frac{\sminvtemperature}{2}-t}},\msrprb_{\sminvtemperature,0}}\), \begin{equation}\label{eq:euclidean-liouvillean}
\begin{aligned}
E_{\partial}F
&\in
\dom\napiernum^{-t\physliouvilean_{\txteuclid}},
&
\napiernum^{-t\physliouvilean_{\txteuclid}}E_{\partial}F
&=
E_{\partial}U_tF.
\end{aligned}
\end{equation} Let \(J_{\txteuclid}\) be the antiunitary operator on \(\sphilbfrak{H}_{\sminvtemperature,0}\) defined by \[\begin{aligned}
J_{\txteuclid}E_{\partial}F
&=
E_{\partial}U_{\frac{\sminvtemperature}{2}}R\cmpconj{F},
\quad F\in\fun{\lp^2}{\prbqspace_-,\mblfmlfrak{S},\msrprb_{\sminvtemperature,0}};
\end{aligned}\] cf.~\cite[(21.29), (21.30), and Definition 21.46, p.~626]{DerezinskiGerard001}. For \(t\in\partial\), \(f\in\mathcal{T}_{\fldreal,\physham[h]}\), and \(\lambda\in\fldreal\setminus\setone{0}\), the Euclidean resolvent is the bounded \(\mblfmlfrak{S}_{\setone{t}}\)-measurable function \(\fun{\oaresolvent_{\txteuclid,t}}{\lambda,f}=\rbk{\imunit\lambda-\fun{\opfocksegal_t}{\rbk{2\physham[h]}^{\frac{1}{2}}f}}^{-1}\), which acts on \(\sphilbfrak{H}_{\sminvtemperature,0}\) by multiplication.

\begin{thm}[Euclidean realization of the Araki--Woods representations]\label{thm:araki-woods-equivalence}
There is a unique unitary $T_{\txteuclid}\colon\sphilbfrak{H}_{\sminvtemperature,0}\to\fun{\spfockgamma_{\txtbsn}}{\sphilbfrak{K}_{\txtarakiwoods}}$ such that
\begin{equation}\label{eq:weyl-intertwining}
\begin{aligned}
T_{\txteuclid}\Psi_{\sminvtemperature,0}
&=
\oagnsvector[\Psi]_{\txtarakiwoods},
\\ 
T_{\txteuclid}\napiernum^{\imunit\fun{\opfocksegal_0}{\rbk{2\physham[h]}^{\frac{1}{2}}f}}T_{\txteuclid}^{-1}
&=
\fun{\oarepn_{\txtarakiwoods,\sminvtemperature,\txtleft}}{\fun{\opfockweyl}{f}},
\\ 
T_{\txteuclid}\napiernum^{\imunit\fun{\opfocksegal_{\frac{\sminvtemperature}{2}}}{\rbk{2\physham[h]}^{\frac{1}{2}}g}}T_{\txteuclid}^{-1}
&=
\fun{\oarepn_{\txtarakiwoods,\sminvtemperature,\txtright}}{\fun{\opfockweyl}{g}},
\quad f,g\in\mathcal{T}_{\fldreal,\physham[h]}.
\end{aligned}
\end{equation}
For $f,g\in\mathcal{T}_{\fldreal,\physham[h]}$ and $\lambda,\mu\in\fldreal\setminus\setone{0}$, it satisfies
\begin{equation}\label{eq:resolvent-intertwining}
\begin{aligned}
T_{\txteuclid}\fun{\oaresolvent_{\txteuclid,0}}{\lambda,f}T_{\txteuclid}^{-1}
&=
\fun{\oarepn_{\txtarakiwoods,\txtleft,\oaresolventalgebra,\sminvtemperature}}{\fun{\oaresolvent}{\lambda,f}},
&
T_{\txteuclid}\fun{\oaresolvent_{\txteuclid,\frac{\sminvtemperature}{2}}}{\mu,g}T_{\txteuclid}^{-1}
&=
\fun{\oarepn_{\txtarakiwoods,\txtright,\oaresolventalgebra,\sminvtemperature}}{\fun{\oaresolvent}{\mu,g}},
\end{aligned}
\end{equation}
and
\begin{equation}\label{eq:modular-intertwining}
\begin{aligned}
T_{\txteuclid}\physliouvilean_{\txteuclid}
&=
\physliouvilean_{\txtarakiwoods}T_{\txteuclid},
&
T_{\txteuclid}J_{\txteuclid}
&=
J_{\txtarakiwoods}T_{\txteuclid}.
\end{aligned}
\end{equation}
In particular,
\begin{equation}\label{eq:boundary-expectations}
\begin{aligned}
\fun{\oastate[\varphi_{\txtarakiwoods,\sminvtemperature}]}{
\fun{\oarepn_{\txtarakiwoods,\sminvtemperature,\txtleft}}{\fun{\opfockweyl}{f}}
\fun{\oarepn_{\txtarakiwoods,\sminvtemperature,\txtright}}{\fun{\opfockweyl}{g}}}
&=
\sqfun{\prbexp_{\msrprb_{\sminvtemperature,0}}}{
\napiernum^{\imunit\fun{\opfocksegal_0}{\rbk{2\physham[h]}^{\frac{1}{2}}f}}
\napiernum^{\imunit\fun{\opfocksegal_{\frac{\sminvtemperature}{2}}}{\rbk{2\physham[h]}^{\frac{1}{2}}g}}},
\end{aligned}
\end{equation}
and the same identity holds with the resolvents of \eqref{eq:resolvent-intertwining} in place of the Weyl operators.
\end{thm}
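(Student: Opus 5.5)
The plan is to build $T_{\txteuclid}$ as the second quantization of the boundary one-particle map $\iota_{\partial}$ of Lemma~\ref{lem:boundary-one-particle-map}, and then read off every intertwining relation from one-particle identities.

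\textbf{Construction of $T_{\txteuclid}$.} By Proposition~\ref{prop:symmetries-conditional-expectations}, $\mblfmlfrak{S}_{\partial}$ is generated by $\fun{\opfocksegal}{\sphilbfrak{k}_{\sminvtemperature,\partial}}$ and the null sets. Hence the exponentials $\napiernum^{\imunit\fun{\opfocksegal}{v}}$, $v\in\sphilbfrak{k}_{\sminvtemperature,\partial}$, are total in $\sphilbfrak{H}_{\sminvtemperature,0}$. We define
\begin{equation*}
T_{\txteuclid}\napiernum^{\imunit\fun{\opfocksegal}{v}}=\fun{\opfockweyl}{\fun{\iota_{\partial}}{v}}\oagnsvector[\Psi]_{\txtarakiwoods}.
\end{equation*}
Inner products are preserved because the Gaussian computation gives
\begin{equation*}
\bkt{\napiernum^{\imunit\fun{\opfocksegal}{v}}}{\napiernum^{\imunit\fun{\opfocksegal}{w}}}=\napiernum^{-\frac{1}{4}\norm{w-v}^2}.
\end{equation*}
On the Fock side the same quantity is $\napiernum^{-\frac{1}{4}\norm{\iota_{\partial}w-\iota_{\partial}v}^2}$ times the phase $\napiernum^{\frac{\imunit}{2}\opimag\bkt{\iota_{\partial}v}{\iota_{\partial}w}}$. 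This phase is $1$: Lemma~\ref{lem:boundary-one-particle-map}(1) shows that $\iota_{\partial}$ preserves the real inner product, and its proof shows that the complex inner products of images of generators are real. So $T_{\txteuclid}$ is a well-defined isometry.

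It is surjective because the complex span of $\iota_{\partial}(\sphilbfrak{k}_{\sminvtemperature,\partial})$ is dense in $\sphilbfrak{K}_{\txtarakiwoods}$. Since the range of $\iota_{\partial}$ is a real subspace, I would argue as follows. Differentiating $\fun{\opfockweyl}{\iota_{\partial}v}\oagnsvector[\Psi]_{\txtarakiwoods}$ in $v$ produces polynomials in the fields $\fun{\phi}{\iota_{\partial}v}$ applied to the vacuum. Because the fields commute on the real range, these give all symmetric tensors of $\fun{a^{\ast}}{\cdot}$ over the complex span, and those are total. This is the standard cyclicity of the vacuum for a real subspace whose complex span is dense, cf.~\cite[Theorem 5.54]{DerezinskiGerard001}. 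Uniqueness of $T_{\txteuclid}$ follows from the same totality.

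\textbf{Weyl and resolvent intertwining.} For $v=j_0(2\physham[h])^{\frac{1}{2}}f$ we have $\fun{\iota_{\partial}}{v}=\fun{\ell_{\sminvtemperature}}{f}$. Then $T_{\txteuclid}$ carries multiplication by $\napiernum^{\imunit\fun{\opfocksegal}{v}}$, applied to $\napiernum^{\imunit\fun{\opfocksegal}{w}}$, to $\fun{\opfockweyl}{\ell_{\sminvtemperature}f}\fun{\opfockweyl}{\iota_{\partial}w}\oagnsvector[\Psi]_{\txtarakiwoods}$, because the Weyl relation phase vanishes again. This gives the Weyl line of \eqref{eq:weyl-intertwining}; the right Weyl operators are handled identically, and $T_{\txteuclid}\Psi_{\sminvtemperature,0}=\oagnsvector[\Psi]_{\txtarakiwoods}$ is the case $v=0$. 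Taking Stone generators, the Euclidean resolvent is $(\imunit\lambda-x)^{-1}$ evaluated at a self-adjoint multiplication operator. Since $T_{\txteuclid}$ conjugates the one-parameter groups, it conjugates the functional calculus, which gives \eqref{eq:resolvent-intertwining}. Finally \eqref{eq:boundary-expectations} is the vacuum expectation of these relations, the product on the Euclidean side being a product of commuting multiplications.

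\textbf{Liouvillean and modular conjugation.} This is the main obstacle. Second quantization is compatible with the Gaussian structure, so $E_{\partial}\napiernum^{\imunit\fun{\opfocksegal}{w}}=\fun{\Gamma}{e_{\partial}}\napiernum^{\imunit\fun{\opfocksegal}{w}}$. For $F=\napiernum^{\imunit\fun{\opfocksegal}{w}}$ with $w$ in the span of $j_\sigma(2\physham[h])^{\frac{1}{2}}f$, $\sigma\in[0,\frac{\sminvtemperature}{2}-t]$, relation \eqref{eq:euclidean-liouvillean} becomes
\begin{equation*}
\napiernum^{-t\physliouvilean_{\txteuclid}}\napiernum^{\frac{1}{4}(\norm{e_\partial w}^2-\norm{w}^2)}\napiernum^{\imunit\fun{\opfocksegal}{e_{\partial}w}}=\napiernum^{\frac{1}{4}(\norm{e_\partial u_tw}^2-\norm{w}^2)}\napiernum^{\imunit\fun{\opfocksegal}{e_{\partial}u_tw}}.
\end{equation*}
Lemma~\ref{lem:boundary-one-particle-map}(2), applied to $j_{\sigma}$ and $j_{\sigma+t}$ and then to $e_\partial u_t w$, gives
\begin{equation*}
\fun{\iota_{\partial}}{e_{\partial}u_tw}=\napiernum^{-tb}\fun{\iota_{\partial}}{e_{\partial}w}.
\end{equation*}
With the analytic continuation of $\fun{\Gamma}{\napiernum^{-tb}}$ acting on coherent vectors, $\napiernum^{-t\fun{\opfocksegal}{b}}$ maps the coherent vector of $x$ to that of $\napiernum^{-tb}x$, and the normalizations match. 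Hence $T_{\txteuclid}\napiernum^{-t\physliouvilean_{\txteuclid}}T_{\txteuclid}^{-1}$ and $\napiernum^{-t\physliouvilean_{\txtarakiwoods}}$ agree on a set of vectors.

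The delicate point is to show that these vectors form a core, so that the two self-adjoint generators coincide. I would prove that this set is invariant under the respective semigroups for small $t$ and total. Its totality follows from the density in Lemma~\ref{lem:sharp-time-total}, restricted to $\sigma$ near the boundary.

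For the conjugation, $J_{\txteuclid}$ maps $\napiernum^{\imunit\fun{\opfocksegal}{v}}$ to $\napiernum^{-\imunit\fun{\opfocksegal}{u_{\sminvtemperature/2}rv}}$ for $v\in\sphilbfrak{k}_{\sminvtemperature,\partial}$. Lemma~\ref{lem:boundary-one-particle-map}(3), together with the fact that $J_{\txtarakiwoods}=\fun{\Gamma}{j}$ is antilinear and maps $\fun{\opfockweyl}{x}\oagnsvector[\Psi]_{\txtarakiwoods}$ to $\fun{\opfockweyl}{-jx}\oagnsvector[\Psi]_{\txtarakiwoods}$ for real-range $x$, then gives $T_{\txteuclid}J_{\txteuclid}=J_{\txtarakiwoods}T_{\txteuclid}$ on a total set, hence everywhere.
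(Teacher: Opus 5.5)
Most of your proof matches the paper's: the construction of $T_{\mathrm{E}}$ from $\iota_\partial$, the isometry via the reality of the inner products, the totality of coherent vectors over a real subspace with dense complex span, the Weyl and resolvent intertwining, the boundary expectations, and the modular conjugation. Your one-particle Liouvillean computation is also the paper's: $\iota_\partial(e_\partial u_s w)=e^{-sb}\iota_\partial(e_\partial w)$ from Lemma \ref{lem:boundary-one-particle-map}(2), together with the coherent-vector formula for $e^{-s\,\mathrm{d}\Gamma(b)}$ (you wrote $\phi(b)$ where $\mathrm{d}\Gamma(b)$ is meant).

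The gap is the final step. You need to pass from $T_{\mathrm{E}}e^{-sL_{\mathrm{E}}}T_{\mathrm{E}}^{-1}y=e^{-sL_{\mathrm{AW}}}y$ for $0\le s\le t$ and $y$ in a total set to $T_{\mathrm{E}}L_{\mathrm{E}}T_{\mathrm{E}}^{-1}=L_{\mathrm{AW}}$, and the core argument you propose is not available. The Liouvilleans are not semibounded; for $L_{\mathrm{AW}}=\mathrm{d}\Gamma(b)$ this is visible from $b=h\oplus-\overline{h}$. Hence $e^{-sL}$ for $s>0$ is an unbounded operator, not a contraction semigroup. Moreover, your vectors need not lie in the domain of the generator. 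For $w=j_0(2h)^{1/2}f$ the image is the coherent vector $W(\ell_\beta(f))\Psi_{\mathrm{AW}}$, which lies in $\dom\mathrm{d}\Gamma(b)$ only if $\ell_\beta(f)\in\dom b$. That requires in particular $f\in\dom h$, i.e.\ $(2h)^{1/2}f\in\dom h^{1/2}$, and Definition \ref{def:test-space} does not imply this. The set is not invariant either: $e^{-sL_{\mathrm{E}}}$ shifts the window $[0,\beta/2-t]$ to $[s,\beta/2-t+s]$, so vectors with margin $t$ go to vectors with margin only $t-s$. For unbounded operators, agreement on a total set does not by itself identify them.

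The missing idea is the paper's analytic continuation. Put $L_1=T_{\mathrm{E}}L_{\mathrm{E}}T_{\mathrm{E}}^{-1}$. For each $y$ in your set, $y\in\dom e^{-tL_1}\cap\dom e^{-tL_{\mathrm{AW}}}$, so $z\mapsto e^{-zL_1}y$ and $z\mapsto e^{-zL_{\mathrm{AW}}}y$ are continuous on the closed strip $0\le\operatorname{Re}z\le t$ and holomorphic in its interior. They agree on $[0,t]$, hence on the whole strip, hence on $\operatorname{Re}z=0$; that is, $e^{i\theta L_1}y=e^{i\theta L_{\mathrm{AW}}}y$ for all $\theta\in\mathbb{R}$. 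The two unitary groups then agree on a total set, hence everywhere, and Stone's theorem gives $L_1=L_{\mathrm{AW}}$ with no domain or core considerations. If you want to keep a core argument, the vectors $e^{-sL}y$ with $0<s<t$ do lie in $\dom L$, but proving that they form a core needs exactly this analyticity.

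A smaller point: the totality of these vectors does not come from Lemma \ref{lem:sharp-time-total}, which concerns all of $\mathfrak{H}_{\beta,C}$. It comes from two facts. First, $\sigma\mapsto j_\sigma u$ is continuous, so the windows $[0,\beta/2-t]$, $0<t<\beta/2$, together span a dense subspace of $\mathfrak{k}_{\beta,I_+}$. Second, $E_\partial=\Gamma(e_\partial)$ maps the exponentials of $\phi(\mathfrak{k}_{\beta,I_+})$ onto multiples of those of $\phi(\mathfrak{k}_{\beta,\partial})$.
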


\begin{proof}
\emph{Construction.}
For $v,w\in\sphilbfrak{k}_{\sminvtemperature,\partial}$, the function $\napiernum^{\imunit\fun{\opfocksegal}{v}}$ belongs to $\sphilbfrak{H}_{\sminvtemperature,0}$ by
Proposition \ref{prop:symmetries-conditional-expectations}, and \eqref{eq:characteristic-functional} gives
$\bkt{\napiernum^{\imunit\fun{\opfocksegal}{v}}}{\napiernum^{\imunit\fun{\opfocksegal}{w}}}=\napiernum^{-\frac{1}{4}\norm{w-v}^2}$.
By Lemma \ref{lem:boundary-one-particle-map}(1), the inner products of the vectors $\fun{\iota_{\partial}}{v}$ are real, so the Weyl relations
give the same inner products for the vectors $\fun{\opfockweyl}{\fun{\iota_{\partial}}{v}}\oagnsvector[\Psi]_{\txtarakiwoods}$.
The functions $\napiernum^{\imunit\fun{\opfocksegal}{v}}$, $v\in\sphilbfrak{k}_{\sminvtemperature,\partial}$, are total in $\sphilbfrak{H}_{\sminvtemperature,0}$, because
the exponentials $\napiernum^{\imunit\fun{\opfocksegal}{f}}$, $f\in\prbqspace_+$, are total in $\fun{\lp^2}{\prbqspace_-,\msrprb_{\sminvtemperature,0}}$
\cite[Theorem 5.54, p.~125]{DerezinskiGerard001} and $E_{\partial}=\fun{\Gamma}{e_{\partial}}$ maps them to multiples of
$\napiernum^{\imunit\fun{\opfocksegal}{e_{\partial}f}}$.
The coherent vectors $\fun{\opfockweyl}{G}\oagnsvector[\Psi]_{\txtarakiwoods}$ with $G$ in a real subspace whose complex span is dense are total in the
Araki--Woods Fock space $\fun{\spfockgamma_{\txtbsn}}{\sphilbfrak{K}_{\txtarakiwoods}}$, by differentiating $t\mapsto\fun{\opfockweyl}{tG}\oagnsvector[\Psi]_{\txtarakiwoods}$ at $t=0$ and polarizing.
Hence there is a unique unitary with
\begin{equation}\label{eq:euclidean-exponential-map}
\begin{aligned}
T_{\txteuclid}\napiernum^{\imunit\fun{\opfocksegal}{v}}
&=
\fun{\opfockweyl}{\fun{\iota_{\partial}}{v}}\oagnsvector[\Psi]_{\txtarakiwoods},
\quad v\in\sphilbfrak{k}_{\sminvtemperature,\partial}.
\end{aligned}
\end{equation}

\emph{Weyl operators and resolvents.}
For $f\in\mathcal{T}_{\fldreal,\physham[h]}$, multiplication by $\napiernum^{\imunit\fun{\opfocksegal_0}{\rbk{2\physham[h]}^{\frac{1}{2}}f}}$ maps
$\napiernum^{\imunit\fun{\opfocksegal}{v}}$ to $\napiernum^{\imunit\fun{\opfocksegal}{v+j_0\rbk{2\physham[h]}^{\frac{1}{2}}f}}$, and
$\fun{\opfockweyl}{\fun{\ell_{\sminvtemperature}}{f}}\fun{\opfockweyl}{\fun{\iota_{\partial}}{v}}=\fun{\opfockweyl}{\fun{\ell_{\sminvtemperature}}{f}+\fun{\iota_{\partial}}{v}}$, since the
imaginary part of the inner product vanishes; this gives \eqref{eq:weyl-intertwining}, and the right relation is the same
with $r_{\sminvtemperature}$.
The vectors $\napiernum^{\imunit\fun{\opfocksegal_0}{u}}\napiernum^{\imunit\fun{\opfocksegal_{\frac{\sminvtemperature}{2}}}{u'}}$, $u,u'\in\mathcal{T}_{\fldreal}$, are the functions
$\napiernum^{\imunit\fun{\opfocksegal}{j_0u+j_{\frac{\sminvtemperature}{2}}u'}}$, which are total, so \eqref{eq:weyl-intertwining} determines $T_{\txteuclid}$.
Applying \eqref{eq:weyl-intertwining} to $tf$, $t\in\fldreal$, shows that $T_{\txteuclid}$ intertwines the self-adjoint generators
$\fun{\opfocksegal_0}{\rbk{2\physham[h]}^{\frac{1}{2}}f}$ and $\fun{\phi_{\txtleft,\sminvtemperature}}{f}$, and the bounded functional calculus gives
\eqref{eq:resolvent-intertwining}; the expectation identities are the vacuum expectations of these relations.

\emph{Liouvillean.}
Fix $0<t<\frac{\sminvtemperature}{2}$, let $w$ be a finite real linear combination of vectors $j_\sigma u$ with
$\sigma\in\closedinterval{0}{\frac{\sminvtemperature}{2}-t}$ and $u\in\mathcal{T}_{\fldreal}$, and let $F=\napiernum^{\imunit\fun{\opfocksegal}{w}}$ and
$0\leq s\leq t$.
By \eqref{eq:euclidean-liouvillean}, \eqref{eq:gaussian-second-quantization}, and \eqref{eq:field-transformations},
$$\begin{aligned}
\napiernum^{-s\physliouvilean_{\txteuclid}}E_{\partial}F
&=
E_{\partial}U_sF
=
\napiernum^{-\frac{1}{4}\norm{\rbk{1-e_{\partial}}u_sw}^2}\napiernum^{\imunit\fun{\opfocksegal}{e_{\partial}u_sw}}.
\end{aligned}$$
Lemma \ref{lem:boundary-one-particle-map}(2) gives $G=\fun{\iota_{\partial}}{e_{\partial}w}\in\dom\napiernum^{-sb}$ and
$\napiernum^{-sb}G=\fun{\iota_{\partial}}{e_{\partial}u_sw}$.
On the Araki--Woods Fock space, the coherent-vector expansion of $\fun{\opfockweyl}{G}\oagnsvector[\Psi]_{\txtarakiwoods}$ gives
$$\begin{aligned}
\napiernum^{-s\physliouvilean_{\txtarakiwoods}}\fun{\opfockweyl}{G}\oagnsvector[\Psi]_{\txtarakiwoods}
&=
\napiernum^{\frac{1}{4}\rbk{\norm{\napiernum^{-sb}G}^2-\norm{G}^2}}\fun{\opfockweyl}{\napiernum^{-sb}G}\oagnsvector[\Psi]_{\txtarakiwoods},
\end{aligned}$$
and $\norm{u_sw}=\norm{w}$, so
$T_{\txteuclid}\napiernum^{-s\physliouvilean_{\txteuclid}}E_{\partial}F=\napiernum^{-s\physliouvilean_{\txtarakiwoods}}T_{\txteuclid}E_{\partial}F$ for $0\leq s\leq t$.
Let $y=T_{\txteuclid}E_{\partial}F$ and $\physliouvilean_1=T_{\txteuclid}\physliouvilean_{\txteuclid}T_{\txteuclid}^{-1}$.
The functions $z\mapsto\napiernum^{-z\physliouvilean_1}y$ and $z\mapsto\napiernum^{-z\physliouvilean_{\txtarakiwoods}}y$ are continuous on
$\set{z\in\fldcmp}{0\leq\opreal z\leq t}$, holomorphic in its interior, and agree on $\closedinterval{0}{t}$; hence
$\napiernum^{\imunit\theta\physliouvilean_1}y=\napiernum^{\imunit\theta\physliouvilean_{\txtarakiwoods}}y$ for $\theta\in\fldreal$.
These vectors $E_{\partial}F$, for all $0<t<\frac{\sminvtemperature}{2}$, are total in $\sphilbfrak{H}_{\sminvtemperature,0}$, because
$\sigma\mapsto j_\sigma u$ is continuous and $E_{\partial}$ maps the exponentials of $\fun{\opfocksegal}{\sphilbfrak{k}_{\sminvtemperature,I_+}}$ onto multiples of
those of $\fun{\opfocksegal}{\sphilbfrak{k}_{\sminvtemperature,\partial}}$.
Hence $\physliouvilean_1=\physliouvilean_{\txtarakiwoods}$.

\emph{Modular conjugation.}
For $v\in\sphilbfrak{k}_{\sminvtemperature,\partial}$, \eqref{eq:field-transformations} and Lemma \ref{lem:boundary-one-particle-map}(3) give
$J_{\txteuclid}\napiernum^{\imunit\fun{\opfocksegal}{v}}=\napiernum^{-\imunit\fun{\opfocksegal}{u_{\frac{\sminvtemperature}{2}}rv}}$, while
$J_{\txtarakiwoods}\fun{\opfockweyl}{G}\oagnsvector[\Psi]_{\txtarakiwoods}=\fun{\opfockweyl}{-jG}\oagnsvector[\Psi]_{\txtarakiwoods}$ by the antilinearity of $j^{\otimes n}$.
Lemma \ref{lem:boundary-one-particle-map}(3) gives $T_{\txteuclid}J_{\txteuclid}=J_{\txtarakiwoods}T_{\txteuclid}$ on a total set, hence everywhere.
\end{proof}

\begin{rem}[Comparison with the Euclidean construction]\label{rem:comparison-euclidean-construction}
Theorem \ref{thm:araki-woods-equivalence} corresponds to \cite[Theorem 21.58, p.~635]{DerezinskiGerard001}, with $\sphilbfrak{h}_{\fldreal}$ and $\physham[h]$
in place of the real Hilbert space and the one-particle operator there.
The time-zero fields there are indexed by $\sphilbfrak{R}$, as noted after Definition \ref{def:sharp-time-vectors}, and the Fock space there is over
$\rbk{2\physham[h]}^{\frac{1}{2}}$ applied to the complexification of $\sphilbfrak{h}_{\fldreal}$, plus its conjugate, in the notation of
\cite[Definition 2.59, p.~47]{DerezinskiGerard001}.
Under the unitary map induced by $\rbk{2\physham[h]}^{-\frac{1}{2}}$ \cite[Proposition 2.60(2), p.~47]{DerezinskiGerard001}, this Fock space becomes the Araki--Woods
Fock space $\fun{\spfockgamma_{\txtbsn}}{\sphilbfrak{K}_{\txtarakiwoods}}$, and the vector $g$ there corresponds to $f=\rbk{2\physham[h]}^{-\frac{1}{2}}g$, which is the parametrization
$f\mapsto\rbk{2\physham[h]}^{\frac{1}{2}}f$ of the Segal sharp-time field and of its extension $\iota_{\physham[h]}$.
In the formula for the left Araki--Woods Weyl operators there, the first component is printed as $\rbk{1+2\rho}^{\frac{1}{2}}g$, whereas
\cite[Definition 17.22, p.~437]{DerezinskiGerard001} and $\ell_{\sminvtemperature}$ have the factor $\rbk{1+\rho}^{\frac{1}{2}}$.
The latter is the one compatible with \eqref{eq:segal-equal-time-covariance}, since
$\norm{\rbk{1+2\smnumberdensity_{\sminvtemperature}}^{\frac{1}{2}}f\oplus\cmpconj{\smnumberdensity_{\sminvtemperature}^{\frac{1}{2}}f}}^2
=\norm{f}^2+3\norm{\smnumberdensity_{\sminvtemperature}^{\frac{1}{2}}f}^2$ differs from $\norm{\fun{\ell_{\sminvtemperature}}{f}}^2=\norm{f}_{\smnumberdensity}^2$ for
$0\neq f\in\sphilbfrak{h}_{\fldreal,\smnumberdensity}$, because $\smnumberdensity_{\sminvtemperature}^{\frac{1}{2}}$ is injective; the operator factor
$1+2\rho$ in the exponent of the proof of that theorem \cite[p.~636]{DerezinskiGerard001} also requires the factor $\rbk{1+\rho}^{\frac{1}{2}}$.
The generalized Gaussian measure there has covariance $\rbk{D_t^2+\physham[h]^2}^{-1}$, whereas ours has the Segal normalization.
The identifications of the time-$\frac{\sminvtemperature}{2}$ fields and of the resolvents, and Proposition \ref{prop:imaginary-time-correlations},
are not stated there.
\end{rem}

For \(n\in\semigrposint\) and \(0\leq t_1\leq\cdots\leq t_n\leq\frac{\sminvtemperature}{2}\), set \(t_0=t_1\), and for bounded operators \(Y_1,\ldots,Y_n\) write the ordered product \(\prod_{j=1}^{n}\rbk{\napiernum^{-\rbk{t_j-t_{j-1}}\physliouvilean_{\txtarakiwoods}}Y_j}\oagnsvector[\Psi]_{\txtarakiwoods}\) for the vector \(x_1\) defined by \(x_n=Y_n\oagnsvector[\Psi]_{\txtarakiwoods}\) and \(x_k=Y_k\napiernum^{-\rbk{t_{k+1}-t_k}\physliouvilean_{\txtarakiwoods}}x_{k+1}\), \(k=n-1,\ldots,1\), whenever each \(x_{k+1}\) lies in the domain of the exponential applied to it.

\begin{prop}[Imaginary-time correlation identities]\label{prop:imaginary-time-correlations}
Let $n\in\semigrposint$, $0\leq t_1\leq\cdots\leq t_n\leq\frac{\sminvtemperature}{2}$, $f_1,\ldots,f_n\in\mathcal{T}_{\fldreal,\physham[h]}$, and
$\lambda_1,\ldots,\lambda_n\in\fldreal\setminus\setone{0}$, and let either $Y_j=\fun{\oarepn_{\txtarakiwoods,\sminvtemperature,\txtleft}}{\fun{\opfockweyl}{f_j}}$ for
every $j$ or $Y_j=\fun{\oarepn_{\txtarakiwoods,\txtleft,\oaresolventalgebra,\sminvtemperature}}{\fun{\oaresolvent}{\lambda_j,f_j}}$ for every $j$.
Then the ordered product is defined, and
\begin{align}
&\bkt{\oagnsvector[\Psi]_{\txtarakiwoods}}{
\prod_{j=1}^{n}\rbk{\napiernum^{-\rbk{t_j-t_{j-1}}\physliouvilean_{\txtarakiwoods}}
\fun{\oarepn_{\txtarakiwoods,\sminvtemperature,\txtleft}}{\fun{\opfockweyl}{f_j}}}
\oagnsvector[\Psi]_{\txtarakiwoods}}
\notag
\\ 
&=
\sqfun{\prbexp_{\msrprb_{\sminvtemperature,0}}}{
\prod_{j=1}^{n}\napiernum^{\imunit\fun{\opfocksegal_{t_j}}{\rbk{2\physham[h]}^{\frac{1}{2}}f_j}}},
\label{eq:weyl-correlation}
\\ 
&\bkt{\oagnsvector[\Psi]_{\txtarakiwoods}}{
\prod_{j=1}^{n}\rbk{\napiernum^{-\rbk{t_j-t_{j-1}}\physliouvilean_{\txtarakiwoods}}
\fun{\oarepn_{\txtarakiwoods,\txtleft,\oaresolventalgebra,\sminvtemperature}}{\fun{\oaresolvent}{\lambda_j,f_j}}}
\oagnsvector[\Psi]_{\txtarakiwoods}}
\notag
\\ 
&=
\sqfun{\prbexp_{\msrprb_{\sminvtemperature,0}}}{
\prod_{j=1}^{n}\rbk{\imunit\lambda_j-\fun{\opfocksegal_{t_j}}{\rbk{2\physham[h]}^{\frac{1}{2}}f_j}}^{-1}}.
\label{eq:resolvent-correlation}
\end{align}
\end{prop}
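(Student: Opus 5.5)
We transport the Euclidean side to the Araki--Woods side with the unitary $T_{\txteuclid}$ of Theorem \ref{thm:araki-woods-equivalence} and use the Markov reconstruction \eqref{eq:euclidean-liouvillean} to peel off one time step at a time. Write $\theta_j=\fun{\opfocksegal_0}{\rbk{2\physham[h]}^{\frac{1}{2}}f_j}$, and let $X_j$ be either $\napiernum^{\imunit\theta_j}$ or $\rbk{\imunit\lambda_j-\theta_j}^{-1}$; both are bounded and $\mblfmlfrak{S}_0$-measurable. By \eqref{eq:weyl-intertwining} and \eqref{eq:resolvent-intertwining}, $T_{\txteuclid}X_jT_{\txteuclid}^{-1}=Y_j$ as multiplication operators on $\sphilbfrak{H}_{\sminvtemperature,0}$. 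By \eqref{eq:modular-intertwining}, $T_{\txteuclid}$ also intertwines $\napiernum^{-s\physliouvilean_{\txteuclid}}$ with $\napiernum^{-s\physliouvilean_{\txtarakiwoods}}$, including domains. The statement therefore reduces to a purely Euclidean claim.

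\emph{Euclidean claim.} Put $\tau_k=t_k-t_1$ and define $x^{\txteuclid}_n=X_n\Psi_{\sminvtemperature,0}$ and $x^{\txteuclid}_k=X_k\napiernum^{-\rbk{t_{k+1}-t_k}\physliouvilean_{\txteuclid}}x^{\txteuclid}_{k+1}$. We show by downward induction on $k$ that each $x^{\txteuclid}_{k+1}$ lies in the required domain and that
$$x^{\txteuclid}_k=E_{\partial}\Big(\prod_{j=k}^{n}U_{t_j-t_k}X_j\Big).$$
The product inside the conditional expectation is $\mblfmlfrak{S}_{\closedinterval{0}{t_n-t_k}}$-measurable and bounded.

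\emph{Inductive step.} Let $F=\prod_{j=k+1}^{n}U_{t_j-t_{k+1}}X_j$. It is $\mblfmlfrak{S}_{\closedinterval{0}{t_n-t_{k+1}}}$-measurable, and with $s=t_{k+1}-t_k$ we have $t_n-t_{k+1}\leq\frac{\sminvtemperature}{2}-s$. So \eqref{eq:euclidean-liouvillean} gives $E_{\partial}F\in\dom\napiernum^{-s\physliouvilean_{\txteuclid}}$ and $\napiernum^{-s\physliouvilean_{\txteuclid}}E_{\partial}F=E_{\partial}U_sF$. Multiplying by the $\mblfmlfrak{S}_0$-measurable $X_k$ commutes with $E_{\partial}$, since $\mblfmlfrak{S}_0\subset\mblfmlfrak{S}_\partial$, and $U_sF\cdot X_k$ is the claimed product because $U_sU_{t_j-t_{k+1}}=U_{t_j-t_k}$.

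\emph{Boundary case.} The case $s=\frac{\sminvtemperature}{2}$ is where \eqref{eq:euclidean-liouvillean} as stated (with $t<\frac{\sminvtemperature}{2}$) does not directly apply. It occurs only when $t_1=0$ and $t_n=\frac{\sminvtemperature}{2}$ in a two-point gap, and I expect it to be the main technical obstacle. There $F=X_n$ is $\mblfmlfrak{S}_0$-measurable, and I would treat it separately: $U_{\frac{\sminvtemperature}{2}}X_n$ is $\mblfmlfrak{S}_{\setone{\frac{\sminvtemperature}{2}}}\subset\mblfmlfrak{S}_\partial$-measurable. Either the reconstruction theorem \cite[Theorem 21.43]{DerezinskiGerard001} covers $t=\frac{\sminvtemperature}{2}$ for $\mblfmlfrak{S}_0$-measurable $F$, or we take the limit $s\uparrow\frac{\sminvtemperature}{2}$. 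In the limit, $E_\partial U_sX_n\to U_{\frac{\sminvtemperature}{2}}X_n$ in $\lp^2$ by strong continuity of the Koopman group (Theorem \ref{thm:singular-gaussian-beta-markov}), and closedness of $\napiernum^{-\frac{\sminvtemperature}{2}\physliouvilean_{\txteuclid}}$ together with a spectral-measure (monotone convergence) bound on $\norm{\napiernum^{-s\physliouvilean_{\txteuclid}}E_\partial X_n}\leq\norm{X_n}_\infty$ gives the domain statement. If $n\geq3$ with intermediate points, all gaps are $<\frac{\sminvtemperature}{2}$, except when points coincide at the ends; but coinciding points merge into a product of $\mblfmlfrak{S}_0$-functions, so the same argument applies.

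\emph{Conclusion.} At $k=1$, $\bkt{\Psi_{\sminvtemperature,0}}{x^{\txteuclid}_1}=\prbexp\big[\prod_jU_{t_j-t_1}X_j\big]$. By the $U$-invariance of $\msrprb_{\sminvtemperature,0}$ (Proposition \ref{prop:gaussian-measure}) and \eqref{eq:field-transformations}, $U_{t_1}U_{t_j-t_1}X_j$ is the corresponding function of $\fun{\opfocksegal_{t_j}}{\rbk{2\physham[h]}^{\frac{1}{2}}f_j}$. This yields \eqref{eq:weyl-correlation} and \eqref{eq:resolvent-correlation}. Finally, $T_{\txteuclid}x^{\txteuclid}_k=x_k$ by the intertwinings and $T_{\txteuclid}\Psi_{\sminvtemperature,0}=\oagnsvector[\Psi]_{\txtarakiwoods}$, which transfers both definedness and the value of the ordered product.
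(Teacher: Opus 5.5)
Your proposal is correct and follows essentially the same route as the paper's proof. Both use downward induction to identify each vector of the ordered product with $E_{\partial}\bigl(\prod_{j=k}^{n}U_{t_j-t_k}X_j\bigr)$ via \eqref{eq:euclidean-liouvillean}, with the same extra step for a gap equal to $\beta/2$: there the remaining product is $\mathfrak{S}_0$-measurable, and one takes the limit in the spectral measure of $L_{\mathrm{E}}$, where your uniform bound $\|e^{-sL_{\mathrm{E}}}E_{\partial}G\|\leq\|G\|_{\infty}$ with Fatou's lemma (not closedness alone) is what makes this work; both then conclude by $U_{t_1}$-invariance of the measure and transfer through $T_{\mathrm{E}}$.
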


\begin{proof}
Let $A_j$ be $\napiernum^{\imunit\fun{\opfocksegal_0}{\rbk{2\physham[h]}^{\frac{1}{2}}f_j}}$ or $\fun{\oaresolvent_{\txteuclid,0}}{\lambda_j,f_j}$, so that
$U_tA_j$ is the corresponding function at time $t$ by \eqref{eq:field-transformations}.
For $k=n,\ldots,1$, let $G_k=\prod_{j=k}^{n}U_{t_j-t_k}A_j$, which is bounded and $\mblfmlfrak{S}_{\closedinterval{0}{t_n-t_k}}$-measurable.
Downward induction shows that the vectors $v_n=A_n$ and $v_k=A_k\napiernum^{-\rbk{t_{k+1}-t_k}\physliouvilean_{\txteuclid}}v_{k+1}$ are defined and
$v_k=E_{\partial}G_k$.
Indeed, if $\delta=t_{k+1}-t_k<\frac{\sminvtemperature}{2}$, then \eqref{eq:euclidean-liouvillean} applies to $G_{k+1}$ because
$\delta+\rbk{t_n-t_{k+1}}\leq\frac{\sminvtemperature}{2}$; if $\delta=\frac{\sminvtemperature}{2}$, then $G_{k+1}$ is $\mblfmlfrak{S}_0$-measurable,
\eqref{eq:euclidean-liouvillean} holds for every $\delta'<\frac{\sminvtemperature}{2}$, and the limit $\delta'\uparrow\frac{\sminvtemperature}{2}$ with respect to the
spectral measure $\msrprb_{\physliouvilean_{\txteuclid}}$ gives the identity at $\delta$.
Since $A_k$ is bounded and $\mblfmlfrak{S}_{\partial}$-measurable, $A_kE_{\partial}U_{\delta}G_{k+1}=E_{\partial}G_k$.
Then $\bkt{1}{v_1}=\sqfun{\prbexp_{\msrprb_{\sminvtemperature,0}}}{G_1}=\sqfun{\prbexp_{\msrprb_{\sminvtemperature,0}}}{\prod_jU_{t_j}A_j}$ by the invariance of
$\msrprb_{\sminvtemperature,0}$ under $U_{t_1}$.
Theorem \ref{thm:araki-woods-equivalence} transfers the vectors $v_k$, the operators $A_k$, and $\napiernum^{-s\physliouvilean_{\txteuclid}}$ with their
domains to the Araki--Woods Fock space.
\end{proof}

\begin{cor}[Extension to the domain of the Planck density]\label{cor:planck-domain-extension}
For $f\in\sphilbfrak{h}_{\fldreal,\smnumberdensity}$ and $t\in S_{\sminvtemperature}$, the field $\fun{\opfocksegal_t}{\fun{\iota_{\physham[h]}}{f}}$ is
$\mblfmlfrak{S}_{\setone{t}}$-measurable and $U_\tau\fun{\opfocksegal_t}{\fun{\iota_{\physham[h]}}{f}}=\fun{\opfocksegal_{t+\tau}}{\fun{\iota_{\physham[h]}}{f}}$.
Theorem \ref{thm:araki-woods-equivalence} and Proposition \ref{prop:imaginary-time-correlations} hold for $f,g,f_1,\ldots,f_n\in\sphilbfrak{h}_{\fldreal,\smnumberdensity}$,
with $\fun{\iota_{\physham[h]}}{f}$ in place of $\rbk{2\physham[h]}^{\frac{1}{2}}f$.
\end{cor}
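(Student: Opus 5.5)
The plan is to pass from $\mathcal{T}_{\fldreal,\physham[h]}$ to $\sphilbfrak{h}_{\fldreal,\smnumberdensity}$ by density, using the isometries $\iota_{\physham[h]}$ of Lemma \ref{lem:segal-test-space}(3) and $j_t$ of Proposition \ref{prop:sharp-time-vectors}.

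\emph{Fields.} For $f\in\sphilbfrak{h}_{\fldreal,\smnumberdensity}$, Lemma \ref{lem:segal-test-space}(2) gives $f_m\in\mathcal{T}_{\fldreal,\physham[h]}$ with $\norm{f_m-f}_{\smnumberdensity}\to0$. Then $\iota_{\physham[h]}f_m=(2\physham[h])^{\frac{1}{2}}f_m\to\iota_{\physham[h]}f$ in $\sphilbfrak{R}$, and $j_t\iota_{\physham[h]}f_m\to j_t\iota_{\physham[h]}f$ in $\sphilbfrak{H}_{\sminvtemperature,C}$. Hence $\opfocksegal_t(\iota_{\physham[h]}f_m)\to\opfocksegal_t(\iota_{\physham[h]}f)$ in $\lp^2$, by the continuity of $\opfocksegal$ in Proposition \ref{prop:gaussian-measure}. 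Equivalently, $j_t\iota_{\physham[h]}f\in\sphilbfrak{k}_{\sminvtemperature,\{t\}}$. Since $\mblfmlfrak{S}_{\{t\}}$ contains the null sets and $\lp^2$ limits are a.e.\ limits along a subsequence, the field is $\mblfmlfrak{S}_{\{t\}}$-measurable, in agreement with Proposition \ref{prop:symmetries-conditional-expectations}. The covariance identity $U_\tau\opfocksegal_t(g)=\opfocksegal_{t+\tau}(g)$ for $g\in\sphilbfrak{R}$ is already in \eqref{eq:field-transformations}.

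\emph{Theorem \ref{thm:araki-woods-equivalence}.} Keep the same unitary $T_{\txteuclid}$; it was defined intrinsically by \eqref{eq:euclidean-exponential-map}. First extend the boundary identity \eqref{eq:boundary-one-particle-map} to $f,g\in\sphilbfrak{h}_{\fldreal,\smnumberdensity}$ in the form
\[
\iota_\partial\rbk{j_0\iota_{\physham[h]}f+j_{\frac{\sminvtemperature}{2}}\iota_{\physham[h]}g}=\ell_{\sminvtemperature}(f)+r_{\sminvtemperature}(g).
\]
Both sides are continuous in $\norm{\cdot}_{\smnumberdensity}$: the left side because $\iota_\partial$ is isometric, and the right side because $\norm{\ell_{\sminvtemperature}(f)}=\norm{r_{\sminvtemperature}(f)}=\norm{f}_{\smnumberdensity}$. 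The Weyl intertwining then follows exactly as in the proof of the theorem. Alternatively, Weyl operators depend strongly continuously on their arguments, both on the Fock side and via $\lp^2$ convergence of $\opfocksegal$ with dominated convergence for the exponentials. The resolvents follow by the bounded functional calculus from the intertwined generators. The Liouvillean and the modular conjugation do not involve $f$, so \eqref{eq:modular-intertwining} is unchanged. The expectation identities \eqref{eq:boundary-expectations} are vacuum expectations, as before.

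\emph{Proposition \ref{prop:imaginary-time-correlations}.} The step I expect to need the most care is the ordered products, because domains of $\napiernum^{-s\physliouvilean_{\txtarakiwoods}}$ enter. I would simply rerun the proof of that proposition with $A_j$ built from $\iota_{\physham[h]}f_j$. It only used that $A_j$ is bounded and $\mblfmlfrak{S}_0$-measurable and that $U_tA_j$ is the time-$t$ function. Both now hold by the first step: measurability passes to bounded functions of the field, and $U_t$ commutes with bounded functional calculus. The downward induction then gives $v_k=E_\partial G_k$, hence the existence of the ordered product and the identities via $T_{\txteuclid}$. If one prefers approximation instead, the difficulty is passing limits through the unbounded $\napiernum^{-s\physliouvilean}$. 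The direct rerun avoids this, so I would use it.
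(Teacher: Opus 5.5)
Your proposal is correct and follows essentially the paper's argument: approximate $f$ by $f_m\in\mathcal{T}_{\fldreal,\physham[h]}$ in $\norm{\cdot}_{\smnumberdensity}$, pass to limits through the isometries $\iota_{\physham[h]}$, $j_t$ and $\ell_{\sminvtemperature}$, obtain the resolvents from the intertwined generators, and rerun the proof of Proposition \ref{prop:imaginary-time-correlations} unchanged. The differences are minor. You read the translation relation directly off \eqref{eq:field-transformations}, which is already stated for $g\in\sphilbfrak{R}$, and you extend \eqref{eq:boundary-one-particle-map} by continuity before reusing the Weyl-relation computation, whereas the paper obtains both by approximation and uses the strong continuity of the Weyl operators on both sides, a route you also mention.
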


\begin{proof}
Let $f_m\in\mathcal{T}_{\fldreal,\physham[h]}$ converge to $f$ in $\norm{\cdot}_{\smnumberdensity}$ (Lemma \ref{lem:segal-test-space}).
Since $\norm{\fun{\opfocksegal_t}{\fun{\iota_{\physham[h]}}{f_m}}-\fun{\opfocksegal_t}{\fun{\iota_{\physham[h]}}{f}}}_{\lp^2}=2^{-\frac{1}{2}}\norm{f_m-f}_{\smnumberdensity}$,
the measurability and the translation relation follow from those for $f_m$, and multiplication by the exponentials converges strongly
along almost surely convergent subsequences.
On the Araki--Woods Fock space, $\norm{\fun{\ell_{\sminvtemperature}}{f_m}-\fun{\ell_{\sminvtemperature}}{f}}=\norm{f_m-f}_{\smnumberdensity}$, and the Weyl operators depend
strongly continuously on their arguments; this extends \eqref{eq:weyl-intertwining}.
The resolvent identities follow as in the proof of Theorem \ref{thm:araki-woods-equivalence}, and the proof of
Proposition \ref{prop:imaginary-time-correlations} applies unchanged.
\end{proof}

\section{Quasi-local construction}\label{sec:quasi-local}

In this section, we show that the path spaces of Definition \ref{def:singular-gaussian-beta-markov-path-space} for a sequence of one-particle Hamiltonians joined by isometric embeddings form a projective system, as for the nested momentum lattices of finite-volume approximations, and we construct its projective limit.

\begin{defn}[Directed system of one-particle Hamiltonians]\label{def:directed-system}
For $n\in\monnat$, let $\sphilbfrak{h}_n$ and $\physham[h]_n$ be as in Definition \ref{def:one-particle-hamiltonian}, and let
$\mathcal{T}_{\fldreal,n}$ be a test space for $\physham[h]_n$ in the sense of Definition \ref{def:test-space}.
For $n\leq m$, let $\tilde{\iota}_{n,m}\colon\sphilbfrak{h}_n\to\sphilbfrak{h}_m$ be complex-linear isometries such that, for
$n\leq m\leq l$ and Borel sets $E\subset\fldreal$,
\begin{equation}\label{eq:directed-system}
\begin{aligned}
\tilde{\iota}_{n,n}
&=
1,
\\ 
\tilde{\iota}_{m,l}\tilde{\iota}_{n,m}
&=
\tilde{\iota}_{n,l},
\\ 
\tilde{\iota}_{n,m}\fun{\msrprb_{\physham[h]_n}}{E}
&=
\fun{\msrprb_{\physham[h]_m}}{E}\tilde{\iota}_{n,m},
\\ 
\tilde{\iota}_{n,m}\mathcal{T}_{\fldreal,n}
&\subset
\mathcal{T}_{\fldreal,m}.
\end{aligned}
\end{equation}
The objects of Sections \ref{sec:setting} and \ref{sec:sharp-time} for $\physham[h]_n$ carry the subscript $n$: $C_n$,
$\sphilbfrak{H}_{\sminvtemperature,C_n}$, $V_n$, and $j_{n,t}$; the operators $u_\tau$ and $r$ keep their names on
each space.
For $n\leq m$, let $\iota_{\sminvtemperature,n,m}$ be the isometry
$\fun{\rbk{\iota_{\sminvtemperature,n,m}F}}{t}=\tilde{\iota}_{n,m}\fun{F}{t}$ from
$\fun{\lp^2}{S_{\sminvtemperature};\sphilbfrak{h}_{n,\fldreal}}$ into $\fun{\lp^2}{S_{\sminvtemperature};\sphilbfrak{h}_{m,\fldreal}}$.
\end{defn}

\begin{lem}[Embeddings of the covariance Hilbert spaces]\label{lem:covariance-embeddings}
Let $n\leq m\leq l$.
Then $\iota_{\sminvtemperature,n,m}$ maps $\dom C_n$ into $\dom C_m$, its restriction to $\dom C_n$ extends to an isometry
$\overline{\iota}_{\sminvtemperature,n,m}\colon\sphilbfrak{H}_{\sminvtemperature,C_n}\to\sphilbfrak{H}_{\sminvtemperature,C_m}$, and, for
$\tau\in\fldreal$, $t\in S_{\sminvtemperature}$, and $g\in\dom\physham[h]_n^{-1}$,
\begin{equation}\label{eq:covariance-embeddings}
\begin{aligned}
V_m\overline{\iota}_{\sminvtemperature,n,m}
&=
\iota_{\sminvtemperature,n,m}V_n,
\\ 
\overline{\iota}_{\sminvtemperature,n,m}j_{n,t}g
&=
j_{m,t}\tilde{\iota}_{n,m}g,
\\ 
\overline{\iota}_{\sminvtemperature,n,m}u_\tau
&=
u_\tau\overline{\iota}_{\sminvtemperature,n,m},
\quad
\overline{\iota}_{\sminvtemperature,n,m}r
=
r\overline{\iota}_{\sminvtemperature,n,m},
\\ 
\overline{\iota}_{\sminvtemperature,n,n}
&=
1,
\quad
\overline{\iota}_{\sminvtemperature,m,l}\overline{\iota}_{\sminvtemperature,n,m}
=
\overline{\iota}_{\sminvtemperature,n,l}.
\end{aligned}
\end{equation}
\end{lem}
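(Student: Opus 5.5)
The whole proof reduces to the fact that \(\tilde{\iota}_{n,m}\) intertwines the functional calculi: \(\tilde{\iota}_{n,m}\fun{\varphi}{\physham[h]_n}\subset\fun{\varphi}{\physham[h]_m}\tilde{\iota}_{n,m}\) for every real Borel \(\varphi\) on \(\openinterval{0}{\infty}\). I would derive this first from the third relation of \eqref{eq:directed-system}. Since \(\tilde{\iota}_{n,m}\) is an isometry, \(\bkt{\tilde{\iota}_{n,m}g}{\fun{\msrprb_{\physham[h]_m}}{E}\tilde{\iota}_{n,m}g}=\bkt{g}{\fun{\msrprb_{\physham[h]_n}}{E}g}\). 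Hence the spectral measures of \(g\) and \(\tilde{\iota}_{n,m}g\) coincide, so \(g\in\dom\fun{\varphi}{\physham[h]_n}\) if and only if \(\tilde{\iota}_{n,m}g\in\dom\fun{\varphi}{\physham[h]_m}\). On bounded \(\varphi\), the intertwining follows from the spectral-projection relation by approximation with simple functions. For unbounded \(\varphi\), I would apply the bounded case to the truncations \(\varphi\fndef{\setone{\abs{\varphi}\le N}}\) and then use closedness.

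Next I would pass to the time circle. Because \(\iota_{\sminvtemperature,n,m}\) acts pointwise in \(t\) by a complex-linear map, it commutes with the Fourier coefficients: \(\rbk{\iota_{\sminvtemperature,n,m}F}_k=\tilde{\iota}_{n,m}F_k\). Lemma \ref{lem:covariance-fourier} describes \(C_n\) and \(C_n^{1/2}\) as the direct sums of \(\rbk{\nu_k^2+\physham[h]_n^2}^{-1}\) and \(\rbk{\nu_k^2+\physham[h]_n^2}^{-1/2}\). Applying the first step to these functions then gives \(\iota_{\sminvtemperature,n,m}\dom C_n\subset\dom C_m\) and \(C_m^{1/2}\iota_{\sminvtemperature,n,m}=\iota_{\sminvtemperature,n,m}C_n^{1/2}\) on \(\dom C_n\). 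The domain statement also uses that square-summability of the coefficients is preserved, which holds because \(\tilde{\iota}_{n,m}\) is an isometry. Since \(\iota_{\sminvtemperature,n,m}\) is an isometry of the real \(\lp^2\) spaces, the covariance inner products are preserved. The restriction therefore extends to an isometry \(\overline{\iota}_{\sminvtemperature,n,m}\), and the first identity of \eqref{eq:covariance-embeddings} holds on \(\dom C_n\) and then by continuity.

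The remaining identities follow from this. For the sharp-time vectors, \(g\in\dom\physham[h]_n^{-1}\) gives \(\tilde{\iota}_{n,m}g\in\dom\physham[h]_m^{-1}\) by the first step. The coefficients \eqref{eq:sharp-time-vector} of \(V_m\overline{\iota}_{\sminvtemperature,n,m}j_{n,t}g=\iota_{\sminvtemperature,n,m}V_nj_{n,t}g\) are \(\sminvtemperature^{-1/2}\napiernum^{-\imunit\nu_kt}\tilde{\iota}_{n,m}\rbk{\nu_k^2+\physham[h]_n^2}^{-1/2}g\). By the intertwining these equal the coefficients of \(V_mj_{m,t}\tilde{\iota}_{n,m}g\), and injectivity of \(V_m\) gives the identity. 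The translations and the reflection act only on the time variable, so they commute with \(\iota_{\sminvtemperature,n,m}\) on \(\dom C_n\), and hence with \(\overline{\iota}_{\sminvtemperature,n,m}\) by density and continuity. The relations \(\overline{\iota}_{\sminvtemperature,n,n}=1\) and \(\overline{\iota}_{\sminvtemperature,m,l}\overline{\iota}_{\sminvtemperature,n,m}=\overline{\iota}_{\sminvtemperature,n,l}\) hold on \(\dom C_n\) by the first two relations of \eqref{eq:directed-system}, and extend by continuity.

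The main obstacle is the domain bookkeeping in the first step for unbounded functions such as \(\physham[h]^{-1}\) and \(\rbk{\nu_0^2+\physham[h]^2}^{-1}=\physham[h]^{-2}\). I would handle it through the equality of spectral measures, which yields the domain equivalence directly; everything else is formal.
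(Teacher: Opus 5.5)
Your proposal is correct and follows essentially the same route as the paper. The paper's ``comparing the matrix elements of the functional calculus'' is your equality of the spectral measures of $g$ and $\tilde{\iota}_{n,m}g$. From there both arguments pass to the Fourier coefficients via $(\iota_{\beta,n,m}F)_k=\tilde{\iota}_{n,m}F_k$, get the domain inclusion and $C_m^{1/2}\iota_{\beta,n,m}=\iota_{\beta,n,m}C_n^{1/2}$, deduce the sharp-time identity from the injectivity of $V_m$, and extend the remaining relations from $\dom C_n$ by continuity. The only cosmetic difference is that the paper first intertwines $C^{1/2}$ on $\dom C_n^{1/2}$ and then uses $C=C^{1/2}C^{1/2}$, whereas you apply the intertwining directly to $(\nu_k^2+h_n^2)^{-1}$ and check square-summability.
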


\begin{proof}
Comparing the matrix elements of the functional calculus, the third line of \eqref{eq:directed-system} gives
$\tilde{\iota}_{n,m}\dom\fun{\varphi}{\physham[h]_n}\subset\dom\fun{\varphi}{\physham[h]_m}$ and
$\fun{\varphi}{\physham[h]_m}\tilde{\iota}_{n,m}=\tilde{\iota}_{n,m}\fun{\varphi}{\physham[h]_n}$ there, for every real Borel function
$\varphi$ on $\openinterval{0}{\infty}$.
Since $\rbk{\iota_{\sminvtemperature,n,m}F}_k=\tilde{\iota}_{n,m}F_k$, $k\in\ringratint$, Lemma \ref{lem:covariance-fourier} gives
$\iota_{\sminvtemperature,n,m}\dom C_n^{\frac{1}{2}}\subset\dom C_m^{\frac{1}{2}}$ and
$C_m^{\frac{1}{2}}\iota_{\sminvtemperature,n,m}=\iota_{\sminvtemperature,n,m}C_n^{\frac{1}{2}}$ on $\dom C_n^{\frac{1}{2}}$, and hence the
same for $C=C^{\frac{1}{2}}C^{\frac{1}{2}}$.
Thus $\iota_{\sminvtemperature,n,m}$ is isometric from $\dom C_n$ with the covariance norm into $\sphilbfrak{H}_{\sminvtemperature,C_m}$, and
$V_m\iota_{\sminvtemperature,n,m}=C_m^{\frac{1}{2}}\iota_{\sminvtemperature,n,m}=\iota_{\sminvtemperature,n,m}V_n$ on $\dom C_n$ extends by
continuity.
By \eqref{eq:sharp-time-vector} and the intertwining of the functional calculus,
$\rbk{V_mj_{m,t}\tilde{\iota}_{n,m}g}_k=\tilde{\iota}_{n,m}\rbk{V_nj_{n,t}g}_k$, so
$V_mj_{m,t}\tilde{\iota}_{n,m}g=\iota_{\sminvtemperature,n,m}V_nj_{n,t}g=V_m\overline{\iota}_{\sminvtemperature,n,m}j_{n,t}g$, and $V_m$ is
injective.
The remaining relations hold on $\dom C_n$, because $\iota_{\sminvtemperature,n,m}$ acts pointwise in time, and extend by continuity.
\end{proof}

\begin{defn}[Compatible regularizers]\label{def:compatible-regularizers}
Invariant regularizers $\opform{R}_{\sminvtemperature,C_n}$, $n\in\monnat$, in the sense of Definition \ref{def:invariant-regularizer} are
compatible if $\overline{\iota}_{\sminvtemperature,n,m}\dom\opform{R}_{\sminvtemperature,C_n}\subset\dom\opform{R}_{\sminvtemperature,C_m}$ and
$\opform{R}_{\sminvtemperature,C_m}\overline{\iota}_{\sminvtemperature,n,m}=\overline{\iota}_{\sminvtemperature,n,m}\opform{R}_{\sminvtemperature,C_n}$ on
$\dom\opform{R}_{\sminvtemperature,C_n}$ for $n\leq m$.
For them, the objects of Section \ref{sec:markov} for $\opform{R}_{\sminvtemperature,C_n}$ carry the subscript $n$:
$\prbqspace_{\pm,n}$, $\tilde{u}_{n,\tau}$, $\tilde{r}_n$, $\msrprb_{\sminvtemperature,0,n}$, $\opfocksegal_n$, $\opfocksegal_{n,t}$,
$\mblfmlfrak{S}_{n,I}$, $U_{n,\tau}$, and $R_n$.
\end{defn}

\begin{thm}[Quasi-local construction]\label{thm:quasi-local}
Let the regularizers be compatible.
In (1) and (2), let $n\leq m\leq l$.
\begin{enumerate}
\item The adjoint $\faadj{\overline{\iota}_{\sminvtemperature,n,m}}$ extends uniquely to a continuous linear map
$\pi_{\sminvtemperature,m,n}\colon\prbqspace_{-,m}\to\prbqspace_{-,n}$, and, for $q\in\prbqspace_{-,m}$, $f\in\prbqspace_{+,n}$, and
$\tau\in\fldreal$,
\begin{equation}\label{eq:quasi-local-maps}
\begin{aligned}
\bkt{\pi_{\sminvtemperature,m,n}q}{f}_{\prbqspace_{-,n},\prbqspace_{+,n}}
&=
\bkt{q}{\overline{\iota}_{\sminvtemperature,n,m}f}_{\prbqspace_{-,m},\prbqspace_{+,m}},
\\ 
\pi_{\sminvtemperature,n,n}
&=
1,
\quad
\pi_{\sminvtemperature,l,n}
=
\pi_{\sminvtemperature,m,n}\circ\pi_{\sminvtemperature,l,m},
\\ 
\pi_{\sminvtemperature,m,n}\circ\tilde{u}_{m,\tau}
&=
\tilde{u}_{n,\tau}\circ\pi_{\sminvtemperature,m,n},
\quad
\pi_{\sminvtemperature,m,n}\circ\tilde{r}_m
=
\tilde{r}_n\circ\pi_{\sminvtemperature,m,n}.
\end{aligned}
\end{equation}
\item The measures, the fields, the $\sigma$-algebras, and the automorphisms are consistent: for
$v\in\sphilbfrak{H}_{\sminvtemperature,C_n}$, $t\in S_{\sminvtemperature}$, $g\in\mathcal{T}_{\fldreal,n}$, $I\subset S_{\sminvtemperature}$,
$F\in\fun{\lp^2}{\prbqspace_{-,n},\msrprb_{\sminvtemperature,0,n}}$, and $\tau\in\fldreal$,
\begin{equation}\label{eq:quasi-local-consistency}
\begin{aligned}
\pushoutrbk{\pi_{\sminvtemperature,m,n}}\msrprb_{\sminvtemperature,0,m}
&=
\msrprb_{\sminvtemperature,0,n},
\\ 
\fun{\opfocksegal_n}{v}\circ\pi_{\sminvtemperature,m,n}
&=
\fun{\opfocksegal_m}{\overline{\iota}_{\sminvtemperature,n,m}v},
\\ 
\fun{\opfocksegal_{n,t}}{g}\circ\pi_{\sminvtemperature,m,n}
&=
\fun{\opfocksegal_{m,t}}{\tilde{\iota}_{n,m}g},
\\ 
\set{\fun{\inv{\pi_{\sminvtemperature,m,n}}}{A}}{A\in\mblfmlfrak{S}_{n,I}}
&\subset
\mblfmlfrak{S}_{m,I},
\\ 
U_{m,\tau}\rbk{F\circ\pi_{\sminvtemperature,m,n}}
&=
\rbk{U_{n,\tau}F}\circ\pi_{\sminvtemperature,m,n},
\\ 
R_m\rbk{F\circ\pi_{\sminvtemperature,m,n}}
&=
\rbk{R_nF}\circ\pi_{\sminvtemperature,m,n}.
\end{aligned}
\end{equation}
\item Let $\prbqspace_{-,\infty}$ be the set of $\seq{q_n}{n\in\monnat}\in\prod_n\prbqspace_{-,n}$ with
$\pi_{\sminvtemperature,m,n}q_m=q_n$ for $n\leq m$, with the coordinate maps $p_n$ and the $\sigma$-algebra generated by them.
There is a unique probability measure $\msrprb_{\sminvtemperature,0,\infty}$ on $\prbqspace_{-,\infty}$ with
$\pushoutrbk{p_n}\msrprb_{\sminvtemperature,0,\infty}=\msrprb_{\sminvtemperature,0,n}$ for every $n$, and the coordinatewise maps
$\tilde{u}_{\infty,\tau}=\prod_n\tilde{u}_{n,\tau}$ and $\tilde{r}_\infty=\prod_n\tilde{r}_n$ map $\prbqspace_{-,\infty}$ into itself
and preserve $\msrprb_{\sminvtemperature,0,\infty}$.
\end{enumerate}
The algebras $\set{F\circ p_n}{F\in\fun{\lp^\infty}{\prbqspace_{-,n},\msrprb_{\sminvtemperature,0,n}}}$ increase with $n$; we call their
union the algebra of local observables and its norm closure in
$\fun{\lp^\infty}{\prbqspace_{-,\infty},\msrprb_{\sminvtemperature,0,\infty}}$ the quasi-local algebra, which is invariant under
$\tilde{u}_{\infty,\tau}$ and $\tilde{r}_\infty$.
\end{thm}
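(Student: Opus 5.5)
The plan is to work entirely at the level of the Hilbert scales and characteristic functionals, using Lemma \ref{lem:covariance-embeddings} and the compatibility of the regularizers.

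\emph{Step (1).} Compatibility gives $\opform{R}_{\sminvtemperature,C_m}\overline{\iota}_{\sminvtemperature,n,m}=\overline{\iota}_{\sminvtemperature,n,m}\opform{R}_{\sminvtemperature,C_n}$ on $\dom\opform{R}_{\sminvtemperature,C_n}$. Since $\overline{\iota}_{\sminvtemperature,n,m}$ is an isometry, its range reduces $\opform{R}_{\sminvtemperature,C_m}$, and therefore the projection $\overline{\iota}\,\faadj{\overline{\iota}}$ commutes with $\opform{R}_{\sminvtemperature,C_m}$. Taking adjoints, this yields $\faadj{\overline{\iota}}\opform{R}_{\sminvtemperature,C_m}^{-1}=\opform{R}_{\sminvtemperature,C_n}^{-1}\faadj{\overline{\iota}}$. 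Consequently
\[
\norm{\opform{R}_{\sminvtemperature,C_n}^{-1}\faadj{\overline{\iota}}v}\leq\norm{\opform{R}_{\sminvtemperature,C_m}^{-1}v},
\]
so $\faadj{\overline{\iota}}$ extends to a contraction $\pi_{\sminvtemperature,m,n}\colon\prbqspace_{-,m}\to\prbqspace_{-,n}$. The pairing identity holds for $q\in\sphilbfrak{H}_{\sminvtemperature,C_m}$ by the definition of the adjoint, and it extends by continuity in $q$. The composition rules follow from $\overline{\iota}_{m,l}\overline{\iota}_{n,m}=\overline{\iota}_{n,l}$ after taking adjoints and using density. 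The equivariance follows from $\overline{\iota}u_\tau=u_\tau\overline{\iota}$ and $\overline{\iota}r=r\overline{\iota}$: since $u_\tau$ is orthogonal, $\faadj{\overline{\iota}}u_\tau=u_\tau\faadj{\overline{\iota}}$, and this identity extends to the tilde operators.

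\emph{Step (2).} For $f\in\prbqspace_{+,n}$, the pairing identity reads $\fun{\opfocksegal_n}{f}\circ\pi=\fun{\opfocksegal_m}{\overline{\iota}f}$ pointwise. Hence the characteristic functional of $\pushoutrbk{\pi}\msrprb_{\sminvtemperature,0,m}$ at $f$ is
\[
\fnexp{-\tfrac14\norm{\overline{\iota}f}^2}=\fnexp{-\tfrac14\norm{f}^2},
\]
and the uniqueness part of Proposition \ref{prop:gaussian-measure} gives equality of the measures. The field identity then extends from $\prbqspace_{+,n}$ to all $v\in\sphilbfrak{H}_{\sminvtemperature,C_n}$ in $\lp^2$. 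Composition with $\pi$ is an isometry $\lp^2(\msrprb_n)\to\lp^2(\msrprb_m)$ by the equality of measures, and $\overline{\iota}$ is isometric. The sharp-time identity follows from $\overline{\iota}j_{n,t}g=j_{m,t}\tilde{\iota}g$. The $\sigma$-algebra inclusion holds because the generators $\fun{\opfocksegal_{n,t}}{g}$ pull back to $\fun{\opfocksegal_{m,t}}{\tilde{\iota}g}$ with $\tilde{\iota}g\in\mathcal{T}_{\fldreal,m}$. Null sets pull back to null sets, again by the equality of measures. The automorphism relations follow from the equivariance in (1).

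\emph{Step (3).} This is the main obstacle: the existence of the projective-limit measure. I would invoke the Kolmogorov--Bochner (Prokhorov) projective limit theorem for a sequence of Polish spaces $\prbqspace_{-,n}$ with continuous bonding maps. The measures are Radon and consistent by (2), and the index set is countable. In this situation the limit exists on the inverse-limit set $\prbqspace_{-,\infty}\subset\prod_n\prbqspace_{-,n}$ with its coordinate $\sigma$-algebra. The cleanest route is the standard construction: define cylinder measures on $\prod_{k\leq n}\prbqspace_{-,k}$ as images of $\msrprb_{\sminvtemperature,0,n}$ under $q\mapsto(\pi_{n,k}q)_{k\leq n}$. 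These are consistent, so Kolmogorov's theorem for Polish spaces gives a measure on the full product. That measure is concentrated on the closed set $\prbqspace_{-,\infty}$, since each constraint $\pi_{m,n}q_m=q_n$ holds almost surely. Uniqueness holds because the sets $p_n^{-1}(A)$ form a $\pi$-system generating the $\sigma$-algebra; they are increasing, since $p_n=\pi_{m,n}\circ p_m$.

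The coordinatewise maps preserve $\prbqspace_{-,\infty}$ by the equivariance in (1). Their image measures have the same marginals by the invariance in Proposition \ref{prop:gaussian-measure}, so uniqueness gives invariance. Finally, $F\circ p_n=(F\circ\pi_{m,n})\circ p_m$ shows that the local algebras increase with $n$. Invariance of the quasi-local algebra follows from $(F\circ p_n)\circ\tilde{u}_{\infty,\tau}=(F\circ\tilde{u}_{n,\tau})\circ p_n$, the analogous identity for $\tilde{r}_\infty$, and the fact that these maps are isometric in the $\lp^\infty$ norm.
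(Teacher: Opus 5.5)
Your proposal is correct and follows essentially the same route as the paper in all three parts: the adjoint bound from the intertwined inverse regularizers, characteristic functionals with the uniqueness of Proposition \ref{prop:gaussian-measure}, and Kolmogorov's theorem on the product of Polish spaces followed by restriction to the closed set $\prbqspace_{-,\infty}$, with uniqueness from the $\pi$-system of sets $p_n^{-1}(A)$. The only difference is in step (1), where the reducing-subspace detour is unnecessary: the paper obtains $\mathsf{R}_{\beta,C_m}^{-1}\overline{\iota}_{\beta,n,m}=\overline{\iota}_{\beta,n,m}\mathsf{R}_{\beta,C_n}^{-1}$ from compatibility and takes adjoints directly.
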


\begin{proof}
(1): Compatibility gives the first of the following identities, and taking adjoints gives the second:
$$\begin{aligned}
\opform{R}_{\sminvtemperature,C_m}^{-1}\overline{\iota}_{\sminvtemperature,n,m}
&=
\overline{\iota}_{\sminvtemperature,n,m}\opform{R}_{\sminvtemperature,C_n}^{-1},
\\ 
\opform{R}_{\sminvtemperature,C_n}^{-1}\faadj{\overline{\iota}_{\sminvtemperature,n,m}}
&=
\faadj{\overline{\iota}_{\sminvtemperature,n,m}}\opform{R}_{\sminvtemperature,C_m}^{-1}.
\end{aligned}$$
Hence $\norm{\faadj{\overline{\iota}_{\sminvtemperature,n,m}}v}_{\prbqspace_{-,n}}\leq\norm{v}_{\prbqspace_{-,m}}$ for
$v\in\sphilbfrak{H}_{\sminvtemperature,C_m}$, and $\faadj{\overline{\iota}_{\sminvtemperature,n,m}}$ extends by density.
Since $\overline{\iota}_{\sminvtemperature,n,m}f\in\prbqspace_{+,m}$, the pairing identity holds on $\sphilbfrak{H}_{\sminvtemperature,C_m}$ by
the definition of the adjoint and extends by continuity.
The other relations follow from Lemma \ref{lem:covariance-embeddings} by taking adjoints, with $\faadj{u_\tau}=u_{-\tau}$ and
$\faadj{r}=r$, and by continuity, since $\tilde{u}_{n,\tau}$ and $\tilde{r}_n$ are the continuous extensions of $u_\tau$ and $r$.

(2): By \eqref{eq:quasi-local-maps},
$\fun{\opfocksegal_n}{f}\circ\pi_{\sminvtemperature,m,n}=\fun{\opfocksegal_m}{\overline{\iota}_{\sminvtemperature,n,m}f}$ everywhere for
$f\in\prbqspace_{+,n}$.
With the isometry of $\overline{\iota}_{\sminvtemperature,n,m}$, the image measure
$\pushoutrbk{\pi_{\sminvtemperature,m,n}}\msrprb_{\sminvtemperature,0,m}$ therefore has the characteristic functional
\eqref{eq:characteristic-functional} of $\msrprb_{\sminvtemperature,0,n}$, and the uniqueness in Proposition \ref{prop:gaussian-measure} gives
the first line.
Composition with $\pi_{\sminvtemperature,m,n}$ is then isometric on $\lp^2$, so the field identity extends from $\prbqspace_{+,n}$ to
$\sphilbfrak{H}_{\sminvtemperature,C_n}$, and the second line of \eqref{eq:covariance-embeddings} gives the sharp-time identity, with
$\tilde{\iota}_{n,m}g\in\mathcal{T}_{\fldreal,m}$ by \eqref{eq:directed-system}.
The inclusion of the $\sigma$-algebras follows from the sharp-time identity for the generators and from the first line for the null sets, and the
last two lines follow from \eqref{eq:path-space-automorphisms} and \eqref{eq:quasi-local-maps}.

(3): The space $\prbqspace_{-,n}$ is a separable Hilbert space, hence Polish.
For $n_1<\cdots<n_N$, the image of $\msrprb_{\sminvtemperature,0,n_N}$ under $q\mapsto\seq{\pi_{\sminvtemperature,n_N,n_i}q}{i}$ defines
consistent finite-dimensional distributions by \eqref{eq:quasi-local-maps} and \eqref{eq:quasi-local-consistency}, so the Kolmogorov
theorem \cite[Vol.~II, Corollary 7.7.2, p.~96]{VladimirBogachev001} gives a probability measure on $\prod_n\prbqspace_{-,n}$ with these
distributions.
Each closed set $\set{\seq{q_l}{l}}{\pi_{\sminvtemperature,m,n}q_m=q_n}$ has measure one, so the measure is carried by
$\prbqspace_{-,\infty}$; it is unique because, on $\prbqspace_{-,\infty}$, every cylinder set is of the form $p_{n_N}^{-1}\rbk{B}$ with $B$
Borel, so the cylinder sets form a $\pi$-system generating the $\sigma$-algebra.
By \eqref{eq:quasi-local-maps}, $\tilde{u}_{\infty,\tau}$ and $\tilde{r}_\infty$ preserve $\prbqspace_{-,\infty}$ and satisfy
$p_n\circ\tilde{u}_{\infty,\tau}=\tilde{u}_{n,\tau}\circ p_n$ and $p_n\circ\tilde{r}_\infty=\tilde{r}_n\circ p_n$; by
Proposition \ref{prop:gaussian-measure} the image measures have the same marginals, so they equal
$\msrprb_{\sminvtemperature,0,\infty}$, and the local algebras, hence the quasi-local algebra, are invariant.
The local algebras increase because $F\circ p_n=\rbk{F\circ\pi_{\sminvtemperature,m,n}}\circ p_m$ for $n\leq m$.
\end{proof}

\begin{ex}[Nested momentum lattices]\label{ex:momentum-lattices}
Let $d\in\semigrposint$, $s>0$, $\smchemicalpotential<0$, and $a>0$, let $\seq{M_n}{n\in\monnat}$ be a strictly increasing sequence of
positive integers with $M_n\mid M_{n+1}$, and set $L_n=M_na$.
Let $\sphilbfrak{h}_n=\fun{\lpseq^2}{L_n^{-1}\ringratint^d;\fldcmp}$, let $\physham[h]_n$ be the multiplication by
$\abs{2\pi k}^s-\smchemicalpotential$, let $\tilde{\iota}_{n,m}$ be the extension by zero, and let $\mathcal{T}_{\fldreal,n}$ consist of the
finitely supported sequences.
For the free Bose gas on the cube of side $L_n$ with periodic boundary conditions, $\physham[h]_n$ is the one-particle Hamiltonian at the
chemical potential $\smchemicalpotential$ in the Fourier coefficients.
The spectral projections of $\physham[h]_n$ are multiplications by $\fun{\fndef{E}}{\abs{2\pi k}^s-\smchemicalpotential}$, which commute with
the extension by zero, and the operators $\fun{\eta_{i,\sigma}}{\physham[h]_n}$ preserve the finitely supported sequences, so
Definition \ref{def:directed-system} applies.
Let $\opform{R}_{\sminvtemperature,C_n}$ be diagonal in the vectors $V_n^{-1}\rbk{c_j\otimes\psi}$, $j\in\monnat$, and
$V_n^{-1}\rbk{s_j\otimes\psi}$, $j\in\semigrposint$, with $\psi\in\setone{\kroneckerdelta_k,\imunit\kroneckerdelta_k}$, with the eigenvalues
$\rbk{1+j}\rbk{1+\abs{2\pi k}^2}^d$.
These blocks are those of Lemma \ref{lem:regularizer-existence}, and
$\sum_k\rbk{1+\abs{2\pi k}^2}^{-2d}<\infty$, so $\opform{R}_{\sminvtemperature,C_n}$ is an invariant regularizer.
The first line of \eqref{eq:covariance-embeddings} gives
$\overline{\iota}_{\sminvtemperature,n,m}V_n^{-1}\rbk{c_j\otimes\psi}=V_m^{-1}\rbk{c_j\otimes\psi}$, and likewise for $s_j\otimes\psi$, so
these regularizers are compatible, and Theorem \ref{thm:quasi-local} applies.
Since $\physham[h]_n\geq-\smchemicalpotential>0$, each $\physham[h]_n$ has a spectral gap, so this example lies within the setting of
\cite[Section 21.4]{DerezinskiGerard001}; Theorem \ref{thm:quasi-local} itself does not use a gap.
The embeddings extend momentum sequences by zero; they are not restrictions to nested spatial regions, and the limit
$\msrprb_{\sminvtemperature,0,\infty}$ is not an infinite-volume field.
\end{ex}

\bibliography{myref.bib}

@article{
    GerardJakel001,
    author = {Christian Gérard and Christian D. Jäkel},
    title = {{Thermal quantum fields with spatially cutoff interactions in 1+1 space--time dimensions}},
    journal = "Journal of Functional Analysis",
    volume = "220",
    number = "1",
    pages = "157--213",
    year = "2005",
    doi = "10.1016/j.jfa.2004.08.003",
    arxiv = "math-ph/0307053",
    url = "https://doi.org/10.1016/j.jfa.2004.08.003",
}

@article{
    ArakiWoods001,
    author = {H. Araki and E. J. Woods},
    journal = "J. Math. Phys.",
    pages = "637-662",
    tag = {mathematical physics, mathematics, physics, quantum statistical physics},
    title = {{Representations of the canonical commutation relations describing a nonrelativistic infinite free Bose gas}},
    volume = "4",
    year = "1963",
}

@article{
    BuchholzGrundling002,
    author = {Detlev Buchholz and Hendrik Grundling},
    journal = "J. Funct. Anal.",
    month = "June",
    pages = "2725-2779",
    tag = {mathmatical physics, operator algebra},
    title = {{The Resolvent Algebra: A New Approach to Canonical Quantum Systems}},
    volume = "254",
    year = "2008",
}

@book{
    DerezinskiGerard001,
    author = {Jan Dereziński and Christian Gérard},
    pages = "688",
    publisher = "Cambridge University Press",
    tag = {mathematical physics},
    title = {{Mathematics of Quantization and Quantum Fields}},
    year = "2022",
}

@article{
    DerezinskiJaksic002,
    author = {J. Derezi\'{n}ski and V. Jak\v{s}i\'{c}},
    journal = "Ann. Henri Poincar\'{e}",
    pages = "739-793",
    tag = {mathematical physics, mathematics, operator algebra, spectral theory},
    title = {{Return to Equilibrium for Pauli-Fierz Systems}},
    volume = "4",
    year = "2003",
}

@book{
    GeraldFolland001,
    author = {Gerald B. Folland},
    month = "4",
    pages = "416",
    publisher = "Wiley",
    tag = {mathematics, analysis},
    title = {{Real Analysis Modern Techniques and Their Applications}},
    year = "1999",
}

@article{
    KleinLandau001,
    author = {Abel Klein and Lawrence J. Landau},
    journal = "J. Funct. Anal.",
    month = "6",
    pages = "368-428",
    tag = {mathematics, mathematical physics, operator algebra, probability},
    title = {{Stochastic Processes Associated with KMS States}},
    volume = "42",
    year = "1981",
}

@book{
    LorincziHiroshimaBetz003,
    author = {J. Lörinczi and F. Hiroshima},
    month = "3",
    pages = "539",
    publisher = "Walter De Gruyter",
    tag = {mathematical physics, probability, operator theory, quantum mechanics, statistical mechanics, quantum field theory, path integral, functional integral},
    title = {{Feynman-Kac-Type Theorems and Gibbs Measures on Path Space: Applications in Rigorous Quantum Field Theory (2)}},
    volume = "2",
    year = "2020",
}

@article{
    VIYukalov001,
    author = {V. I. Yukalov},
    journal = "Laser Phys.",
    month = "June",
    pages = "1145-1168",
    tag = {laser physics},
    title = {{Difference in Bose-Einstein condensation of conserved and unconserved particles}},
    url = "https://link.springer.com/article/10.1134/S1054660X12070171",
    volume = "22",
    year = "2012",
}

@book{
    VladimirBogachev001,
    author = {Vladimir I. Bogachev},
    address = "Berlin",
    doi = "10.1007/978-3-540-34514-5",
    month = "11",
    pages = "1075",
    publisher = "Springer",
    tag = {mathematics, measure theory, integration, probability},
    title = {{Measure Theory}},
    url = "https://doi.org/10.1007/978-3-540-34514-5",
    year = "2006",
}

@book{
    VladimirBogachev002,
    author = {V. I. Bogachev},
    address = "Providence, RI",
    doi = "10.1090/surv/062",
    pages = "xii+433",
    publisher = "American Mathematical Society",
    series = "Mathematical Surveys and Monographs",
    tag = {probability, Gaussian measures},
    title = {{Gaussian measures}},
    url = "https://doi.org/10.1090/surv/062",
    volume = "62",
    year = "1998",
}

\end{document}